\documentclass[a4paper,aps,prx,twocolumn,superscriptaddress,floatfix,citeautoscript,preprintnumbers,nobalancelastpage,allowtoday,accepted=2026-03-17]{quantumarticle}
\pdfoutput=1
\usepackage[utf8]{inputenc}

\newcommand{\marta}[1]{#1}

\usepackage{graphicx}
\usepackage{dcolumn}
\usepackage{bm}

\usepackage{dsfont} 
\usepackage{physics} 
\usepackage{blindtext}
\usepackage{amsfonts, amsthm, microtype, mathrsfs, bbm}
\usepackage[colorlinks,allcolors=blue]{hyperref}
\usepackage{cleveref}
\usepackage{mathtools, enumerate}
\usepackage{thmtools, thm-restate}
\usepackage[dvipsnames,usenames]{xcolor}

\usepackage[ruled]{algorithm2e}

\usepackage{csquotes} 

\newtheorem{definition}{Definition}
\newtheorem*{definitionnon}{Definition}
\newtheorem{theorem}{Theorem}

\newtheorem{corollary}[theorem]{Corollary}
\theoremstyle{definition}

\usepackage{tikz}
\usepackage{venndiagram}
\usetikzlibrary{decorations.pathreplacing, calligraphy, decorations.markings}
\usetikzlibrary{arrows, automata, positioning}

\definecolor{purple}{HTML}{E5E3F5}
\definecolor{amaranth}{HTML}{F5CFD8}
\definecolor{yellow}{HTML}{FFEAB6}
\definecolor{pink}{HTML}{FFDBDA}

\definecolor{darkpurple}{HTML}{3D348B}
\definecolor{darkamaranth}{HTML}{FF1A53}
\definecolor{amaranth2}{HTML}{da3e52}
\definecolor{darkyellow}{HTML}{FFBA08}
\definecolor{darkpink}{HTML}{FFDBDA}

\newcommand{\MPSTensor}[3]{
	\begin{scope}[shift={(#1)}]
		\draw (-1,0) -- (1,0);
		\draw (0,1) -- (0,0);
		\filldraw[fill=#3] (-1/2,-1/2) -- (-1/2,1/2) -- (1/2,1/2) -- (1/2,-1/2) -- (-1/2,-1/2);
		\draw (0,0) node {\scriptsize #2};
	\end{scope}
}

\newcommand{\FullMPS}[3]{
	\begin{scope}[shift={(#1)}]
		\draw[shift={(0,0)},dotted] (0,0) -- (4.5,0);
        \MPSTensor{0,0}{#2}{#3}
        \MPSTensor{1.5,0}{#2}{#3}
        \MPSTensor{5,0}{#2}{#3}
	\end{scope}
}

\newcommand{\MatrixHorizontalIndices}[4]{
	\begin{scope}[shift={(#1)}]
		\draw (-1-#4/2,0) -- (1+#4/2,0);
		\filldraw[fill=#3] (-1/2-#4/2,-1/2) -- (-1/2-#4/2,1/2) -- (1/2+#4/2,1/2) -- (1/2+#4/2,-1/2) -- (-1/2-#4/2,-1/2);
		\draw (0,0) node {\scriptsize #2};
	\end{scope}
}

\newcommand{\pepstensor}[8]{
	\begin{scope}[shift={(#1)}]
		\draw (-2,0) -- (2,0);
        \draw (0,-2) -- (0,2);
		\filldraw[fill=#3] (-1,-1) -- (-1,1) -- (1,1) -- (1,-1) -- (-1,-1);
		\draw (0,0) node {\footnotesize #2};
        \draw[line width=1.25] (0,0) -- (1.5,1.5);
        \node at (0,2.5) {\scriptsize #4};
        \node at (2.5,0) {\scriptsize #5};
        \node at (0,-2.5) {\scriptsize #6};
        \node at (-2.5,0) {\scriptsize #7};
        \node at (2,2) {\scriptsize #8};
	\end{scope}
}

\newcommand{\pepstensorabajo}[8]{
	\begin{scope}[shift={(#1)}]
		\draw (-2,0) -- (2,0);
        \draw (0,0) -- (0,2);
		\filldraw[fill=#3] (-1,-1) -- (-1,1) -- (1,1) -- (1,-1) -- (-1,-1);
		\draw (0,0) node {\footnotesize #2};
        \draw[line width=1.25] (0,0) -- (1.5,1.5);
        \node at (0,2.5) {\scriptsize #4};
        \node at (2.5,0) {\scriptsize #5};
        \node at (0,-2.5) {\scriptsize #6};
        \node at (-2.5,0) {\scriptsize #7};
        \node at (2,2) {\scriptsize #8};
	\end{scope}
}

\newcommand{\pepstensorderecha}[8]{
	\begin{scope}[shift={(#1)}]
		\draw (-2,0) -- (0,0);
        \draw (0,0) -- (0,2);
		\filldraw[fill=#3] (-1,-1) -- (-1,1) -- (1,1) -- (1,-1) -- (-1,-1);
		\draw (0,0) node {\footnotesize #2};
        \draw[line width=1.25] (0,0) -- (1.5,1.5);
        \node at (0,2.5) {\scriptsize #4};
        \node at (2.5,0) {\scriptsize #5};
        \node at (0,-2.5) {\scriptsize #6};
        \node at (-2.5,0) {\scriptsize #7};
        \node at (2,2) {\scriptsize #8};
	\end{scope}
}

\newcommand{\fullpepstensor}[1]{
    \begin{scope}[shift={(#1)}]
        \pepstensorabajo{0,0}{}{purple}{}{}{}{}{}
        \pepstensorabajo{3,0}{}{purple}{}{}{}{}{}
        \pepstensor{9,0}{}{purple}{}{}{}{}{}
        \draw[dotted] (5,0) -- (7,0);

        \pepstensor{0,6}{}{purple}{}{}{}{}{}
        \pepstensor{3,6}{}{purple}{}{}{}{}{}
        \pepstensorderecha{9,6}{}{purple}{}{}{}{}{}
        \draw[dotted] (5,6) -- (7,6);

        \pepstensor{0,9}{}{purple}{}{}{}{}{}
        \pepstensor{3,9}{}{purple}{}{}{}{}{}
        \pepstensorderecha{9,9}{}{purple}{}{}{}{}{}
        \draw[dotted] (5,9) -- (7,9);

        \draw[fill=amaranth] (0,11.75) circle (0.75);
        \draw[fill=amaranth] (3,11.75) circle (0.75);
        \draw[fill=amaranth] (9,11.75) circle (0.75);

        \draw[fill=amaranth] (-2.75,0) circle (0.75);
        \draw[fill=amaranth] (-2.75,6) circle (0.75);
        \draw[fill=amaranth] (-2.75,9) circle (0.75);

        \draw[fill=yellow] (11.75,0) circle (0.75);

        \draw[fill=yellow] (9,-2.75) circle (0.75);

        \draw[dotted] (0,2) -- (0,4)
        (3,2) -- (3,4)
        (9,2) -- (9,4);
	\end{scope}
}

\newcommand{\boundarypeps}[2]{
    \begin{scope}[shift={(#1)}]
        \draw[fill=#2] (0,0) circle (0.75);
        \draw (-0.75,0) -- (-2,0);
    \end{scope}
}


\newcommand{\FullMPSX}[5]{
	\begin{scope}[shift={(#1)}]
		\draw[shift={(0,0)},dotted] (0,0) -- (4.5,0);
        \MPSTensor{0,0}{#2}{#4}
        \MPSTensor{1.5,0}{#2}{#4}
        \MPSTensor{4.5,0}{#2}{#4}
        \MatrixHorizontalIndices{-1.5,0}{#3}{#5}{0}
        \draw (-2.5,0) -- (-2.5,-0.8) -- (5.5,-0.8) -- (5.5,0);
	\end{scope}
}

\begin{document}

\title{Regular language quantum states}

\author{Marta Florido-Llin\`as}
\email{marta.florido.llinas@mpq.mpg.de}
\affiliation{
Max-Planck-Institut f{\"{u}}r Quantenoptik, Hans-Kopfermann-Str. 1, 85748 Garching, Germany
}
\affiliation{
Munich Center for Quantum Science and Technology (MCQST), Schellingstr. 4, 80799 M{\"{u}}nchen, Germany
}

\author{\'Alvaro M. Alhambra}
\affiliation{
Instituto de F\'isica Te\'orica UAM/CSIC, C/ Nicol\'as Cabrera 13-15, Cantoblanco, 28049 Madrid, Spain}

\author{David P\'erez-Garc\'ia}
\affiliation{Departamento de An\'alisis Matem\'atico, Universidad Complutense de Madrid, 28040 Madrid, Spain}

\author{J. Ignacio Cirac}
\affiliation{
Max-Planck-Institut f{\"{u}}r Quantenoptik, Hans-Kopfermann-Str. 1, 85748 Garching, Germany
}
\affiliation{
Munich Center for Quantum Science and Technology (MCQST), Schellingstr. 4, 80799 M{\"{u}}nchen, Germany
}%

\date{\today}

\begin{abstract}
We introduce regular language states, a family of quantum many-body states. They are built from a special class of formal languages, called regular, which has been thoroughly studied in the field of computer science. They can be understood as the superposition of all the words in a regular language and encompass physically relevant states such as the GHZ-, W- or Dicke-states. By leveraging the theory of regular languages, we develop a theoretical framework to describe them. First, we express them in terms of matrix product states, providing efficient criteria to recognize them. We then develop a canonical form which allows us to formulate a fundamental theorem for the equivalence of regular language states, including under local unitary operations. We also exploit the theory of tensor networks to find an efficient criterion to determine when regular languages are shift-invariant.
\end{abstract}

\maketitle

\section{Introduction}

The identification of physically relevant families of many-body quantum states admitting a concise theoretical description has played a crucial role in the development of quantum science. Prominent examples of such families are stabilizer- \cite{gottesman_1997_stabilizers}, graph- \cite{hein_2004_graphstates}, Gaussian- \cite{weedbrook_2012_gaussian_qinfo}, or tensor network-states \cite{orus_tensor_2019}, which have turned into fundamental tools across diverse areas such as quantum error correction, measurement-based quantum computation, the classification of phases of matter, and quantum communication \cite{shor_1995_qec, terhal_2015, raussendorf_2001, briegel_2009_measurement-based, Cirac_2021_review}. These families, among others, have allowed to establish unexpected but tight connections across different fields, ranging from quantum information theory, to condensed matter or quantum gravity \cite{zeng_quantum_2019, ryu_2006, evenbly_tensor_2011, Cirac_2021_review}.

In this work, we introduce a family of quantum many-body states that we call \textit{regular language states} (RLS). This family is inspired by the concept of formal languages, which lie at the heart of theoretical computer science, and consist of collections of strings described by simple mathematical rules \cite{Aho_Ullman_1972_book, sipser_introduction_1997, Hopcroft_2001_book_automata}. \marta{Formal languages provide the basic framework underlying modern models of computation, parsing, and program semantics; their hierarchy (regular, context-free, Turing decidable, etc.) organizes computational tasks by the amount of memory or control structure they require. Because of this, they form a unifying framework for describing both classical and quantum computational processes, ranging from compiler design to the theory of quantum finite automata and quantum regular grammars.}

A fundamental subset of formal languages is the class of the so-called regular languages \cite{Yu_1997_regular_languages}, which are widely used for efficient pattern manipulation in applications such as text processing, search engines, and compiler design \cite{Aho_1990_patterns, friedl_2006_regex, aho_compilers_2014}. The structural simplicity of regular languages naturally leads us to define sets of quantum states associated to them as the superposition of the words in the language for each given length. It also allows us to study their properties by leveraging the vast range of tools that have been developed for regular languages in the aforementioned contexts. This definition is also motivated by the fact that prominent examples of states appearing in quantum information and condensed matter physics are RLS, such as the GHZ- \cite{Greenberger_1989}, W- \cite{Dur_2000_Wstate}, or Dicke-states \cite{Dicke_1954}. 

Regular languages can be generated by finite automata \cite{Hopcroft_2001_book_automata, Sakarovitch2009}, which are a special type of restricted Turing machines that follow a simple set of rules. \marta{Finite automata are the simplest and most well-understood memory-bounded computational models: they capture exactly those tasks that can be carried out with constant internal memory, independent of the input size. Their transparency and mathematical rigidity have made them central not only in classical computation, but also in quantum computational models such as quantum finite automata, interactive protocols, and streaming algorithms \cite{bhatia_2019_quantumfiniteautomatasurvey, ambainis_2018_automataquantumcomputing, nayak_2022_augmentedindexquantumstreaming}. For our purposes, automata provide a compact and operational way to encode the structure of a regular language, and therefore of the corresponding RLS. 
As we will see, this connection allows one to express RLS as matrix product states (MPS), the one-dimensional class of tensor network states.}

While the theoretical description of MPS is well established by now \cite{fannes_finitely_1992,Perez-Garcia2007, Cirac_2021_review}, it turns out that it cannot be applied to general RLS. Indeed, two basic tools, the canonical form and the fundamental theorem, require some technical conditions that are not met by such states. The former one removes the ambiguity in the description in terms of MPS, while the latter allows one to recognize if two MPS are LU-equivalent; that is, if they can be related to each other by local unitary operators. Both play a very relevant role in many applications of MPS, ranging from computational methods \cite{Schollwock_2011, zhang_2020_stabilitytensornetworkscanonical} to the classification of symmetry-protected topological phases of matter in one dimension \cite{Chen_2011_symm, Schuch_2011_symm}. Here we combine and extend some of the tools and techniques devised for regular languages and MPS in order to develop a theory that encompasses RLS. Specifically, we define a canonical form that can be found using RL tools, and a fundamental theorem that allows us to analyze the LU-equivalence of RLS. We also illustrate with examples how, in certain aspects, RLS significantly differ from standard MPS. 

\section{Regular language quantum states} \label{sec:RLS}


\marta{Regular languages form one of the most fundamental classes in formal language theory, and their structure underlies the quantum states introduced in this work.}

Given an alphabet $\Sigma$, a \textit{word} is any finite sequence of letters of $\Sigma$, and $\Sigma^*$ denotes the set of all possible words of arbitrary length. We will take $\Sigma=\{0,1,\ldots,d-1\}$, so that $\Sigma^*$ is the set of all sequences composed of those numbers. A language, $L$, is simply a subset of $\Sigma^*$. \marta{There exist different types of languages depending on how they are characterized, but \emph{regular languages} (RL) form the simplest nontrivial class. They admit two fully equivalent characterizations: one relies on an algebraic viewpoint, and the other on an automata-theoretic viewpoint.}

\marta{From the algebraic perspective, regular languages are defined as the smallest class of languages built from the basic languages $\emptyset$, $\varepsilon$ (the \textit{empty string}) and single letters $a \in \Sigma$, which is closed under the so-called \textit{regular operations}: concatenation ($L_1 L_2$), union ($L_1 \cup L_2$) and Kleene star ($L^* := \varepsilon \cup L \cup LL \cup \dots$). The expressions obtained under these rules are known as \emph{regular expressions}. Therefore, they are patterns denoting (possibly infinite) sets of strings, whose manipulation and memory consumption are computationally very efficient, which explains their widespread use in applications such as pattern matching and text processing \cite{friedl_2006_regex}.}

For $\Sigma = \{0,1\}$ ($d=2$) some examples are (\textit{a}) $0^* \cup 1^*$, including all words composed entirely of 0s or entirely of 1s, (\textit{b}) $0^* 1 1^*$, with all words consisting of some 0's followed by one or more 1's, (\textit{c}) $0^* 1 1^* 0 0^*$, with all words with a block of 0's, then a nonempty block of 1's and then again a block of 0's, i.e. all strings $0^i 1^j 0^k$ where $i \geq 0$, $j,k \geq 1$, (\textit{d}) $0^* 1 0^*$, with all words with exactly one 1 and the rest 0s, and (\textit{e}) $0^*10^*10^*$, with all words with exactly two ones in any locations and the rest 0s.

\marta{Equivalently, regular languages can be defined from an automata-theoretic viewpoint.} In this picture, a regular language is precisely one that can be recognized by a \emph{finite automaton}, \marta{a device with a finite set of internal memory states and a set of transition rules specifying how it reacts to each symbol of the input}. Formally, they are structures characterized by a tuple $\mathcal{F} = \langle Q, \Sigma, \delta, I, F\rangle$, where $Q$ is the set of internal states, $I, F \subseteq Q$ are the sets of initial and accepting states, respectively, and $\delta : Q \times (\Sigma \cup \varepsilon) \to \mathcal{P}(Q)$ is the transition function, where $\mathcal{P}(Q)$ denotes the power set of $Q$. 

\marta{A word is \textit{accepted} by the automaton if there exists at least one path through the automaton, following these transitions, that starts in an initial state and ends in an accepting one.} Formally, a language is said to be \textit{accepted} by $\mathcal{F}$ if for any word $w = x_1 x_2 \dots x_n \in L$, there is at least one path along the NFA starting at some $r_0 \in I$ and ending in $r_n \in F$, such that $r_0 \xrightarrow{x_1} r_1 \xrightarrow{x_2} \dots \xrightarrow{x_n} r_n$, where $r_{i+1} \in \delta(r_i, x_{i+1})$ \cite{sipser_introduction_1997}. 

Given $\Sigma = \{0,1\}$, we can consider as a first example the language $L := 1^* (011^*)^*$ of all words where every 0 is followed by at least one 1. It is accepted by the automaton $\mathcal{F} = \langle Q, \Sigma, \delta, I, F \rangle$, depicted below.
\begin{equation*}
    \mathcal{F}: \begin{tikzpicture}[scale=.2, baseline={([yshift=-3ex]current bounding box.center)}, thick]
        \tikzstyle{small state} = [state, minimum size=0pt, fill=purple,node distance=1.5cm,initial distance=1.5cm,initial text=$ $]
        \tikzset{->}

        \node[small state, initial, accepting] (q1) {\small $q_1$};
        \node[small state, right of=q1] (q2) {\small $q_2$};

        \draw (q1) edge[loop above] node{\scriptsize $1$} (q1)
        (q1) edge[bend left, above] node{\scriptsize $0$} (q2)
        (q2) edge[bend left, above] node{\scriptsize $1$} (q1);
    \end{tikzpicture}\ 
\end{equation*}
The short arrow marks the initial state, $I = \{1\}$. An arrow going from state $q_i$ to state $q_j$, with the symbol $x$ on top, indicates that $j \in \delta(i, x)$. For instance, we have $\delta(1, 0) = \{2\}$, meaning that if at some step of a computation we are in state $q_1$, and emit output symbol $0$, then we transition to $q_2$. Similarly, $\delta(1,1) = \{1\}$ and $\delta(2,1) = \{1\}$. The double circle indicates the accepting state, $F = \{1\}$. 

\marta{As further examples, we depict below automata accepting each of the regular expressions \textit{(a)-(e)} introduced above:}
\begin{align*}
    &\textit{(a) } \hspace{-0.2cm}
    \begin{tikzpicture}[scale=.5, baseline={([yshift=-1.2ex]current bounding box.center)}, thick]
        \tikzstyle{small state} = [state, minimum size=7pt, fill=purple,node distance=0.75cm,initial distance=0.5 cm,initial text=]
        \tikzset{->}
        \node[small state, initial, accepting] (q1) {};
        \node[small state, initial, accepting, right of=q1] (q2) {};
        \draw (q1) edge[loop above] node{\scriptsize $0$} (q1)
        (q2) edge[loop above] node{\scriptsize $1$} (q2);
        \node[below=-0.05 cm of current bounding box.south, align=center]
        {\small $0^* \cup 1^*$};
    \end{tikzpicture}
    \ , \
    \textit{(b) } \hspace{-0.2cm} 
    \begin{tikzpicture}[scale=.5, baseline={([yshift=-1.2ex]current bounding box.center)}, thick]
        \tikzstyle{small state} = [state, minimum size=7pt, fill=purple,node distance=0.75cm,initial distance=0.5 cm,initial text=]
        \tikzset{->}
        \node[small state, initial] (q1) {};
        \node[small state, accepting, right of=q1] (q2) {};
        \draw (q1) edge[loop above] node{\scriptsize $0$} (q1)
        (q2) edge[loop above] node{\scriptsize $1$} (q2)
        (q1) edge[above] node{\scriptsize $1$} (q2);
        \node[below=-0.05 cm of current bounding box.south, align=center]
        {\small $0^* 1 1^*$};
    \end{tikzpicture}
    \ , \ \textit{(c) } \hspace{-0.2cm} 
    \begin{tikzpicture}[scale=.5, baseline={([yshift=-1.2ex]current bounding box.center)}, thick]
        \tikzstyle{small state} = [state, minimum size=7pt, fill=purple,node distance=0.75cm,initial distance=0.5 cm,initial text=]
        \tikzset{->}
        \node[small state, initial] (q1) {};
        \node[small state, right of=q1] (q2) {};
        \node[small state, accepting, right of=q2] (q3) {};
        \draw (q1) edge[loop above] node{\scriptsize $0$} (q1)
        (q2) edge[loop above] node{\scriptsize $1$} (q2)
        (q3) edge[loop above] node{\scriptsize $0$} (q3)
        (q1) edge[above] node{\scriptsize $1$} (q2)
        (q2) edge[above] node{\scriptsize $0$} (q3);
        \node[below=-0.05 cm of current bounding box.south, align=center]
        {\small $0^* 1 1^* 0 0^*$};
    \end{tikzpicture} \ , \\ 
    &\textit{(d) } \hspace{-0.2cm} 
    \begin{tikzpicture}[scale=.5, baseline={([yshift=-1.2ex]current bounding box.center)}, thick]
        \tikzstyle{small state} = [state, minimum size=7pt, fill=purple,node distance=0.75cm,initial distance=0.5 cm,initial text=]
        \tikzset{->}
        \node[small state, initial] (q1) {};
        \node[small state, accepting, right of=q1] (q2) {};
        \draw (q1) edge[loop above] node{\scriptsize $0$} (q1)
        (q2) edge[loop above] node{\scriptsize $0$} (q2)
        (q1) edge[above] node{\scriptsize $1$} (q2);
        \node[below=-0.05 cm of current bounding box.south, align=center]
        {\small $0^* 1 0^*$};
    \end{tikzpicture} \ , \
    \textit{(e) } \hspace{-0.2cm} 
    \begin{tikzpicture}[scale=.5, baseline={([yshift=-1.2ex]current bounding box.center)}, thick]
        \tikzstyle{small state} = [state, minimum size=7pt, fill=purple,node distance=0.75cm,initial distance=0.5 cm,initial text=]
        \tikzset{->}
        \node[small state, initial] (q1) {};
        \node[small state, right of=q1] (q2) {};
        \node[small state, accepting, right of=q2] (q3) {};
        \draw (q1) edge[loop above] node{\scriptsize $0$} (q1)
        (q2) edge[loop above] node{\scriptsize $0$} (q2)
        (q3) edge[loop above] node{\scriptsize $0$} (q3)
        (q1) edge[above] node{\scriptsize $1$} (q2)
        (q2) edge[above] node{\scriptsize $1$} (q3);
        \node[below=-0.05 cm of current bounding box.south, align=center]
        {\small $0^* 1 0^* 1 0^*$};
    \end{tikzpicture} \ .
\end{align*}

The equivalence between the two viewpoints is known as Kleene's theorem \cite{Kleene_1956} and is one of the most fundamental results in automata theory: it asserts that for every regular expression there exists an NFA that accepts exactly the same words, and vice versa.

We will be interested in a subset of NFA's called \textit{unambiguous finite automata} (UFA) \cite{Sakarovitch2009}, in which any word has at most one accepting path. In fact, for any RL there always exists an UFA that accepts all and only its words \cite{Ravikumar_1989_succint-ambiguity, Leung_2005_expsep-NFA-UFA}.

\marta{The automata viewpoint also provides intuition about which languages are regular. Since finite automata have only a constant amount of memory, they cannot count or store unbounded information about the input. For example, the language $L = \{0^n 1^n \mid n \in \mathbb{N}\}$ is not regular because accepting it requires remembering exactly how many zeros have appeared before switching to ones, for arbitrarily long inputs.}

To every regular language we can associate a family of quantum states consisting of the superposition with weights equal to one of all the words of length $N$ in the language. 
\begin{definition}
    Given a RL $L$, the family of regular language states (RLS) associated to $L$ is $L_q := \{\ket{L_N}\}_{N \in \mathbb{N}}$, where
    \begin{equation} \label{eq:def_L}
        \ket{L_N} = \sum_{w \in L \cap \Sigma^N} \ket{w}.
    \end{equation}
\end{definition}

\marta{For the examples \textit{(a)-(e)} introduced above in terms of their regular expressions and automata, the corresponding RLS are: 
(\textit{a}) GHZ states on qubits \cite{Greenberger_1989}, 
(\textit{b}) the simplified ansatz for a domain-wall excitation \cite{Haegeman_2012}, 
(\textit{c}) a state with two domain-wall excitations, 
(\textit{d}) W-states \cite{Dur_2000_Wstate}, and
(\textit{e}) Dicke states with two excitations of the same type \cite{Dicke_1954}. 

Altogether, these RLS examples show that regular languages naturally encode many familiar quantum states from quantum information and quantum many-body physics. Moreover, they also show how RLS can systematically and compactly represent excited states of simple models, such as the transverse-field Ising model: \textit{(a)} corresponds to the ground state, \textit{(b)} and \textit{(c)} represent the first and second excited states in the ferromagnetic phase (i.e. single and double domain-wall excitations), while \textit{(d)} and \textit{(e)} capture the structure of single and double quasiparticle excitations in the paramagnetic phase \cite{Sachdev_2011_book-QPT}.}

Note that, in order to keep a simple notation, we have not normalized the states. Furthermore, we have chosen all the coefficients in front of the states to be identical. One can extend RLS to include complex coefficients, although here we will mainly concentrate on the first case.

\section{MPS representation of RLS}

All the examples of the previous section can be written as MPS with a constant bond dimension, independent of $N$. In order to show that this holds for all RLS, we use a subset of NFA's called unambiguous finite automata (UFA) \cite{Sakarovitch2009}, for which any word has at most one accepting path. In fact, for any RL there always exists an UFA that accepts all and only its words \cite{Ravikumar_1989_succint-ambiguity, Leung_2005_expsep-NFA-UFA}. With this, we can now establish a connection between RLS and MPS as follows.
\begin{restatable}{theorem}{lemmaNFAMPS}
    \label{lemma:NFAMPS}
    Any family of RLS $L_q = \{\ket{L_N}\}$ admits an MPS description with binary entries and constant bond dimension. In particular, given any UFA $\mathcal{F} = \langle Q, \Sigma, \delta, I, F\rangle$ that accepts $L$, then
    \begin{equation} \label{eq:familyRLMPS1}
    \ket{L_N} :=
        \begin{tikzpicture}[scale=.45, baseline={([yshift=-1ex]current bounding box.center)}, thick]
            \FullMPS{0,0}{$A$}{purple}
            \draw[fill=amaranth] (-1.4,0) circle (0.4);
            \node at (-1.4,0) {\scriptsize $v_l$};
            \draw[fill=amaranth] (6.4,0) circle (0.4);
            \node at (6.4,0) {\scriptsize $v_r$};
            \node at (3.2,0.5) {\scriptsize $N$ times};
        \end{tikzpicture},
    \end{equation}
    where the bond dimension is $D=|Q|$, and
    \begin{equation} \label{eq:associate_MPS_to_NFA1}
    \begin{cases}
        \begin{tikzpicture}[scale=.45, baseline={([yshift=-0.5ex]current bounding box.center)}, thick]
            \draw (0,0) -- (0.8,0);
            \draw[fill=amaranth] (0,0) circle (0.4);
            \node at (0,0) {\scriptsize $v_l$};
        \end{tikzpicture} := \sum_{i \in I} \bra{i}, \\
        \begin{tikzpicture}[scale=.45, baseline={([yshift=-0.5ex]current bounding box.center)}, thick]
            \draw (0,0) -- (-0.8,0);
            \draw[fill=amaranth] (0,0) circle (0.4);
            \node at (0,0) {\scriptsize $v_r$};
        \end{tikzpicture}
        := \sum_{f \in F} \ket{f},
    \end{cases}
    \begin{tikzpicture}[scale=.4, baseline={([yshift=-1ex]current bounding box.center)}, thick]
        \MPSTensor{0,0}{$A$}{purple}
        \node at (-1.2,0) {\scriptsize $i$};
        \node at (1.2,0) {\scriptsize $j$};
        \node at (0,1.25) {\scriptsize $x$};
    \end{tikzpicture}
    = \begin{cases}
        1 \text{ if } j \in \delta(i,x) , \\
        0 \text{ otherwise}.
    \end{cases}
\end{equation}
\end{restatable}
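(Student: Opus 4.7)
The plan is to verify directly that the tensor network in \eqref{eq:familyRLMPS1}, with the tensors specified by \eqref{eq:associate_MPS_to_NFA1}, evaluates term by term to the state in \eqref{eq:def_L}, and then to argue that the unambiguity of the underlying automaton is exactly what promotes this evaluation to an equality rather than a weighted version of it.

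First I would invoke the structural input from automata theory: every RL is accepted by some UFA on a finite state set $Q$, which we may moreover take to be $\varepsilon$-free (the standard $\varepsilon$-removal procedure preserves both regularity and unambiguity). This fixes a concrete $\mathcal{F} = \langle Q, \Sigma, \delta, I, F\rangle$ on which the construction is based, and determines $D = |Q|$, a constant independent of $N$. The tensor $A$ and the boundary vectors $v_l, v_r$ given in \eqref{eq:associate_MPS_to_NFA1} then have entries in $\{0,1\}$ by definition, establishing the binary-entry claim.

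Next I would expand the contraction defining the MPS. Writing it out explicitly yields
\begin{equation*}
    \sum_{x_1,\dots,x_N \in \Sigma}\, \Bigl( \sum_{r_0 \in I}\sum_{r_N \in F}\sum_{r_1,\dots,r_{N-1}\in Q}\prod_{k=1}^{N} A^{x_k}_{r_{k-1},r_k}\Bigr)\, \ket{x_1\cdots x_N}.
\end{equation*}
By the definition of $A$, the product $\prod_k A^{x_k}_{r_{k-1},r_k}$ equals $1$ exactly when $(r_0,r_1,\dots,r_N)$ is a valid run of $\mathcal{F}$ on input $w=x_1\cdots x_N$ with $r_0\in I$ and $r_N\in F$, and equals $0$ otherwise. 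Hence the inner sum counts the number of accepting paths of $\mathcal{F}$ on $w$. For a UFA this count is $1$ if $w\in L\cap\Sigma^N$ and $0$ otherwise, so the whole expression collapses to \eqref{eq:def_L}.

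The main obstacle is precisely the passage from ``counting paths'' to ``indicator of membership in $L$''. A naive NFA would make the construction overshoot: words with several accepting runs would pick up integer coefficients larger than one, producing a weighted superposition instead of $\ket{L_N}$. This is exactly the issue circumvented by the cited UFA existence result, which is therefore the load-bearing external fact of the proof. A minor secondary point, already mentioned above, is the reduction to $\varepsilon$-free UFAs so that every path has length equal to the input length and the MPS contraction has precisely $N$ bulk tensors; this is handled by the standard $\varepsilon$-closure construction and causes no loss of generality.
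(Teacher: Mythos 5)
Your proof is correct and follows essentially the same route as the paper's: expand the contraction, observe that each coefficient counts accepting runs, and use unambiguity to turn that count into the indicator of membership in $L$. Your explicit remark about reducing to an $\varepsilon$-free UFA (so that runs have length exactly $N$) is a small point the paper leaves implicit, but it does not change the argument.
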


The proof is provided in Appendix \ref{app:A}. \marta{The resulting states are such that the coefficients in the computational basis belong to $\{0,1\}$. Additionally, }note that the resulting MPS is composed of a single tensor $A$, although it is not necessarily translationally invariant because it has open boundary conditions (OBC). \marta{Therefore, the entanglement of RLS obeys an area law, since they always admit an MPS representation; in particular, the size of any UFA accepting the RL upper bounds the entanglement of the associated RLS.}

\marta{To illustrate the connection between RLS and MPS established in Theorem \ref{lemma:NFAMPS}, we provide below the explicit MPS representation of two of the examples of RLS introduced in the previous section. Given $\Sigma = \{0,1\}$, we consider the languages $L_1 := 0^* 1 0^*$ of all words containing a single 1, and $L_2 := 1^* (011^*)^*$, with all words where every 0 is followed by at least one 1. They are accepted by
\begin{equation*}
    \mathcal{F}_1:\hspace{-2mm}\begin{tikzpicture}[scale=.2, baseline={([yshift=-3ex]current bounding box.center)}, thick]
        \tikzstyle{small state} = [state, minimum size=0pt, fill=purple,node distance=1.5cm,initial distance=1.5cm,initial text=$ $]
        \tikzset{->}

        \node[small state, initial] (q1) {\small $q_1$};
        \node[small state, accepting, right of=q1] (q2) {\small $q_2$};

        \draw (q1) edge[loop above] node{\scriptsize $0$} (q1)
        (q2) edge[loop above] node{\scriptsize $0$} (q2)
        (q1) edge[above] node{\scriptsize $1$} (q2);
    \end{tikzpicture} \ ,
    \ \mathcal{F}_2:\hspace{-2mm}\begin{tikzpicture}[scale=.2, baseline={([yshift=-3ex]current bounding box.center)}, thick]
        \tikzstyle{small state} = [state, minimum size=0pt, fill=purple,node distance=1.5cm,initial distance=1.5cm,initial text=$ $]
        \tikzset{->}

        \node[small state, initial, accepting] (q1) {\small $q_1$};
        \node[small state, right of=q1] (q2) {\small $q_2$};

        \draw (q1) edge[loop above] node{\scriptsize $1$} (q1)
        (q1) edge[bend left, above] node{\scriptsize $0$} (q2)
        (q2) edge[bend left, above] node{\scriptsize $1$} (q1);
    \end{tikzpicture}\ .
\end{equation*}
Using Eq. \eqref{eq:associate_MPS_to_NFA1}, the MPS associated to the first automaton has the following tensors,
\begin{equation*}
    A^0 = {\footnotesize \begin{pmatrix}
        1 & 0 \\ 0 & 1
    \end{pmatrix}}, \
    A^1 = {\footnotesize \begin{pmatrix}
        0 & 1 \\ 0 & 0
    \end{pmatrix}}, \
    \bra{v_l} = {\footnotesize \begin{pmatrix}
        1 & 0
    \end{pmatrix}}, \
    \ket{v_r} = {\footnotesize \begin{pmatrix}
        0 \\ 1
    \end{pmatrix}},
\end{equation*}
while the automaton for $L_2$ corresponds to
\begin{equation*}
    A^0 = {\footnotesize \begin{pmatrix}
        0 & 1 \\ 0 & 0
    \end{pmatrix}}, \
    A^1 = {\footnotesize \begin{pmatrix}
        1 & 0 \\ 1 & 0
    \end{pmatrix}}, \
    \bra{v_l} = {\footnotesize \begin{pmatrix}
        1 & 0
    \end{pmatrix}}, \
    \ket{v_r} = {\footnotesize \begin{pmatrix}
        1 \\ 0
    \end{pmatrix}}.
\end{equation*}}

Given any tensor $A$ and vectors $v_l, v_r$ with binary entries, is the corresponding MPS a RLS? To answer this, we note that the coefficient in front of each $\ket{w}$ equals the number of accepting paths for $w$, which is not necessarily one unless the automaton is unambiguous. Therefore, the MPS is a RLS if and only if the automaton associated to it according to Eq. \eqref{eq:associate_MPS_to_NFA1} is an UFA.  

Fortunately, RL theory provides a criterion to decide whether an NFA is actually an UFA in time $O(m^2)$, where $m$ is the number of transitions in the underlying NFA \cite{Sakarovitch2009}, which scales at worst as $O(d^2 D^4)$ since $m \leq dD^2$. By expressing this criterion in terms of the algebra of the MPS, consisting of all finite products of the MPS matrices and any linear combinations thereof \cite{farenick_algebras_book_2001}, we present an alternative characterization of unambiguity in the lemma below, which can be checked in time $O(dD^3)$. We denote the algebra of the MPS as $\mathcal{A} := \text{Alg}(\{A^x\}) \subseteq \mathcal{M}_D(\mathbb{C})$, and define the sets $\mathcal{V}_L := \{( a^T \ket{v_l} )^{\otimes 2} \mid a \in \mathcal{A} \}$ and $\mathcal{V}_R := \{(a \ket{v_r})^{\otimes 2} \mid a \in \mathcal{A}\}$, where $a^T$ is the transpose of $a$.

\begin{restatable}{lemma}{UFAMPS} \label{lemma:UFAMPS}  
    Given an MPS defined by the binary tensors $\{A, v_l, v_r\}$ according to Eq. \eqref{eq:familyRLMPS1}, it is a RLS if and only if, for all $m, n \in \{1, \dots, D\}$ with $m \neq n$, 
    \begin{equation*}
        \ket{m} \otimes \ket{n} \in (\mathcal{V}_L)^\perp \cup (\mathcal{V}_R)^\perp.
    \end{equation*}
    This condition can be checked in time $O(dD^3)$.
\end{restatable}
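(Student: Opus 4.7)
The plan is to reduce the claim to a classical product-automaton criterion for unambiguity of a finite automaton and to re-express that criterion in the algebraic language of the lemma. From the discussion preceding the statement, the MPS defined by $(A,v_l,v_r)$ is an RLS if and only if the associated NFA $\mathcal{F}$ is unambiguous, so my goal is to prove that the stated condition on $\mathcal{V}_L$ and $\mathcal{V}_R$ is equivalent to unambiguity of $\mathcal{F}$.

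I would proceed via the product automaton $\mathcal{F}\times\mathcal{F}$ with states $Q\times Q$, synchronized transitions $\delta((p,q),x)=\delta(p,x)\times\delta(q,x)$, and initial/accepting sets $I\times I$ and $F\times F$. Two distinct accepting paths for the same word must diverge at some position, so $\mathcal{F}$ is ambiguous if and only if some state $(m,n)$ with $m\neq n$ is simultaneously reachable from $I\times I$ and co-reachable to $F\times F$. Setting $u_w:=(A^w)^T\ket{v_l}$ and $r_w:=A^w\ket{v_r}$, whose $j$-th entries count the paths in $\mathcal{F}$ from $I$ to $j$ and from $j$ to $F$ reading $w$ respectively, reachability of $(m,n)$ from $I\times I$ is equivalent to the existence of a word $w$ with $(u_w)_m(u_w)_n\neq 0$, i.e.\ to $\ket{m}\otimes\ket{n}$ being not orthogonal to the element $u_w\otimes u_w\in\mathcal{V}_L$; symmetrically, co-reachability to $F\times F$ is characterized by $\mathcal{V}_R$. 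Combining both statements and taking contrapositives over all off-diagonal pairs yields the desired iff.

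For the $O(dD^3)$ complexity, I would build orthonormal bases of the subspaces $\text{span}\{u_w\}$ and $\text{span}\{r_w\}$ in $\mathbb{C}^D$ iteratively: start from $\ket{v_l}$ (resp.\ $\ket{v_r}$) and, at each step, append any newly independent vector obtained by applying one of the $d$ matrices $(A^x)^T$ (resp.\ $A^x$) to a current basis vector. Since these subspaces have dimension at most $D$, at most $D$ vectors are ever accepted; each candidate takes $O(D^2)$ to compute and $O(D^2)$ to test for independence against the current basis, for a total cost of $O(dD^3)$. Once the bases are known, the orthogonality condition for each of the $O(D^2)$ pairs $(m,n)$ reduces to checking which coordinates are identically zero across the basis vectors, contributing an additional $O(D^2)$.

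The main obstacle is the clean algebraic encoding of ``reachability of an off-diagonal pair'' in terms of $(\mathcal{V}_L)^{\perp}$ and $(\mathcal{V}_R)^{\perp}$: one has to show that the rank-one monomials $u_w\otimes u_w$, which directly witness product-automaton reachability, are sufficient to detect the full orthogonality condition against $\mathcal{V}_L$ (and analogously for $\mathcal{V}_R$), so that the algorithmic test on the subspaces $\text{span}\{u_w\}$ and $\text{span}\{r_w\}$ captures exactly the unambiguity criterion rather than a strictly stronger algebraic one.
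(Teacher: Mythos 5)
Your route is the same as the paper's: reduce the question to unambiguity of the associated NFA, characterize ambiguity via the product automaton (the paper invokes Proposition 1.15 of \cite{Sakarovitch2009}, which is exactly your statement that ambiguity is equivalent to some off-diagonal pair $(m,n)$ being both reachable from $I\times I$ and co-reachable to $F\times F$), translate reachability and co-reachability into the vectors $u_w=(A^w)^T\ket{v_l}$ and $r_w=A^w\ket{v_r}$, and obtain the $O(dD^3)$ bound by iteratively building bases of the forward- and backward-reachable subspaces, exactly as in the paper's Algorithm~\ref{algorithm:check_MPS_RLS} (which cites \cite{Kiefer_2011} for that step). As a strategy, nothing is missing.

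The obstacle you flag at the end is, however, a genuine gap, and it is precisely the step the paper performs without argument when it passes from the word-quantified ``useless'' condition to membership in $(\mathcal{V}_L)^\perp$ with $a$ ranging over the whole algebra $\mathcal{A}$. The two conditions are not formally equivalent: orthogonality of $\ket{m}\otimes\ket{n}$ to $(a^T\ket{v_l})^{\otimes 2}$ for \emph{all} $a\in\mathcal{A}$ is, by polarization (and the fact that a product of two linear functionals vanishes identically on a subspace over $\mathbb{C}$ only if one of them does), equivalent to $\ket{m}$ or $\ket{n}$ being orthogonal to the entire subspace $\mathcal{A}^T\ket{v_l}$ --- strictly stronger than the correct trim criterion ``$(u_w)_m(u_w)_n=0$ for every word $w$'', which only demands that for each word at least one of the two coordinates vanishes, with which one allowed to vary with $w$. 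Already for the deterministic (hence unambiguous) automaton $\mathcal{F}_1$ of the W-state, $u_{0^k}=(1,0)$ and $u_{0^k10^j}=(0,1)$, so the word-wise condition holds for $(m,n)=(1,2)$ while neither $\ket{1}$ nor $\ket{2}$ is orthogonal to $\mathcal{A}^T\ket{v_l}=\mathbb{C}^2$. The statement therefore has to be read with $a$ ranging over the monomials $A^w$ only, and the bases $\{\ket{l_i}\},\{\ket{r_i}\}$ in the algorithm must consist of vectors of that monomial form; even then, showing that checking the product condition on such a basis certifies it for all words (your worry about testing ``a strictly stronger algebraic'' condition) requires an argument using the nonnegativity of the entries, which you do not supply --- but neither, strictly speaking, does the reference proof. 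You have correctly located where the real work lies; you have not carried it out.
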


The proof is provided in Appendix \ref{app:B}.

A particularly relevant set of MPS are those that are translationally invariant (TI). Thus, we now address the problem of determining whether a RLS is TI. The first thing to notice is that these states correspond to the so-called shift-invariant languages \cite{Jiraskova_Okhotin_2008}. A naive algorithm based on automata theory for deciding whether a RL is shift-invariant can scale as $O(d D^2 4^{D^2})$ with the bond dimension $D$ of any MPS representation. However, using the theory of MPS through the lemma below, we can tackle this issue more efficiently.

\begin{restatable}{lemma}{lemmaTI}
\label{lemma:TI}
    A family of MPS-X states $\{\ket{\psi_N(X, A)}\}_N$, 
    \begin{equation} \label{eq:general_MPS-X}
        \ket{\psi_N(X, A)} := 
        \begin{tikzpicture}[scale=.45, baseline={([yshift=-1ex]current bounding box.center)}, thick]
            \FullMPSX{(0,0)}{$A$}{$X$}{purple}{yellow}
            \node at (3,0.5) {\tiny $N$ \text{\normalfont times}};
        \end{tikzpicture},
    \end{equation}
    where $X, A^i \in \mathcal{M}_D(\mathbb{C})$ is TI for all $N$, if and only if,
    \begin{equation*}
        \Tr[X[a,b]] = 0, \ \forall a, b \in \text{\normalfont Alg}(\{A^i\}).
    \end{equation*}
\end{restatable}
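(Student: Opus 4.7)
The plan is to first expand the MPS-X state in the computational basis. The wrap-around geometry of the diagram in Eq.~\eqref{eq:general_MPS-X} gives the closed-loop contraction
\begin{equation*}
\ket{\psi_N(X,A)} = \sum_{i_1,\dots,i_N} \Tr[X A^{i_1} A^{i_2} \cdots A^{i_N}] \ket{i_1 i_2 \cdots i_N},
\end{equation*}
so translational invariance becomes invariance of the coefficients under the cyclic relabeling $(i_1,\ldots,i_N)\mapsto (i_2,\ldots,i_N,i_1)$. Equivalently, I need the one-step cyclic-shift identity $\Tr[X A^{i_1}\cdots A^{i_N}] = \Tr[X A^{i_2}\cdots A^{i_N} A^{i_1}]$ to hold for every $N\geq 1$ and every choice of physical indices $i_1,\ldots,i_N$.

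For the ($\Leftarrow$) direction I simply pick $a = A^{i_1}$ and $b = A^{i_2}\cdots A^{i_N}$, both elements of $\mathcal{A}:=\text{Alg}(\{A^i\})$; the assumption $\Tr[X[a,b]] = 0$ rearranges to $\Tr[X a b] = \Tr[X b a]$, which is exactly the one-step cyclic-shift identity above, so the family is TI.

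For the converse, iterating the one-step shift $k$ times yields the more general statement $\Tr[X P Q] = \Tr[X Q P]$ whenever $P = A^{i_1}\cdots A^{i_k}$ and $Q = A^{i_{k+1}}\cdots A^{i_N}$ are products of the $A^i$'s of arbitrary lengths. Since the trace is bilinear and $\mathcal{A}$ is, by definition, the linear span of such products, this identity extends to $\Tr[X a b] = \Tr[X b a]$ for all $a,b \in \mathcal{A}$, which is the desired $\Tr[X[a,b]] = 0$.

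The only genuine obstacle is this upgrade from generators to the full algebra in the ($\Rightarrow$) direction. What makes it work is that TI is assumed for every system size $N$ simultaneously: iterating the elementary cyclic shift therefore allows arbitrary products, not merely single generators, to be cycled past one another inside the trace against $X$. The extension to arbitrary linear combinations is then automatic by bilinearity, and no further structure (e.g.\ injectivity or unitality of $\mathcal{A}$) is needed because commutators involving the identity vanish trivially.
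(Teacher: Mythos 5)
Your proof is correct and follows essentially the same route as the paper's: expand the coefficients as $\Tr[X A^{i_1}\cdots A^{i_N}]$, identify TI with cyclic-shift invariance of these traces (the paper states this directly for all $r$-step shifts, you obtain it by iterating the one-step shift), and pass from words to the full algebra by bilinearity, since $\text{Alg}(\{A^i\})$ is spanned by products of the generators. The only cosmetic difference is that the paper invokes an explicit basis of words of bounded length, which is not needed for the argument itself.
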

As a corollary, we demonstrate that shift-invariant RL's can be identified in time $O(dD^3)$. The proofs can be found in Appendix \ref{app:C}. This shows that tensor networks tools can also add a new perspective on the study of certain problems in computer science.
\begin{restatable}{corollary}{corollaryTI}
\label{corollary:TI}
    A RL is shift-invariant if and only if, for any MPS representation of the associated RLS with tensors $\{A, v_l, v_r\}$ of bond dimension $D$, it holds that $\mel{v_l}{[a,b]}{v_r} = 0$ for all $a, b \in \text{\normalfont Alg}(\{A^i\})$. This condition can be checked in time $O(dD^3)$. 
\end{restatable}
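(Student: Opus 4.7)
The plan is to deduce the corollary directly from Lemma \ref{lemma:TI} by casting the OBC MPS representation of an RLS as an MPS-X. For any tensors $\{A, v_l, v_r\}$ representing $\{\ket{L_N}\}_N$ with open boundaries, the amplitude rewrites as
$$v_l^T A^{x_1}\cdots A^{x_N} v_r = \Tr[X A^{x_1}\cdots A^{x_N}]$$
with the rank-one boundary $X := v_r v_l^T$. Lemma \ref{lemma:TI} then characterises TI of the family by $\Tr[X[a,b]] = 0$ for all $a,b \in \text{\normalfont Alg}(\{A^i\})$, and cyclicity of the trace together with the factorised form of $X$ turns this into $v_l^T [a,b] v_r = \mel{v_l}{[a,b]}{v_r} = 0$, which is the condition in the statement.

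Next I would identify TI of $\{\ket{L_N}\}_N$ with shift-invariance of $L$. By \eqref{eq:def_L}, $\ket{L_N}$ is a sum of orthonormal computational-basis states with coefficients in $\{0,1\}$, so the cyclic site-shift operator $T$ satisfies $T\ket{L_N} = \ket{L_N}$ if and only if $T(L \cap \Sigma^N) = L \cap \Sigma^N$ as sets; demanding this for every $N$ is precisely the definition of shift-invariance from \cite{Jiraskova_Okhotin_2008}. Composing this equivalence with the one of the previous paragraph gives the biconditional, and it applies to \emph{any} MPS representation of the RLS because TI of the state family depends only on the state, not the representation.

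For the complexity claim, I would follow the template used in Lemma \ref{lemma:UFAMPS}: build bases of the reachable subspaces $\text{\normalfont Alg}(\{A^i\})^T v_l$ and $\text{\normalfont Alg}(\{A^i\}) v_r$ inside $\mathbb{C}^D$ by breadth-first closure, seeding with $v_l, v_r$ and closing under the $d$ generators $(A^i)^T$ and $A^i$ respectively, extending each basis by incremental Gaussian elimination; each closure stabilises in at most $D$ rounds, for a total of $O(dD^3)$ operations. Because $X$ has rank one, the joint image of the algebra inside $\mathbb{C}^D \oplus \mathbb{C}^D$ has dimension at most $2D$, so $\mel{v_l}{[a,b]}{v_r}=0$ reduces to checking symmetry of the matrix of inner products between these two bases, which stays within the $O(dD^3)$ budget.

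The main obstacle I foresee is precisely this last step: squeezing an $O(dD^3)$ check out of what is a priori a quadratic condition over the whole algebra (naively $O(D^5)$ or worse) requires exploiting that $X$ is rank one so that the condition only probes the \emph{images} of the boundary vectors under the algebra, not the full algebra itself. The two structural equivalences above (OBC amplitude $\Leftrightarrow$ trace, and TI of $\{\ket{L_N}\}_N$ $\Leftrightarrow$ shift-invariance of $L$) are then short consequences of the definitions and of Lemma \ref{lemma:TI}.
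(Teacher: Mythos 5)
Your proposal matches the paper's proof: the equivalence is obtained as a direct application of Lemma \ref{lemma:TI} with the rank-one boundary $X=\ket{v_r}\!\bra{v_l}$ (so that $\Tr[X[a,b]]=\mel{v_l}{[a,b]}{v_r}$), and the $O(dD^3)$ runtime is established by exactly the paper's scheme of computing spanning sets of $\bra{v_l}\mathcal{A}$ and $\mathcal{A}\ket{v_r}$ by closure under the generators and then checking the symmetry condition $\braket{l_i}{r_j}=\braket{l_j}{r_i}$ on all pairs. The one point to keep precise --- which your remark about the joint image in $\mathbb{C}^D\oplus\mathbb{C}^D$ already anticipates, and which the paper handles by storing triples $(w_i,\bra{l_i},\ket{r_i})$ indexed by a \emph{common} set of words --- is that the left and right spanning vectors must be generated by the same algebra elements, since two independently constructed bases of the two projections would neither admit a canonical pairing nor suffice to certify the condition for all $a,b\in\mathcal{A}$.
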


\section{Canonical form of RLS}

Given two RLS and their MPS representations with bond dimensions $D_1$ and $D_2$, how can we determine if they are equivalent? RL theory provides an algorithm to solve this problem in time $O(d(D_1 + D_2)^3)$ for the case of UFA's \cite{Schutzenberger_1961, Stearns_1985-UFAs-Equiv_Contain, Cortes_2006_equiv_cubic}, \marta{and even in $O(\max_i dD_i \log D_i)$ when the automata are deterministic (i.e. when each state has at most one outgoing transition per output symbol) \cite{Hopcroft_2001_book_automata}}. \marta{Note that standard MPS methods are ill-suited here: those for fixed string length $N$ with site-dependent tensors scale as $(O(dN(D_1 + D_2)^3)$, while methods for all $N$ with uniform tensors in the bulk assume periodic boundary conditions and therefore do not apply to RLS, whose MPS representations have open boundaries.}

But is it possible to find a canonical form that is efficiently computable and uniquely associated with each RLS? Ideally, this canonical form would enable us not only to determine the equivalence between RLS, but also to address physically relevant questions such as the LU-equivalence of RLS, which is key to understand the presence of symmetries \cite{Chen_2011_symm, Schuch_2011_symm}, or to study the entanglement properties of RLS in terms of their interconvertibility \cite{Kraus_2010_LU, Sauerwein_2019_MPSslocc}.

As a first attempt, we might consider using the already existing canonical form of MPS, but this is not generally possible, because many RLS do not fit into the traditional MPS framework. An example of this fact is the RLS consisting of W-states, for which imposing a TI MPS representation with periodic boundary conditions results in a bond dimension scaling with the system size as $\Omega(N^{1/(3+\delta)})$ for any $\delta > 0$ \cite{Perez-Garcia2007, Michalek2018}. \marta{On the other hand, note that the so-called left or right canonical form in the fixed $N$ setting for non-uniform MPS (which refers to the tensors being left or right isometries \cite{Perez-Garcia2007}) are not applicable to our setting for all $N$.}

In what follows, we propose an alternative representation that enables us to formulate a fundamental theorem for LU-equivalent RLS. For this purpose, we introduce first a canonical decomposition for regular languages, and use it as the basis to define a canonical form for RLS.

Given any RL, $L$, on alphabet $\Sigma$, we will relabel it as $\Sigma = \Sigma_\infty \cup \Sigma_f = \{0, 1, \dots\} \cup \{|\Sigma_\infty|, |\Sigma_\infty|+1, \dots, d-1\}$, where $\Sigma_f$ contains the symbols of $\Sigma$ whose number of appearances in any word of $L$ is upper bounded by a constant. We will also use operation $S^{(m)}: \mathcal{L}_m \times (\Sigma_f)^m \to \Sigma^*$, where $\mathcal{L}_m$ is the set of all RL's in $(\Sigma_\infty \cup \{f\})^*$ with exactly $m$ appearances of the symbol $f$ in any word, such that $S^{(m)}(L_1, L_2)$ substitutes every occurrence of $f$ in $L_1$ with the strings indicated by $L_2$.

\marta{The \emph{tree width} of an automaton measures the maximum number of distinct partial computations that any word can generate. Deterministic automata have tree width $1$, while nondeterministic transitions on cycles can lead to arbitrarily many partial computations and hence infinite tree width. More details are included in Appendix \ref{app:D}.}

\begin{restatable}[Canonical decomposition of a RL]{theorem}{propcanRL} \label{prop:canonical_decomposition_RL}
    Any RL, $L$, can be decomposed in a unique way in terms of the pairs of RL's $\{(L_j^m , X_j^m) \}_{j,m}$, where $m \leq M$, as
    \begin{equation*}
        L = \cup_m \cup_j S^{(m)} (L_j^m , X_j^m) \ , 
    \end{equation*}
    where $L_j^m \in \mathcal{L}_m$ and $X_j^m$ consists of a union of strings of $(\Sigma_f)^m$, with the property that $L_j^m \cap L_k^m = \emptyset$ and $X_j^m \neq X_k^m$ unless $j = k$. 
    
    \marta{This decomposition is efficiently computable: the runtime scales polynomially with $D$ if the input RL $L$ is given as a DFA or an NFA of finite tree width $k$ ($O(D^{2d^M})$ and $O(D^{2kd^M})$, respectively), or exponentially for an NFA of infinite tree width ($O(4^{(M+1)d^M D})$).} 
\end{restatable}

The proof is included in Appendix \ref{app:D}\marta{, where the explicit time scaling of the algorithm with $D, d, M$ is also included.} For instance, $L_1 = 0^*10^*$ can be canonically decomposed as $\{(0^* f 0^*, 1)\}$, so $L_1 = S^{(1)}(0^* f 0^*, 1)$, and $L_2 = 0^* 1 0^* 2 0^* \cup 0^* 2 0^* 1 0^*$ can be canonically decomposed as $\{(0^* f 0^* f 0^*, 12 \cup 21)\}$, so $L_2 = S^{(2)}(0^* f 0^* f 0^*, 12 \cup 21)$.

We now embed $L_j^m$ and $X_j^m$ into quantum states, where $| L_{j,N}^m \rangle$ is the superposition of all strings of length $N$ in $L_j^m$, and similarly $| X_j^m \rangle$ is the sum of all strings in $X_j^m$, which have length $m$ by construction. We also let $\hat{S}^{(m),N}$ be the linear operator such that $\hat{S}^{(m), N}| x \rangle | y \rangle$ is the the sum of the words of length $N$ in $S^{(m)}(x,y)$. The dependence on $N$ will be omitted for ease of notation. 

\begin{corollary} \label{cor:canonical_decomposition_RLS}
    Given a RLS $L_q$, it can be uniquely written in terms of RLS $\{(| L_j^m \rangle, | X_j^m \rangle)\}_{j,m}$ as
    \begin{align*}
        | L^N \rangle &= \sum_m \sum_j \hat{S}^{(m)} | L_j^{m} \rangle | X^m_j \rangle ,
    \end{align*}
    where $L_j^m$ and $X_j^m$ have the same properties as in Theorem \ref{prop:canonical_decomposition_RL}, meaning that $\langle L_j^m | L_k^m \rangle = 0$ and $|X_j^m \rangle \neq |X_k^m\rangle$ unless $j = k$.
\end{corollary}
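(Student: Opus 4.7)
The plan is to lift the canonical decomposition of Theorem \ref{prop:canonical_decomposition_RL} to the quantum level using the linearity of the substitution operator $\hat{S}^{(m)}$. First I would apply Theorem \ref{prop:canonical_decomposition_RL} to obtain the unique decomposition $L = \bigcup_m \bigcup_j S^{(m)}(L_j^m, X_j^m)$, and restrict it to words of length $N$. Because the $L_j^m$ are pairwise disjoint within each $m$, and the corresponding substitution blocks $S^{(m)}(L_j^m, X_j^m)$ are mutually disjoint subsets of $L$, every word $w \in L \cap \Sigma^N$ belongs to exactly one block. This allows me to split the defining sum of $\ket{L^N}$ block by block, yielding
\begin{equation*}
    \ket{L^N} = \sum_{m,j} \sum_{w \in S^{(m)}(L_j^m, X_j^m)\cap \Sigma^N} \ket{w}.
\end{equation*}

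Next I would show that the inner sum equals $\hat S^{(m)} | L_j^m \rangle | X_j^m \rangle$. Expanding by linearity and by the definition of $\hat{S}^{(m)}$ as the linear extension of $S^{(m)}$, I get $\hat S^{(m)} | L_j^m\rangle | X_j^m\rangle = \sum_{x\in L_j^m \cap (\Sigma_\infty\cup\{f\})^N}\sum_{y\in X_j^m} \hat{S}^{(m)}\ket{x}\ket{y}$, where each term on the right is the single word of length $N$ obtained by replacing the $m$ occurrences of $f$ in $x$ with the $m$ letters of $y$. The key sub-step is to verify that $(x,y)\mapsto S^{(m)}(x,y)$ is injective on $L_j^m\times X_j^m$: distinct templates $x$ differ at $\Sigma_\infty$-positions, while distinct $y$'s differ at $f$-positions, so no two pairs can give the same word. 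Hence each element of $S^{(m)}(L_j^m, X_j^m)\cap\Sigma^N$ is produced exactly once with coefficient one, matching the restricted sum above and establishing existence.

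For uniqueness, I would observe that every RLS determines its underlying RL as its computational-basis support, so any decomposition $\ket{L^N}=\sum_{m,j}\hat S^{(m)}\ket{L_j^m}\ket{X_j^m}$ that holds for all $N$ lifts to a decomposition $L=\bigcup_{m,j}S^{(m)}(L_j^m,X_j^m)$ satisfying the hypotheses of Theorem \ref{prop:canonical_decomposition_RL} (since orthogonality $\langle L_j^m|L_k^m\rangle=0$ forces $L_j^m\cap L_k^m=\emptyset$ and $\ket{X_j^m}\neq \ket{X_k^m}$ forces $X_j^m\neq X_k^m$). The uniqueness part of Theorem \ref{prop:canonical_decomposition_RL} then gives the claim. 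The main obstacle is the bookkeeping in the existence step: one has to be careful that $\hat S^{(m)}$, acting on the tensor product of the (length-$N$ restricted) template RLS and the length-$m$ alphabet-block RLS, reproduces each target word exactly once; the injectivity argument above is the pivotal fact, and everything else is a rearrangement of the sum.
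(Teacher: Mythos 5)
Your proof is correct and follows the route the paper intends: the paper gives no separate argument for this corollary, treating it as an immediate consequence of Theorem \ref{prop:canonical_decomposition_RL} once the languages are linearly embedded into states, and your existence and uniqueness steps (disjointness of the substitution blocks, injectivity of $(x,y)\mapsto S^{(m)}(x,y)$ so that all coefficients stay equal to one, and reduction of uniqueness to the set-level statement) are exactly the details being left implicit.
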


As the last step towards a canonical form of RLS, we need to look for a canonical choice of the MPS tensors for the RLS $\{(| L^m_j \rangle, | X^m_j \rangle)\}_{j,m}$ in Corollary \ref{cor:canonical_decomposition_RLS}. A RL can be accepted by many UFA's, each being a valid MPS representation according to Theorem \ref{lemma:NFAMPS}. However, we cannot use the traditional canonical form of MPS because it is not generally applicable to RLS, as previously mentioned. 

Automata theory provides a solution for this, by showing that any RL admits a canonical deterministic finite automaton (DFA). This canonical DFA is minimal and unique, up to relabelling of the internal states, among all DFA's accepting the language. Given an UFA of bond dimension $D$, the minimal equivalent DFA can be found in time $O(d D 2^D)$, \marta{or in $O(dD \log D)$ if the automaton is already deterministic} \cite{Hopcroft_1971, Hopcroft_2001_book_automata, Leung_2005_expsep-NFA-UFA}. Note that there might be UFA's or ansätze as the one of Eq. \eqref{eq:general_MPS-X} with rank$(X) \geq 1$ yielding smaller bond dimensions than the minimal DFA (see examples in Appendix \ref{app:E}). 

\begin{definition}[Canonical form of RLS] \label{def:canonical_form_RLS}
    \label{def:can_repr}
    The canonical MPS representation of $L_q$ thus corresponds to
    \begin{align} \label{eq:L^N_canonical_form_RLS}
        | L^N \rangle = \sum_m \sum_j
        \left(
        \begin{tikzpicture}[scale=.5, baseline={([yshift=-0.5ex]current bounding box.center)}, thick]
            \FullMPS{(0,0)}{\small $l_j^m$}{purple}
            \draw[fill=amaranth2!70] (-1.2,0) circle (0.2);
            \draw[fill=amaranth2!70] (6.2,0) circle (0.2);
            \FullMPS{(0,1.5)}{\small $\hat{x}_j^m$}{yellow}
            \draw[fill=amaranth] (-1.2,1.5) circle (0.2);
            \draw[fill=amaranth] (6.2,1.5) circle (0.2);
            \node at (3.2,2) {\scriptsize $N$ times};
        \end{tikzpicture}
        \right),
    \end{align}
    where the tensors $l_j^m, x_j^m$ are the canonical DFA tensors of $L_j^m, X_j^m$, and the MPO on top is defined as
    \begin{equation*}
        \begin{tikzpicture}[scale=.45, baseline={([yshift=0ex]current bounding box.center)}, thick]
            \MPSTensor{0,0}{\scriptsize $\hat{x}_j^m$}{yellow}
            \draw (0,-1) -- (0,-0.5);
        \end{tikzpicture}
        =  
        \begin{tikzpicture}[scale=.45, baseline={([yshift=-0.5ex]current bounding box.center)}, thick]
            \draw (-0.6,0) -- (0.6,0);
        \end{tikzpicture}
        \otimes
        \begin{tikzpicture}[scale=.45, baseline={([yshift=-0.5ex]current bounding box.center)}, thick]
            \draw (0,-1) -- (0,1);
            \filldraw[fill=purple] (-0.55,-1/2) -- (-0.55,1/2) -- (0.55,1/2) -- (0.55,-1/2) -- (-0.55,-1/2);
            \node at (0,0) {\scriptsize $\mathbb{P}_\infty$};
        \end{tikzpicture}
        + \begin{tikzpicture}[scale=.45, baseline={([yshift=-0.5ex]current bounding box.center)}, thick]
            \MPSTensor{0,0}{\scriptsize $x_j^m$}{yellow}
            \draw[fill=purple] (0,-1.15) circle (0.47);
            \node at (0,-1.15) {\scriptsize $\bra{f}$};
            \draw (0,-1.62) -- (0,-2);
        \end{tikzpicture},
    \end{equation*}
    where $\mathbb{P}_\infty := \sum_{i=0}^{|\Sigma_\infty|-1} \dyad{i}$.
\end{definition}
Therefore, the MPO effectively implements the action of $\hat{S}^{(m)}$, by letting the elements in $\Sigma_\infty$ stay the same, and substituting the $f$ symbols by the strings of $X_j^m$.

\marta{Let us illustrate Definition \ref{def:canonical_form_RLS} with our previous example $L_1 = 0^* 1 0^*$, which is canonically decomposed as $\{(L_1^1, X_1^1)\} \equiv \{(0^* f 0^*, 1)\}$, whose canonical DFAs are
\begin{align*}
    L_1^1 : \hspace{-2mm}
    \begin{tikzpicture}[scale=.2, baseline={([yshift=-3ex]current bounding box.center)}, thick]
        \tikzstyle{small state} = [state, minimum size=0pt, fill=purple,node distance=1.5cm,initial distance=1.5cm,initial text=$ $]
        \tikzset{->}
        \node[small state, initial] (q1) {\small $q_1$};
        \node[small state, accepting, right of=q1] (q2) {\small $q_2$};
        \draw (q1) edge[loop above] node{\scriptsize $0$} (q1)
        (q2) edge[loop above] node{\scriptsize $0$} (q2)
        (q1) edge[above] node{\scriptsize $f$} (q2);
    \end{tikzpicture} \ ,
    \ 
    X_1^1 : \hspace{-2mm}\begin{tikzpicture}[scale=.2, baseline={([yshift=-0.30ex]current bounding box.center)}, thick]
        \tikzstyle{small state} = [state, minimum size=0pt, fill=purple,node distance=1.5cm,initial distance=1.5cm,initial text=$ $]
        \tikzset{->}
        \node[small state, initial] (q1) {\small $q_1$};
        \node[small state, accepting, right of=q1] (q2) {\small $q_2$};
        \draw (q1) edge[above] node{\scriptsize $1$} (q2);
    \end{tikzpicture} \ ,
    \ 
\end{align*}
respectively. Using the correspondence in Theorem \ref{lemma:NFAMPS}, this results in the MPS tensors
\begin{align*}
    \begin{tikzpicture}[scale=.45, baseline={([yshift=-1.5ex]current bounding box.center)}, thick]
        \MPSTensor{0,0}{\scriptsize $l_1^1$}{purple}
        \node at (0,1.3) {\scriptsize $0$};
    \end{tikzpicture}
    = 
    {\scriptsize \begin{pmatrix}
        1 & 0 \\ 0 & 1
    \end{pmatrix} }
    , \ 
    \begin{tikzpicture}[scale=.45, baseline={([yshift=-1.5ex]current bounding box.center)}, thick]
        \MPSTensor{0,0}{\scriptsize $l_1^1$}{purple}
        \node at (0,1.35) {\scriptsize $f$};
    \end{tikzpicture}
    = 
    {\scriptsize \begin{pmatrix}
        0 & 1 \\ 0 & 0
    \end{pmatrix} }
    , \ 
    \begin{tikzpicture}[scale=.45, baseline={([yshift=-1.5ex]current bounding box.center)}, thick]
        \MPSTensor{0,0}{\scriptsize $x_1^1$}{yellow}
        \node at (0,1.3) {\scriptsize $1$};
    \end{tikzpicture}
    = 
    {\scriptsize \begin{pmatrix}
        0 & 1 \\ 0 & 0
    \end{pmatrix} }
    .
\end{align*}
and the boundary vectors are $\bra{v_l} = (1,0)$ and $\ket{v_r} = (0,1)^T$ for both. 

Therefore, the final tensors appearing in the $\ket{L_N}$ expression of Eq. \eqref{eq:L^N_canonical_form_RLS} are given by  
\begin{equation*}
    \begin{tikzpicture}[scale=.45, baseline={([yshift=0ex]current bounding box.center)}, thick]
        \MPSTensor{0,0}{\scriptsize $\hat{x}_1^1$}{yellow}
        \draw (0,-1) -- (0,-0.5);
    \end{tikzpicture}
    =  
    \begin{tikzpicture}[scale=.45, baseline={([yshift=-0.5ex]current bounding box.center)}, thick]
        \draw (-0.6,0) -- (0.6,0);
    \end{tikzpicture}
    \otimes
    \begin{tikzpicture}[scale=.45, baseline={([yshift=-0.5ex]current bounding box.center)}, thick]
        \draw[fill=purple] (0,-0.5) circle (0.4);
        \node at (0,-0.5) {\scriptsize $0$};
        \draw (0,-0.9) -- (0,-1.4);
        \draw (0,0.9) -- (0,1.4);
        \draw[fill=purple] (0,0.5) circle (0.4);
        \node at (0,0.5) {\scriptsize $0$};
    \end{tikzpicture}
    + \begin{tikzpicture}[scale=.45, baseline={([yshift=-0.5ex]current bounding box.center)}, thick]
        \MPSTensor{0,0}{\scriptsize $x_1^1$}{yellow}
        \draw (0,-1.55) -- (0,-2.05);
        \draw[fill=purple] (0,-1.15) circle (0.4);
        \node at (0,-1.15) {\scriptsize $f$};
    \end{tikzpicture},
\end{equation*}
which has the following non-zero entries:
\begin{equation*}
    \begin{tikzpicture}[scale=.45, baseline={([yshift=-0.5ex]current bounding box.center)}, thick]
        \MPSTensor{0,0}{\scriptsize $\hat{x}_1^1$}{yellow}
        \draw (0,-1) -- (0,-0.5);
        \node at (0,1.3) {\scriptsize $0$};
        \node at (0,-1.3) {\scriptsize $0$};
    \end{tikzpicture}
    = 
    {\scriptsize \begin{pmatrix}
        1 & 0 \\ 0 & 1
    \end{pmatrix}}, 
    \ 
    \begin{tikzpicture}[scale=.45, baseline={([yshift=-0.5ex]current bounding box.center)}, thick]
        \MPSTensor{0,0}{\scriptsize $\hat{x}_1^1$}{yellow}
        \draw (0,-1) -- (0,-0.5);
        \node at (0,1.3) {\scriptsize $1$};
        \node at (0,-1.3) {\scriptsize $f$};
    \end{tikzpicture}
    = 
    {\scriptsize \begin{pmatrix}
        0 & 1 \\ 0 & 0
    \end{pmatrix}}. 
\end{equation*}
}




\section{Fundamental theorem of RLS}

Let us now address the equivalence of two RLS under local unitary operations, using the canonical form of RLS. To achieve this, we restrict to the class of \textit{sparse} regular languages, which have a number of words that scales polynomially with the word length, i.e. $\| | L_N \rangle \| = O(\text{poly}(N))$. These languages are well-characterized \cite{Szilard_1992} and widely used due to the fact that their reduced complexity can simplify certain problems \cite{Berman_1977, Hartmanis_1980, Szilard_1992, Dalessandro_2006_sparse_contextfree_languages, Hoffmann_2021, fernau_2021_finiteautomataintersectionnonemptiness, Hoffmann_2021}. It can also be efficiently decided whether a language is sparse or not \cite{gawrychowski_2010_sparse_efficiently}. \marta{Note that the physically relevant examples \textit{(a)-(e)} introduced in Section \ref{sec:RLS} are all sparse.}


\begin{restatable}[Fundamental theorem of sparse RLS]{theorem}{propLU}\label{prop:LU}
    Given two sparse RLS $L_1, L_2$ with canonical forms $\{(L_{1,j}^m, X_{1,j}^m)\}_{j,m}$ and $\{(L_{2,j}^m, X_{2,j}^m)\}_{j,m}$, they are related as
    \begin{equation*}
        \ket{L_2^N} = U^{\otimes N} \ket{L_1^N}, \ \forall N ,
    \end{equation*}
    for some unitary $U$ if and only if there exists a relabeling $\pi$ of the $\Sigma_\infty$ symbols and a unitary $U_f \in \mathcal{U}_{|\Sigma_f|}$ such that
    \begin{equation*} \begin{cases}
            L_{2,j}^{m} = \pi(L_{1,j}^{m}), \\
            |X_{2,j}^{(m)}\rangle := U_f^{\otimes m} | X_{1,j}^{m} \rangle,
        \end{cases}
        \forall j,m,
    \end{equation*}
    upon some reordering of the canonical form elements.
\end{restatable}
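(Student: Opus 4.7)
I split the argument into the ``if'' direction (a direct verification) and the ``only if'' direction (three steps).

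For the ``if'' direction, I would take $U := P_\pi \oplus U_f$, with $P_\pi$ the permutation matrix of $\pi$ on $\text{span}(\Sigma_\infty)$ and $U_f$ the given unitary on $\text{span}(\Sigma_f)$, and apply $U^{\otimes N}$ term by term to the canonical form of $L_1$. The tensor-network structure of Definition~\ref{def:can_repr} cleanly separates the $N-m$ sites carrying $\Sigma_\infty$-symbols (on which $P_\pi^{\otimes(N-m)}$ relabels $\ket{L_{1,j}^m}$ into $\ket{\pi(L_{1,j}^m)} = \ket{L_{2,j}^m}$) from the $m$ sites substituted by $\hat{S}^{(m)}$ (on which $U_f^{\otimes m}$ maps $\ket{X_{1,j}^m}$ into $\ket{X_{2,j}^m}$ by hypothesis), and summing over $j,m$ reproduces the canonical form of $L_2$.

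For the ``only if'' direction, the first step is to establish the block structure $U = U_\infty \oplus U_f$ respecting the decomposition $\Sigma = \Sigma_\infty \cup \Sigma_f$. I would analyze the site-averaged single-site reduced density matrix $\rho_N := N^{-1}\sum_i \rho_N^{(i)}$ of $\ket{L_1^N}$: sparsity bounds the number of $\Sigma_f$-symbols per word by a constant, so the weight of $\rho_N$ on $\text{span}(\Sigma_f)$ is $O(1/N)$. The LU-equivalence implies $U\rho_N U^\dagger = \tilde{\rho}_N$ (the analogous object for $L_2$); since $L_2$ has the same $\Sigma_\infty$ by identical reasoning, the $N\to\infty$ limit forces $U(\text{span}(\Sigma_\infty)) = \text{span}(\Sigma_\infty)$, and hence the claimed block decomposition.

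The second step is to show that $U_\infty$ is a permutation matrix. Both $\ket{L_1^N}$ and $U^{\otimes N}\ket{L_1^N} = \ket{L_2^N}$ have binary coefficients in the computational basis for every $N$. The structural theorem of~\cite{Szilard_1992} expresses each sparse $L_{1,j}^m$ as a finite union of pattern-words $w_0 v_1^* w_1 \cdots v_k^* w_k$, so it contains arbitrarily long all-$a$ sub-blocks for some $a \in \Sigma_\infty$; demanding that $U_\infty^{\otimes n}$ preserve the binary-coefficient structure for all $n$ forces $U_\infty$ to be monomial, and requiring unit coefficients for every $N$ collapses it to a genuine permutation $P_\pi$. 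With $U = P_\pi \oplus U_f$ in hand, expanding $U^{\otimes N}\ket{L_1^N}$ through the canonical form of $L_1$ produces a sum $\sum_{j,m}\hat{S}^{(m)}\ket{\pi(L_{1,j}^m)}\bigl(U_f^{\otimes m}\ket{X_{1,j}^m}\bigr)$ whose pairs still satisfy the disjointness and distinct-$X$ conditions of Theorem~\ref{prop:canonical_decomposition_RL}; by uniqueness of the canonical decomposition, this must coincide with the canonical form of $L_2$ after reordering the index $j$, yielding the announced relations.

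The hard part is pinning down $U_\infty$ as a genuine permutation rather than a general unitary preserving $\text{span}(\Sigma_\infty)$: the binary-coefficient constraint has to be exploited jointly across all sizes $N$ while carefully tracking how distinct canonical-form terms can interfere under a block-diagonal $U$. The Szilard--Yu pattern structure of sparse regular languages appears to provide the necessary combinatorial handle.
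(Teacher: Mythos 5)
Your ``if'' direction and your final appeal to the uniqueness of the canonical decomposition are correct and match the paper. The problem is the ``only if'' direction, where both of your key steps have gaps.

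Step 1 (block structure from the averaged single-site reduced density matrix) does not work as stated. It is true that $\Tr[P_f\,\rho_N]=O(1/N)$ by sparsity, but to conclude $U(\mathrm{span}(\Sigma_\infty))=\mathrm{span}(\Sigma_\infty)$ from $\Tr[U^\dagger P_f U\,\rho_N]\to 0$ you need the limiting $\rho_\infty$ to have \emph{full} support on $\mathrm{span}(\Sigma_\infty)$, and this fails: a symbol can belong to $\Sigma_\infty$ (unbounded number of occurrences in \emph{some} words) while having vanishing average density. For $L=1^*\cup 0^*20^*$ one has $\Sigma_\infty=\{0,1\}$, $\Sigma_f=\{2\}$, yet the averaged reduced density matrix converges to $\dyad{0}$, since the single word $1^N$ contributes only $N$ occurrences of the symbol $1$ against $\Theta(N^2)$ total letters. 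The limit therefore says nothing about $U\ket{1}$, and in particular does not forbid $U$ from mixing $\ket{1}$ with $\ket{2}$. Step 2 is the actual crux of the theorem, and you assert rather than prove it: ``demanding that $U_\infty^{\otimes n}$ preserve the binary-coefficient structure forces $U_\infty$ to be monomial'' is exactly the statement that requires a quantitative argument (note that for purely finite languages, non-monomial unitaries \emph{can} map binary-coefficient states to binary-coefficient states, as the paper's Bell-pair example $11\cup 22\mapsto 12\cup 21$ shows, so sparsity must enter the estimate explicitly). The paper's proof does both steps at once with a single overlap bound: for each $x\in\Sigma_\infty$ pick words $s_x^N=s_1\,x^{N-a}\,s_2\in L_2$ (or $s_1(y_{i,j})^{*}s_2$ from the Szilard form) and write $1=\langle s_x^N|U^{\otimes N}|L_1^N\rangle$; if $|\mel{x}{U}{y}|\le\lambda<1$ for all $y\in\Sigma_\infty$, then since every word of $L_1$ has all but $O(1)$ letters in $\Sigma_\infty$ and there are only $\mathrm{poly}(N)$ of them, the right-hand side is $O(\mathrm{poly}(N)\,\lambda^{N-O(1)})\to 0$, a contradiction. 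This forces $\langle x|U$ to be concentrated on a single basis vector of $\Sigma_\infty$, which yields the permutation and the block decomposition $U=U_\pi\oplus U_f$ simultaneously. You correctly identified where the difficulty lies, but the argument you propose in its place does not close it.
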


The above theorem demonstrates that LU equivalence for all $N$ can be tested in terms of a constant number of finite-size sub-problems. The proof is provided in Appendix \ref{app:F}.

\marta{As an illustration of this result, let us understand which languages are related to the W-state language $L_1 = 0^* 1 0^*$ through local unitary operations. Since its canonical decomposition is $\{(L_1^1, X_1^1)\} = \{(0^* f 0^*, 1)\}$, and thus $\Sigma = \{0\}$, $\Sigma_f = \{1\}$, the only possible relabelling is either the identity or the swap (i.e. $\pi(0) = 1$ and $\pi(1) = 0$), and the unitary is 1-dimensional so it trivializes and $\ket{X_{1,1}^1} = \ket{X_{2,1}^1}$, meaning that any other language $L_2$ that is locally unitarily equivalent to $L_1$ for all $N$ must necessarily be either $L_2 = 0^* 1 0^*$ or $1^* 0 1^*$.}

There is a significant difference between the fundamental theorems of TI MPS and RLS. In the standard case of TI MPS, it enables us to translate local unitaries acting at the physical level into invertible matrices at the bond level. Therefore, the bond dimension does not change, which reflects the fact that local changes of basis do not affect the entanglement content of the state. However, this is not necessarily the case for RLS, as illustrated by the following example. Consider the finite languages $L_1, L_2$ with canonical representations $\{(ff, 11 \cup 22 \cup 31 \cup 32)\}$ and $\{(ff, 11 \cup 12 \cup 32 \cup 33)\}$. Even though they are related as $\ket{L_2} = U^{\otimes 2} \ket{L_1}$ for a unitary $U$, their minimal DFA's have bond dimensions 5 and 4, as shown below. 
\begin{align*}
    L_1 : 
    \begin{tikzpicture}[scale=.1, baseline={([yshift=0ex]current bounding box.center)}, thick]
    \tikzstyle{small state} = [state, minimum size=0pt, fill=purple,node distance=1.2cm,initial distance=1.5cm,initial text=$ $]
    \tikzset{->}
        \node[small state, initial] (q1) {\small $q_1$};
        \node[small state, right of=q1] (q3) {\small $q_3$};
        \node[small state, above of=q3] (q2) {\small $q_2$};
        \node[small state, below of=q3] (q4) {\small $q_4$};
        \node[small state, right of=q3, accepting] (q5) {\small $q_5$};
        \draw
        (q1) edge[above] node{\scriptsize $1$} (q2)
        (q1) edge[above] node{\scriptsize $2$} (q3)
        (q1) edge[above] node{\scriptsize $3$} (q4)
        (q2) edge[above] node{\scriptsize $1$} (q5)
        (q3) edge[above] node{\scriptsize $2$} (q5)
        (q4) edge[bend left, below] node{\scriptsize $1$} (q5)
        (q4) edge[bend right, above] node{\scriptsize $2$} (q5);
\end{tikzpicture}\ , \ 
    L_2 :  \begin{tikzpicture}[scale=.175, baseline={([yshift=0ex]current bounding box.center)}, thick]
    \tikzstyle{small state} = [state, minimum size=0pt, fill=purple,node distance=1.4cm,initial distance=1.5cm,initial text=$ $]
    \tikzset{->}
        \node[small state, initial] (q1) {\small $q_1$};
        \node[small state, above right of=q1] (q2) {\small $q_2$};
        \node[small state, below right of=q1] (q3) {\small $q_3$};
        \node[small state, below right of=q2, accepting] (q4) {\small $q_4$};
        \draw
        (q1) edge[above] node{\scriptsize $1$} (q2)
        (q1) edge[above] node{\scriptsize $3$} (q3)
        (q2) edge[bend left, below] node{\scriptsize $2$} (q4)
        (q2) edge[bend right, above] node{\scriptsize $1$} (q4)
        (q3) edge[bend left, below] node{\scriptsize $2$} (q4)
        (q3) edge[bend right, above] node{\scriptsize $3$} (q4);
\end{tikzpicture} \ .
\end{align*}

Since the size of the minimal DFA is a fundamental measure that quantifies the descriptional complexity of RL's \cite{brzozowski_2017_complexityRLs, Yu_1994_state_complexity_RLoperations}, we see that, even though LU operations do not change the entanglement content of RLS, they can still modify them in a way that is not captured merely by the bond dimension of the TI PBC canonical form.

\section{Outlook}

We have delved into the connection between tensor networks and regular languages, exploring a potential cross-fertilization of ideas and tools between them. Indeed, we have seen how RL techniques can aid in studying physically relevant families of states and in answering open questions about MPS. In this regard, we characterized MPS representing RLS and TI RLS, and developed a canonical form for them, since the standard canonical form of MPS is not generally applicable to RLS. Then, we used it to study the LU-equivalence of sparse RLS, thus making a first step towards both a phase classification for RLS, and a generalized understanding of their LU- and SLOCC-equivalence, a key task that is generally difficult for arbitrary multipartite quantum states \cite{li_2018_slocc2}. Additional examples are provided in Appendix \ref{app:G}. 

One could also consider RLS with positive or complex coefficients. The resulting states would be linked to the areas of probabilistic graphical models and weighted finite automata (WFA), whose connection with tensor networks has been explored in previous works \cite{crosswhite_2008, orus_2017, Glasser_2019, li_2022_WFAMPS, riechers_2025_identifiabilityminimalityboundsquantum}. 
\marta{These, in turn, are commonly used in machine learning contexts: not only for probabilistic modeling, but also to improve the interpretability of black-box models and as the basis of learning algorithms for recurrent architectures \cite{li_2022_WFAMPS, okudono_weighted_2020, weiss_learning_2019, ayache_2019_wfa-explain-blackboxes, liu_2023_transformers-krohn-rhodes}. Therefore, it would be interesting to clarify the relationship between canonical forms developed in the WFA literature (e.g. the minimal form or the Krohn-Rodes decomposition \cite{Berstel_2010_noncommutative-rational-series, krohn_rhodes_1965}) and both our canonical decomposition and the standard MPS canonical form. A systematic understanding of these connections could enhance the interpretability of WFA-based representations and offer new tools to extend the LU-equivalence result of Theorem~\ref{prop:LU} beyond the sparse RLS setting and to SLOCC operations.} 

The next natural question that arises is whether there is a generalization of the RLS framework to two dimensions, or to broader types of languages such as the so-called context-free languages \cite{gopalakrishnan_2023_pushdownautomatasequentialgenerators, Zhang_2025_2D-super-area-law-states}. Regarding the two-dimensional setting, the concept of \textit{regular pictures} in 2D has been explored in the literature, aiming to retain the properties of 1D regular languages. Among the proposed extensions, the \textit{2D online tessellation automaton} (OTA) stands out for having an equivalent characterization in terms of 2D regular expressions, as well as other desirable properties \cite{Inoue_1977_DOTA_NOTA, Giammarresi_1997, Zhang_2025_2D-super-area-law-states}. OTA, in turn, can be formulated as PEPS with tensors of a special, simple form (see Appendix \ref{app:H}). However, unlike 1D RLS, there is no efficient characterization of the class of PEPS representing 2D RLS, nor a canonical form, since these problems are undecidable even when restricted to the very special form of the OTA PEPS tensors \cite{Inoue_1977_DOTA_NOTA, Anselmo_2006_UOTA}.

\marta{This work lays the foundation for extending MPS theory beyond RLS. Following on this, \cite{florido-llinas_2025_MPS-X} introduces a generalized canonical form for uniform MPS with an arbitrary boundary matrix, in which any such MPS is decomposed in terms of a backbone family of states called \textit{algebraic RLS} that encodes its long-range, scale-invariant information. Algebraic RLS slightly generalize a subclass of the RLS introduced in this work, and inherit both their structural properties and the canonical decomposition developed here. This could enable extending results in the quantum many-body literature that are currently restricted to standard MPS or to highly specific families such as W- or Dicke-type MPS (e.g. the parent Hamiltonian construction for ground states \cite{garre-rubio_2025_mpsstabilityintersectionproperty} or for eigenstates that arise as quantum many-body scars \cite{gioia_2025_distincttypesparenthamiltonians}).}


\marta{Moreover, identifying RLS as a structured and physically relevant class of quantum states, and formulating a theoretical framework for them, allows designing a compiler that explicitly exploits their automaton-like structure to yield more efficient quantum circuits for the preparation of RLS than current general-purpose or sparse-state compilers \cite{bellante_2025_compiling-RLS}. Since state preparation is a central subroutine in many quantum algorithms, improvements in this step can have a significant practical impact. 

Beyond circuit synthesis, it remains an open direction for future work to determine whether specific families of RLS may offer advantages in certain tasks, in analogy with how GHZ states and their generalizations (which are RLS themselves) are exploited in many quantum information protocols and in quantum sensing \cite{Pan_2012_GHZ, erhard_2020_advances-ghz, pezze_2018_qmetrology-ghz}.}

\section*{Acknowledgements}

MFL thanks Georgios Styliaris for helpful discussions and acknowledges support from the International Max-Planck Research School for Quantum Science and Technology (IMPRS-QST). This research is part of the Munich Quantum Valley (MQV), which is supported by the Bavarian State Government with funds from the High-tech Agenda Bayern Plus. AMA acknowledges support from the Spanish Agencia Estatal de Investigacion through the grants ``IFT Centro de Excelencia Severo Ochoa SEV-2016-0597" and ``Ram\'on y Cajal RyC2021-031610-I'', financed by MCIN/AEI/10.13039/501100011033 and the European Union NextGenerationEU/PRTR. DPG acknowledges support from the Spanish Ministry of Science and Innovation MCIN/AEI/10.13039/501100011033 (grants CEX2023-001347-S and PID2020-113523GB-I00). This work has been financially supported by the Ministry for Digital Transformation and of Civil Service of the Spanish Government through the QUANTUM ENIA project call - Quantum Spain project, and by the European Union through the Recovery, Transformation and Resilience Plan – NextGenerationEU within the framework of the Digital Spain 2026 Agenda.

\appendix

\section{Proof of Theorem \ref{lemma:NFAMPS} connecting RLS and MPS} \label{app:A}

\lemmaNFAMPS*

\begin{proof}
    Kleene's theorem asserts that every RL can be equivalently defined either through a regular expression or a finite automaton \cite{Kleene_1956}. Then, given a RL $L$ and a finite automaton that accepts it, $\mathcal{F} = \langle Q, \Sigma, \delta, I, F \rangle$, if we define tensor $A$ and vectors $v_l, v_r$ according to Eq. \eqref{eq:associate_MPS_to_NFA1}, 
    then the language $L$ is equal to 
    \begin{equation*}
        L = \big\{ x_1 x_2 \dots x_N \in \Sigma^* \mid 
        \begin{tikzpicture}[scale=.45, baseline={([yshift=-2.5ex]current bounding box.center)}, thick]
            \FullMPS{0,0}{$A$}{purple}
            \draw[fill=amaranth] (-1.3,0) circle (0.4);
            \node at (-1.3,0) {\scriptsize $v_l$};
            \draw[fill=amaranth] (6.3,0) circle (0.4);
            \node at (6.3,0) {\scriptsize $v_r$};
            \draw[fill=yellow] (0,1.3) circle (0.4);
            \node at (0,1.3) {\scriptsize $x_1$};
            \draw[fill=yellow] (1.5,1.3) circle (0.4);
            \node at (1.5,1.3) {\scriptsize $x_2$};
            \draw[fill=yellow] (5,1.3) circle (0.4);
            \node at (5,1.3) {\scriptsize $x_N$};
        \end{tikzpicture}
        > 0
        \big\},
    \end{equation*}
    and the states $\ket{L_N}$ defined in Eq. \eqref{eq:familyRLMPS1} are
    \begin{equation*}
        \ket{L_N} = \sum_{w \in L \cap \Sigma^N} c_w \ket{w},
    \end{equation*}
    where $c_w$ equals the number of accepting paths for word $w$ through the automaton. Therefore, when $\mathcal{F}$ is an UFA, all of the coefficients $c_w$ are one, and thus the family of states $\{\ket{L_N}\}_N$ is a RLS.  
\end{proof}

{}

\section{Proof of Lemma \ref{lemma:UFAMPS} to determine which MPS are RLS} \label{app:UFAMPS} \label{app:B}

\UFAMPS*

\begin{proof}
    The proof relies on Proposition 1.15 in \cite{Sakarovitch2009}. First, we need some preliminary definitions.

    An automaton $\mathcal{F}$ is said to be \textit{trim}, if all of its states are \textit{useful}. This means that each state is accessible from at least one of the initial states, and at least one path that reaches one of the accepting states arises from it. 
    
    Equivalently, state $n$ is useful if there exist two strings $v, w \in \Sigma^*$ such that
    \begin{equation} \label{eq:def_useful}
        \mel{v_l}{A^v}{n} \neq 0 \text{ and } \mel{n}{A^w}{v_r} \neq 0. 
    \end{equation}
    where tensor $A$ and vectors $v_l, v_r$ are obtained from $\mathcal{F}$ according to Theorem \ref{lemma:NFAMPS}, and $A^y := A^{y_1} A^{y_2} \dots A^{y_l}$ for any string $y := y_1 y_2 \dots y_l \in \Sigma^*$.

    Therefore, $n$ is useless if and only if
    \begin{equation} \label{eq:useless_cond}
        \ket{n} \in (\mathcal{A}^T \ket{v_l})^\perp \cup (\mathcal{A} \ket{v_r})^\perp ,
    \end{equation}
    where $\mathcal{A} = \text{Alg}(\{A^x\})$. To assert this, we used the facts that $\mathcal{A}$ admits a basis $\{A^{w_1}, \dots, A^{w_M}\}$ in terms of a set of words $w_1, \dots, w_M \in \Sigma^*$ (it is enough to consider $w_i$ of length upper bounded by $D^2 - 1$ \cite{Pappacena_1997_length-algebra}), and also $\mathcal{A}^\dagger = \mathcal{A}^T$ because there are no complex entries.

    On the other hand, the \textit{product automaton} of $\mathcal{F}$, denoted as $\mathcal{F} = \langle Q \times Q, \Sigma, \delta', I \times I, F \times F \rangle$, is the Cartesian product of $\mathcal{F}$ with itself, where 
    \begin{equation*}
        (k, l) \in \delta'((i,j), x) \iff k \in \delta(i, x) \text{ and } l \in \delta(j, x).
    \end{equation*}
    The MPS tensors $A', v_l', v_r'$ associated to $\mathcal{F} \times \mathcal{F}$ are thus $(A')^x = A^x \otimes A^x$, $\bra{v'_l} = \bra{v_l} \otimes \bra{v_l}$ and $\ket{v'_r} = \ket{v_r} \otimes \ket{v_r}$. The \textit{diagonal part} of $\mathcal{F} \times \mathcal{F}$ is the sub-automaton that arises from restricting to the set of internal states $\{(i,i)\}_{i \in Q}$. 
    
    By defining the vector subspaces $\mathcal{V}_L := \{( a^T \ket{v_l} )^{\otimes 2} \mid a \in \mathcal{A} \}$ and $\mathcal{V}_R := \{(a \ket{v_r})^{\otimes 2} \mid a \in \mathcal{A}\}$, Eq. \eqref{eq:useless_cond} tells us that state $(m, n)$ in $\mathcal{F} \times \mathcal{F}$ being useless is equivalent to
    \begin{equation} \label{eq:lemma2ufa}
        \ket{m} \otimes \ket{n} \in (\mathcal{V}_L)^\perp \cup (\mathcal{V}_R)^\perp.
    \end{equation}
    To prove the first part of the lemma, we can now use Proposition 1.15 in \cite{Sakarovitch2009}, which states that a finite automaton $\mathcal{F}$ is unambiguous if and only if the trim part of $\mathcal{F} \times \mathcal{F}$ is equal to its diagonal part, meaning that all states $(m,n) \in Q \times Q$ with $m \neq n$ are \textit{useless}. Therefore, the MPS is a RLS if and only if Eq. \eqref{eq:lemma2ufa} is true for all $m \neq n$.

    Finally, we show that this condition can be checked in time $O(dD^3)$ in Algorithm \ref{algorithm:check_MPS_RLS} box. The procedure we propose scales better than the standard way to check unambiguity according to \cite{Sakarovitch2009}, since it does not require the construction of the product automaton. The latter runs in time $O(m^2)$, where $m$ is the number of transitions in the NFA, which scales at worst as $O(d^2 D^4)$ since $m \leq dD^2$.

    \begin{algorithm}[h!] 
      \SetAlgoLined
      \caption{Check if an MPS is a RLS} \label{algorithm:check_MPS_RLS}
      \BlankLine
      \small
      \KwData{NFA $\mathcal{F} = \langle Q, \Sigma, \delta, I, F \rangle$ with $|Q| = D$, $|\Sigma| = d$.}
      \KwResult{Decide if the MPS associated to $\mathcal{F}$ is a RLS in time $O(dD^3)$.}
      \BlankLine
      \Begin
      {    
        $\{\ket{l_i}\}_i \gets$ Basis of $\mathcal{A}^T \ket{v_l}$ \;
        
        $\{\ket{r_i}\}_i \gets$ Basis of $\mathcal{A} \ket{v_r}$ \;

        \tcp{These bases can be found in time $O(dD^3)$ \cite{Kiefer_2011}.}
        
        \ForEach{$i \in \{1, \dots, D\}, \ket{l_j}, \ket{r_j}$}{
            $L_{ij} \gets \braket{i}{l_j}$ \;
            $R_{ij} \gets \braket{i}{r_j}$ \;
        }
        
        \tcp{Runtime of the block above: \hspace{-2mm}$O(D^3)$)}
        
        \ForEach{$m \neq n$, $m,n \in \{1, \dots, D\}$}{
            \lIf{$L_{mi} L_{ni} = 0$, $\forall i$}{$\ket{m} \otimes \ket{n} \in (\mathcal{V}_L)^\perp$} 

            \lIf{$R_{mi} R_{ni} = 0$, $\forall i$}{$\ket{m} \otimes \ket{n} \in (\mathcal{V}_R)^\perp$}

            \If{$\ket{m} \otimes \ket{n} \notin (\mathcal{V}_L)^\perp  \cup (\mathcal{V}_R)^\perp$}{
                \Return The MPS is not a RLS \;
            }
        }

        \tcp{Runtime of the block above: \hspace{-2mm}$O(D^3)$}
        
        \Return The MPS is a RLS \;
    }     
    \end{algorithm} 
\end{proof}

\section{Proof of Lemma \ref{lemma:TI} and Corollary \ref{corollary:TI} about translationally invariant RLS} \label{app:C}

\lemmaTI*

\begin{proof}
    First, let us reformulate the TI condition as follows. Given the state in Eq. \eqref{eq:general_MPS-X}
    \begin{equation*}
        \ket{\psi_N(X,A)} = \sum_{i_1, \dots, i_N} \Tr[X A^{i_1} A^{i_2} \dots A^{i_N}] \ket{i_1 i_2 \dots i_N},
    \end{equation*}
    it is TI for all $N$, if and only if, for all $r \leq N-1$, and for all values of $i_1, \dots, i_N \in \{0, \dots, d-1\}$, it holds that
    \begin{align*}
        \Tr[X A^{i_1} A^{i_2} \dots A^{i_N}] = \Tr[X A^{i_{r+1}} \dots A^{i_N} A^{i_1} \dots A^{i_r}]
    \end{align*}
    Regrouping the terms together, this is equivalent to
    \begin{equation} \label{eq:TI1}
        \Tr[X[ A^{i_1} \dots A^{i_r}, A^{i_{r+1}} \dots A^{i_N}]] = 0, \ \forall i_1, \dots, i_N.
    \end{equation}

    On the other hand, the algebra of the MPS matrices, $\mathcal{A} := \text{Alg}(\{A^i\})$, admits a basis of the form $\{A^{w_1}, \dots, A^{w_M}\}$, where $A^{w_i} := A^{w_{i,1}} \dots A^{w_{i,l_i}}$ and we can take $l_i \leq D^2 - 1$ \cite{Pappacena_1997_length-algebra}.

    Then, for any elements $a, b \in \mathcal{A}$, we know that there exist $\alpha_i, \beta_j \in \mathbb{C}$ such that
    \begin{equation*}
        a = \sum_i \alpha_i A^{w_i}, \ b = \sum_j \beta_j A^{w_j}. 
    \end{equation*}
    Therefore, $\ket{\psi_N(X,A)}$ are TI for all $N$ if and only if
    \begin{align*}
        &\Tr[X[a,b]] = \sum_i \sum_j \alpha_i \beta_j \Tr[X[A^{w_i}, A^{w_j}]] \\
        &= \sum_i \sum_j \alpha_i \beta_j \underbrace{\Tr[X[A^{w_{i,1}} \dots A^{w_{i,l_i}}, A^{w_{j,1}} \dots A^{w_{j,l_j}}]]}_{= 0 \text{ by Eq. } \eqref{eq:TI1}} \\
        &= 0, \quad \forall a, b \in \mathcal{A}.
    \end{align*}

\end{proof}

\corollaryTI*

\begin{proof}
    The first part of the statement is a direct application of Lemma \ref{lemma:TI}. To prove the runtime, we outline a scheme to check shift invariance in Algorithm \ref{alg:check_TI} box.

    \begin{algorithm}[h!] 
      \SetAlgoLined
      \caption{Decide if a RL is shift-invariant} \label{alg:decide_UFA}
      \label{alg:check_TI}
      \BlankLine
      \small
      \KwData{RL $L$ and any MPS representation $\{A, v_l, v_r\}$, with bond dimension $D$ and $|\Sigma| = d$.}
      \KwResult{Decide if $L$ is shift-invariant in time $O(dD^3)$.}
      \BlankLine
      \Begin
      {
        $\mathcal{S} := \{(w_i, \bra{l_i}, \ket{r_i})\} \gets$ Set of triples consisting of strings $w_i \in \Sigma^*$ and vectors $\bra{l_i} := \bra{v_l} A^{w_i}$, $\ket{r_i} := A^{w_i} \ket{v_r}$, which fully span $\bra{v_l} \mathcal{A}$ and $\mathcal{A} \ket{v_r}$, respectively. 
        
        \tcp{One can compute the set $\mathcal{S}$ by finding bases of $\bra{v_l} \mathcal{A}$ and $\mathcal{A} \ket{v_r}$, which is doable in time $O(dD^3)$ \cite{Kiefer_2011}. Note that $\mathcal{S}$ contains at most $2D$ elements.}
        
        \ForEach{$(w_i, \bra{l_i}, \ket{r_i}), (w_j, \bra{l_j}, \ket{r_j}) \in \mathcal{S}$}{
            \lIf{$\braket{l_i}{r_j} \neq \braket{l_j}{r_i}$}{\Return $L$ is not shift-invariant}
        }
        \tcp{Runtime of the block above: \hspace{-2mm}$O(D^3)$.}
        
        \Return $L$ is shift-invariant \;
    }
    \end{algorithm} 
\end{proof}

Note that a naive automata implementation to determine if a RL $L$ is shift-invariant can result in an algorithm scaling much worse with the bond dimension, $O(dD^2 4^{D^2})$. We show this in Algorithm \ref{alg:check_TI_naive} box. The improvement of our scheme relies on the fact that, while the construction of an equivalent DFA leads to an exponential overhead in the naive automata implementation, we can avoid doing so by leveraging Lemma \ref{lemma:TI}.

\begin{algorithm}[h!] 
      \SetAlgoLined
      \caption{Decide if a RL is shift-invariant (naive automata implementation)}
      \label{alg:check_TI_naive}
      \BlankLine
      \small
      \KwData{RL $L$ and any NFA representation with $D$ internal states and $|\Sigma| = d$.}
      \KwResult{Decide if $L$ is shift-invariant in time $O(d D^2 4^{D^2})$.}
      \BlankLine
      \Begin
      {
        $L' \gets \text{shift}(L) := \{vu \mid uv \in L\}$, through an NFA with at most $2D^2 + 1$ internal states \cite{Jiraskova_Okhotin_2008} \tcp*[l]{Runtime: \hspace{-2mm}$O(D^2)$}

        Construct an equivalent DFA accepting $L \gets$ Can be done with Rabin-Scott powerset construction, resulting in a DFA with at most $2^D$ states \cite{rabin_1959_powerset} \tcp*[l]{Runtime: \hspace{-2mm}$O(d D 2^D)$.}

        Construct an equivalent DFA accepting $L' \gets$ Results in at most $2^{2D^2+1}$ states \tcp*[l]{Runtime: \hspace{-2mm}$O(d D^2 4^{D^2})$.}

        \eIf{$L = L'$}{
            \Return $L$ is shift-invariant \;
        }{
            \Return $L$ is not shift-invariant \;
        }
        \tcp{Equivalence checking of these DFA's can be done in time $O(d D^2 4^{D^2})$ \cite{hopcroft_1971_linear_equiv}.}
    }
    \end{algorithm}

\section{Proof of Theorem \ref{prop:canonical_decomposition_RL} about the existence and uniqueness of a canonical form for RL's} \label{app:D}

\marta{
We first recall the notion of tree width for finite automata. A partial computation of a word $w$ on an automaton $\mathcal F$ is a path that reads $w$ for as long as transitions are defined, terminating once no outgoing transition exists. We denote by $\mathrm{tw}(\mathcal F,w)$ the total number of such partial computations.
\begin{definition}[Tree width]
    The \emph{tree width} of automaton $\mathcal{F}$ is defined as 
    \begin{equation*}
        \mathrm{TW}(\mathcal{F}) := \sup_{w \in \Sigma^*} \mathrm{tw}(\mathcal{F},w) \, .
    \end{equation*}
\end{definition}
By definition, any DFA has tree width 1. In contrast, even unambiguous NFAs may have infinite tree width. A simple example is the following automaton, 
\begin{equation} \label{eq:nondeterm_transition-cycle}
    \mathcal{F}: \begin{tikzpicture}[scale=.2, baseline={([yshift=-3ex]current bounding box.center)}, thick]
        \tikzstyle{small state} = [state, minimum size=0pt, fill=purple,node distance=1.5cm,initial distance=1.5cm,initial text=$ $]
        \tikzset{->}

        \node[small state, initial, accepting] (q1) {\small $q_1$};
        \node[small state, right of=q1] (q2) {\small $q_2$};

        \draw (q1) edge[loop above] node{\scriptsize $0,1$} (q1)
        (q1) edge[bend left, above] node{\scriptsize $0$} (q2);
    \end{tikzpicture} \ ,
\end{equation}
which provides an unambiguous representation of the language $(0\cup1)^*$. The presence of a nondeterministic transition on a cycle implies infinite tree width \cite{Palioudakis_2012_tree-width}: for any $\ell \in \mathbb N$, the word $w=0^\ell$ admits $\ell+1$ partial computations, namely the accepting path $q_1^\ell$ and the incomplete paths $q_1^{i}q_2$ for $i=0,\dots,\ell-1$.

The relevance of tree width in what follows stems from its role in bounding the cost of determinization. The standard Rabin–Scott powerset construction~\cite{rabin_1959_powerset} transforms an NFA with $D$ states into an equivalent DFA with at most $2^D$ states, in time $O(dD2^D)$ in the worst case. However, when the NFA has finite tree width~$k$, the resulting DFA has only $O(D^k)$ states and can be constructed in time $O(dD^{k+1})$~\cite{Palioudakis_2012_tree-width}. This distinction underlies the polynomial-versus-exponential complexity behavior in the canonical decomposition algorithm.}

\propcanRL*
\begin{proof}
    To begin with, we see that a decomposition with such properties is unique. Indeed, assume that $L$ admits two such decompositions, $\{(L_{1,j}^m, X_{1,j}^m)\}_{j,m}$ and $\{(L_{2,j}^m, X_{2,j}^m)\}_{j,m}$. 
    
    First, we see that $L_{1,j}^{m} = L_{2,j}^{m}$ upon some relabelling of the elements. Suppose that there exist $v, w \in L_{2,k}^m$ such that $v \in L_{1,i}^m$ and $w \in L_{1,j}^m$, for some $i \neq j$. This would imply that $X_{1,i}^m = X_{2,k}^m = X_{1,j}^m$, which is not possible unless $i = j$. Due to the symmetry of the argument under exchange of $L_1, L_2$, it follows that $L_{1,j}^{m} = L_{2,j}^{m}$ upon relabelling, and as a consequence, $X_{1,j}^m = X_{2,j}^m$.

    Now, we show that it is always possible to find the canonical decomposition of $L$ by providing an explicit algorithm to do so. We assume that we are given an NFA representation of $L$ of bond dimension $D$. 

    First of all, we find for each $m$ all finite words $x_1 \dots x_m \in \Sigma_f^m$ such that the languages
    \begin{equation} \label{eq:def_Lx1..xm}
        L_{x_1 \dots x_m} := T_{\Sigma_f \to f} (L \cap \Sigma_\infty^* x_1 \Sigma_\infty^* x_2 \dots \Sigma_\infty^* x_m \Sigma_\infty^*)
    \end{equation}
    are non-empty, where $T_{\Sigma_f \to f}$ replaces all symbols in $\Sigma_f$ by the single symbol $f$. The intersection in Eq. \eqref{eq:def_Lx1..xm} can be obtained through the product automaton construction \cite{Sakarovitch2009}, resulting in a bond dimension of at most $(m+1)D$, and computable in time $O(t_1 t_2)$, where $t_i$ is the number of edges in the NFA. Note that $t_1 \leq dD^2$ for the NFA accepting $L$, and $t_2 = (m+1)|\Sigma_\infty|+m$ \cite{Sakarovitch2009}.

    Moreover, the substitution $T_{\Sigma_f \to f}$ can also be done in time $O(t_1 t_2)$, since we just need to go through the $t_1 t_2$ edges of the product automaton and change the ones with label in $\Sigma_f$ by the output symbol $f$. 
    
    Therefore, the language can be expressed as 
    \begin{equation*} \small
        L = \cup_m \cup_{x_1 \dots x_m} S^{(m)} ( L_{x_1 x_2 \dots x_m}, x_1 x_2 \dots x_m ).
    \end{equation*}
    However, this decomposition does not satisfy the desired properties yet, since $\{L_{x_1 \dots x_m}\}$ are not necessarily pairwise disjoint. For example, for the language $L = 0^* 1 0^* 2 0^* \cup 0^* 2 0^* 1 0^*$, we would have that $L_{12} = L_{21} = 0^* f 0^* f 0^*$. We proceed to define a new partition $\{L_j^m\}$ of pairwise disjoint RL's, with the desirable property that $L_j^m \cap L_{x_1 \dots x_m} \neq \emptyset \implies L_j^m \subseteq L_{x_1 \dots x_m}$. 

    For each $m$, let $\mathcal{S} := \{S_1, S_2, \dots\}$ be the power set of the words $\{x_1 \dots x_m \in \Sigma_f^m \mid L_{x_1 \dots x_m} \neq \emptyset\}$. Then, the sets defined as
    \begin{equation} \label{eq:sets_Li_intersection}
        L_i^m := \left( \cap_{x_1 \dots x_m \in S_i} L_{x_1 \dots x_m} \right) \cap \left( \cap_{x_1 \dots x_m \notin S_i} L_{x_1 \dots x_m}^c  \right),
    \end{equation}
    where $L^c$ denotes the complement of $L$, achieve the desired properties, and can be computed again using automata tools \cite{hopcroft_1971_linear_equiv}. We neglect the $L_i^m$ that are empty. For clarity, an example of this partition appears in Figure \ref{fig:example_partition}.

    \begin{figure}[h]
    \centering
    \resizebox{0.6\columnwidth}{!}{%
    \begin{tikzpicture}[scale=0.3, baseline={([yshift=0ex]current bounding box.center)}, thick]
    \tikzset{venn circle/.style={draw=black,text opacity=1,fill opacity=0.25,circle,minimum width=3.5cm,fill=#1,line width=0.75}}
    \tikzset{label/.style={text width=1.5cm}}
    \begin{scope}
          \node [venn circle = blue] (A) at (0,0) {};
          \node [label] (A1) at (-2,-6) {$L_{x_1 \dots x_m}$};
          \node [venn circle = darkamaranth] (B) at (5,0) {};
          \node [label] (B1) at (9.5,-6) {$L_{y_1 \dots y_m}$};  
          \node [venn circle = darkyellow] (C) at (2.5,5) {};
          \node [label] (C1) at (3,11.5) {$L_{z_1 \dots z_m}$};

        \node at (2.75,2) {$L_1^m$};
        \node at (2.75,-3) {$L_2^m$};
        \node at (6.5,4) {$L_3^m$};
        \node at (-1,4) {$L_4^m$};
        \node at (-2.5,-1.5) {$L_5^m$};
        \node at (8,-1.5) {$L_6^m$};
        \node at (2.75,8) {$L_7^m$};
        \end{scope}
    \end{tikzpicture}}
    \caption{Example of how the partition in Eq. \eqref{eq:sets_Li_intersection} looks like in the proof of Theorem \ref{prop:canonical_decomposition_RL}, given three initial sets $L_{x_1 \dots x_m}, L_{y_1 \dots y_m}, L_{z_1 \dots z_m}$.}
    \label{fig:example_partition}
\end{figure}
    
    Then, we can rewrite the language as
    \begin{equation*} 
        L = \cup_m \cup_j S^{(m)} ( L_j^m , X_j^m ), 
    \end{equation*}
    where $X_j^m := \cup_{x_1 \dots x_m \in S_j} \ x_1 \dots x_m$. Note that, by construction, $X_j^m \neq X_k^m$ unless $j = k$. Therefore, the desired conditions on $L_j^m, X_j^m$ are satisfied. 

    \marta{We show in Algorithm \ref{alg:can_decomp} that this scheme to find the canonical decomposition has a worst-case complexity of $O(M(d2^{1+2(M+1)D})^{d^M})$ when the input automaton is an NFA of infinite tree width, which scales with the bond dimension $D$ as $O(4^{(m+1) d^M D})$. The exponential dependence on $D$ originates entirely from the determinization step in Algorithm \ref{alg:can_decomp}, which is required to construct the complements $L_{x_1 \dots x_m}^c$.
    
    We now show that, when the input automaton accepting $L$ has finite tree width $k$, the overall scaling becomes polynomial in $D$, rather than exponential. The central observation is that the tree width of the automata recognizing $L_{x_1 \dots x_m}$ (defined in Eq. \eqref{eq:def_Lx1..xm}) can be uniformly bounded by $k$, for each $x_1, \dots, x_m \in \Sigma_f$. 

    We begin with the intermediate language $\tilde{L}_{x_1 \dots x_m} := L \cap \Sigma_\infty^* x_1 \Sigma_\infty^* x_2 \dots x_m \Sigma_\infty^*$. The intersection does not increase the tree width. Indeed, one has $\mathrm{TW}(L_1 \cap L_2) \leq \mathrm{TW}(L_1) \cdot \mathrm{TW}(L_2)$, and the automaton recognizing $\Sigma_\infty^* x_1 \Sigma_\infty^* x_2 \dots x_m \Sigma_\infty^*$ has tree width 1 whenever $x_i \in \Sigma_f$ for all $i$.

    Therefore, after trimming useless states, which can be done in time $O(dD^2)$ \cite{Sakarovitch2009}, we obtain an NFA for $\tilde{L}_{x_1\dots x_m}$ of size at most $(m+1)D$ and tree width at most $k$.\cite{Sakarovitch2009}.

    Next, we analyze the effect of substituting the symbols of $\Sigma_f$ by the single letter $f$ in $L_{x_1 \dots x_m} = T_{\Sigma_f \to f}(\tilde{L}_{x_1 \dots x_m})$. Replacing symbols in $\Sigma_\infty$ could introduce nondeterministic transitions on cycles (illustrated e.g. in Eq. \eqref{eq:nondeterm_transition-cycle}) and thus infinite tree width \cite[Cor. 3.1]{Palioudakis_2012_tree-width}. This cannot occur here: by definition of $\Sigma_f$, symbols in $\Sigma_f$ appear only a bounded number of times in any word of $L$, so no cycle of the automaton can carry a transition labeled by $\Sigma_f$. Consequently, substituting them by the single symbol $f$ cannot generate infinitely many partial computations, and the tree width remains finite. 

    More precisely, fix any string $w \in (\Sigma_\infty \cup \{f\})^*$, which we write without loss of generality as
    \begin{equation*}
        w = s_0 f s_1 f s_2 \dots f s_p
    \end{equation*}
    for some $p \in \mathbb{N}$, $y_i \in \Sigma_f$ and $s_i \in \Sigma_\infty^*$. The number of partial computations of $w$ in the automaton for $L_{x_1 \dots x_m}$ coincides with the number of partial computations of the string
    \begin{equation*}
        \tilde{w} :=
        \begin{cases}
            s_0 x_1 s_1 \dots x_p s_p & \text{if } p \leq m , \\ 
            s_0 x_1 s_1 \dots x_m s_m y_{m+1} s_{m+1} \dots y_p s_p & \text{if } p > m ,
        \end{cases}
    \end{equation*}
    in the trimmed automaton for $\tilde{L}_{x_1 \dots x_m}$, for any arbitrary choice of $y_{m+1}, \dots, y_p \in \Sigma_f$. Indeed, once a symbol different from the prescribed $x_i$ is encountered, the trimmed automaton has no outgoing transition and the computation halts. It follows that
    \begin{equation*}
        \mathrm{TW}(L_{x_1 \dots x_m}) \leq k .
    \end{equation*}

    By Lemma 8.1 of \cite{baburin_2023_on-the-fly-determinization-NFA}, determinizing the NFA accepting $L_{x_1 \dots x_m}$, of size $(m+1)D$ and tree width $\leq k$ yields a DFA with $O([(m+1)D)]^{k})$ states, constructible in time $O(d[(m+1)D)]^{1+k})$.

    In the next block of Algorithm \ref{alg:can_decomp}, each language $L_i^m$ is obtained as the intersection of $|\Sigma_f|^m$ automata, each of size at most $(m+1)D$ (for $L_{x_1 \dots x_m}$) or $O([(m+1)D]^{k})$ (for $L_{x_1 \dots x_m}^c$). The resulting automaton for $L_i^m$ therefore has size $O([(m+1)D]^{k})$ and can be computed in time $O(\prod_{i=1}^{d^m} t_i)$ where $t_i$ is the number of transitions of each automaton in the intersection. Since $t_i \leq O(d[(m+1)D]^{2k})$, the total computation time is $O((d[(m+1)D]^{2k})^{d^{m}})$. This must be repeated for all $S_i \in \mathcal{S}$, with $|\mathcal{S}| = 2^{|\Sigma_f|^m}$, so the total runtime of the block is $O((2d[(m+1)D]^{2k})^{d^m})$. Treating $m,d$ as constants, this scales as $O(D^{2kd^m})$. 

    Finally, performing the above steps for all $m \leq M$ yields an overall runtime of $O(M(2d[(M+1)D]^{2k})^{d^{M}})$, which scales with the bond dimension as $O(D^{2kd^M})$. In particular, for DFAs one has $k = 1$, so this becomes $O(D {2d^M})$, completing the proof.
    }
    
    \begin{algorithm}[h!] 
      \SetAlgoLined
      \caption{Canonical decomposition of a RL}
      \label{alg:can_decomp}
      \BlankLine
      \small
      \KwData{RL $L$ and any NFA representation of size $D$, $|\Sigma| = d$, and with at most $M$ appearances of symbols of $\Sigma_f$ in all words.}
      \KwResult{Compute the canonical decomposition $\{(L_j^m, X_j^m)\}_{j,m}$ of $L$ in time $O(M(d2^{1+2(M+1)D})^{d^M})$.}
      \BlankLine
      \Begin
      {
        \ForAll{$m \leq M$}{
            \ForEach{$x_1 x_2 \dots x_m \in (\Sigma_f)^m$}{
                $\tilde{L}_{x_1 \dots x_m} \gets L \cap \Sigma_\infty^* x_1 \Sigma_\infty^* x_2 \dots \Sigma_\infty^* x_m \Sigma_\infty^*$ \;

                $L_{x_1 \dots x_m} \gets T_{\Sigma_f \to f} (\tilde{L}_{x_1 \dots x_m})$ \;
                \lIf{$L_{x_1 \dots x_m} = \emptyset$}{\normalfont Discard it}
                \tcp{Runtime: \hspace{-2mm}$O(m d^2 D^2)$ (Prop. 1.4 and 1.11 in \cite{Sakarovitch2009})}

                $L_{x_1 \dots x_m}^c \gets$ Complement of $L_{x_1 \dots x_m}$ (first, determinize the NFA using Rabin-Scott powerset construction \cite{rabin_1959_powerset}, then calculate the complement of the DFA by changing $F \gets Q \setminus F$ \cite{Sakarovitch2009}
                \tcp*[l]{Runtime: \hspace{-2mm}$O(d (m+1) D2^{(m+1)D})$}
            }
            \tcp{Runtime of the block: \hspace{-2mm}$O(d^m 2^{mD})$}

            $\mathcal{S} \gets$ {\normalfont Power set of words $x_1 \dots x_m \in (\Sigma_f)^m$ such that $L_{x_1 \dots x_m} \neq \emptyset$ ($|\mathcal{S}| \leq 2^{|\Sigma_f|^m}$)}. 

            \ForEach{$S_i \in \mathcal{S}$}{\normalfont
                $L_i^m \gets$ Eq. \eqref{eq:sets_Li_intersection} (involves computing the intersection of $|\Sigma_f|^m$ NFA's of size either $\leq (m+1)D$ or $\leq 2^{(m+1)D}$, and checking emptiness) \;
                \tcp{Runtime: \hspace{-2mm}$O((d4^{(m+1)D})^{d^m})$}

                \If{$L_{x_1 \dots x_m} \neq \emptyset$}{$X_i^m \gets \cup_{x_1 \dots x_m \in \mathcal{S}_i} x_1 \dots x_m$ \;
                
                \tcp{Runtime: \hspace{-2mm}$O(d^m)$}
                }
            }
            \tcp{Runtime of the block: \hspace{-2mm} $O((d2^{1+2(m+1)D})^{d^m})$}  
        }
        \Return $\{(L_i^m, X_i^m)\}$ \;
        \tcp{Total runtime: \hspace{-2mm}$O(M(d2^{1+2(M+1)D})^{d^M})$}
    }
    \end{algorithm} 
\end{proof}



    {}

\section{Examples of the non-minimality of the canonical DFA} \label{app:E}

The minimal canonical DFA is not necessarily minimal with respect to all valid MPS representations of RLS, as has been studied in the RL literature \cite{Ravikumar_1989_succint-ambiguity}. However, if we consider a more general ansatz of the form of Eq. \eqref{eq:general_MPS-X} motivated by MPS theory, then we can obtain even smaller bond dimensions than the minimal UFA.

The first example illustrating this corresponds to $L = (0^*21^*3)^* \cup (0^*31^*2)^*$. While the Schmidt rank imposes a lower bound of 4 on any UFA representation of $L$, it can also be described by the following MPS-X with $D = 2$ and rank$(X) = 2$,
\begin{align*}{\footnotesize
        X = \begin{pmatrix}
            1 & \\
            & 1
        \end{pmatrix}, } \ 
    & {\footnotesize A^0 = \begin{pmatrix}
        1 & \\
        & 0
    \end{pmatrix}, \ 
    A^1 = \begin{pmatrix}
        0 & \\
        & 1
    \end{pmatrix}, \ } \\
    & {\footnotesize
    A^2 = \begin{pmatrix}
        0 & 1 \\
         & 0
    \end{pmatrix}, } \
    {\footnotesize A^3 = \begin{pmatrix}
        0 & \\
        1 & 0
    \end{pmatrix}. \ }
\end{align*}
Diagramatically, this can be seen as the automaton
\begin{equation*} \begin{tikzpicture}[scale=.2, baseline={([yshift=-3ex]current bounding box.center)}, thick]
        \tikzstyle{small state} = [state, minimum size=0pt, fill=purple,node distance=1.5cm,initial distance=1.5cm,initial text=$ $]
        \tikzset{->}

        \node[small state] (q1) {\small $q_1$};
        \node[small state, right of=q1] (q2) {\small $q_2$};
        
        \draw (q1) edge[loop above] node{\scriptsize $0$} (q1)
        (q2) edge[loop above] node{\scriptsize $1$} (q2)
        (q1) edge[bend left, above] node{\scriptsize $2$} (q2)
        (q2) edge[bend left, above] node{\scriptsize $3$} (q1);
    \end{tikzpicture}\ ,
    \end{equation*}
where $X$ tells us that any path through the automaton can either start at $q_1$ and accept at $q_1$, or start at $q_2$ and accept at $q_2$. 

As a second example, we consider $L = 0^\ast ( 1 0^\ast 2 0^\ast 3 \cup 2 0^\ast 3 0^\ast 1 \cup 3 0^\ast 1 0^\ast 2 ) 0^\ast$. The Schmidt rank gives a lower bound of 8 for the size of the minimal UFA, which is equal to the minimal DFA in this case,
\begin{equation*} \begin{tikzpicture}[scale=.175, baseline={([yshift=-3ex]current bounding box.center)}, thick]
    \tikzstyle{small state} = [state, minimum size=0pt, fill=purple,node distance=1.5cm,initial distance=1.5cm,initial text=$ $]
    \tikzset{->}

    \node[small state, initial] (q1) {\small $q_1$};
    \node[small state, right of=q1] (q4) {\small $q_4$};
    \node[small state, above of=q4] (q2) {\small $q_2$};
    \node[small state, below of=q4] (q6) {\small $q_6$};
    \node[small state, right of=q2] (q3) {\small $q_3$};
    \node[small state, right of=q4] (q5) {\small $q_5$};
    \node[small state, right of=q6] (q7) {\small $q_7$};
    \node[small state, accepting, right of=q5] (q8) {\small $q_8$};
    
    \draw   (q1) edge[loop above] node{\scriptsize $0$} (q1)
    (q2) edge[loop above] node{\scriptsize $0$} (q2)
    (q3) edge[loop above] node{\scriptsize $0$} (q3)
    (q4) edge[loop above] node{\scriptsize $0$} (q4)
    (q5) edge[loop above] node{\scriptsize $0$} (q5)
    (q6) edge[loop above] node{\scriptsize $0$} (q6)
    (q7) edge[loop above] node{\scriptsize $0$} (q7)
    (q8) edge[loop above] node{\scriptsize $0$} (q8)

    (q1) edge[above] node{\scriptsize $1$} (q2)
    (q2) edge[above] node{\scriptsize $2$} (q3)
    (q3) edge[above] node{\scriptsize $3$} (q8)
    
    (q1) edge[above] node{\scriptsize $2$} (q4)
    (q4) edge[above] node{\scriptsize $3$} (q5)
    (q5) edge[above] node{\scriptsize $1$} (q8)

    (q1) edge[above] node{\scriptsize $3$} (q6)
    (q6) edge[above] node{\scriptsize $1$} (q7)
    (q7) edge[above] node{\scriptsize $2$} (q8);
    
\end{tikzpicture}\ .
\end{equation*}

However, it can also be described by an MPS-X with $D = 6$ and rank$(X) = 3$ with
\begin{align*}
    & \footnotesize X = \dyad{4}{1} + \dyad{5}{2} + \dyad{6}{3}, \\
    & \footnotesize \quad \quad A^0 = \mathds{1}_6, \ 
    A^1 = \dyad{1}{2} + \dyad{4}{5}, \\
    & \footnotesize \quad \quad A^2 = \dyad{2}{3} + \dyad{5}{6}, \ 
    A^3 = \dyad{3}{4},
\end{align*}
which can be pictured as the automaton
\begin{equation*} \begin{tikzpicture}[scale=.2, baseline={([yshift=-3ex]current bounding box.center)}, thick]
    \tikzstyle{small state} = [state, minimum size=0pt, fill=purple,node distance=1.5cm,initial distance=1.5cm,initial text=$ $]
    \tikzset{->}

    \node[small state] (q1) {\small $q_1$};
    \node[small state, right of=q1] (q2) {\small $q_2$};
    \node[small state, right of=q2] (q3) {\small $q_3$};
    \node[small state, right of=q3] (q4) {\small $q_4$};
    \node[small state, right of=q4] (q5) {\small $q_5$};
    \node[small state, right of=q5] (q6) {\small $q_6$};
    
    \draw   (q1) edge[loop above] node{\scriptsize $0$} (q1)
    (q2) edge[loop above] node{\scriptsize $0$} (q2)
    (q3) edge[loop above] node{\scriptsize $0$} (q3)
    (q4) edge[loop above] node{\scriptsize $0$} (q4)
    (q5) edge[loop above] node{\scriptsize $0$} (q5)
    (q6) edge[loop above] node{\scriptsize $0$} (q6)

    (q1) edge[above] node{\scriptsize $1$} (q2)
    (q2) edge[above] node{\scriptsize $2$} (q3)
    (q3) edge[above] node{\scriptsize $3$} (q4)
    (q4) edge[above] node{\scriptsize $1$} (q5)
    (q5) edge[above] node{\scriptsize $2$} (q6);
\end{tikzpicture}\ ,
\end{equation*}
where $X$ tells us that any path through it can either start at $q_1$ and accept at $q_4$, start at $q_2$ and accept at $q_5$, or start at $q_3$ and accept at $q_6$.

\section{Proof of Theorem \ref{prop:LU} about the LU-equivalence of sparse RLS} \label{app:F}

A RL $L$ is sparse, with a number of words scaling as $O(N^k)$ with the word length $N$ for some natural number $k$, if and only if it can be described with a regex of the form 
\begin{equation} \label{eq:form_sparse}
    L = \bigcup_{i=1}^Q z_{i,0} (y_{i,1})^* z_{i,1} \dots (y_{i,t_i})^* z_{i,t_i} \ , 
\end{equation}
where $Q$ is a constant, $t_i \leq k+1$ for all $i$, and $y_i, z_i$ are strings in $\Sigma^*$ \cite{Szilard_1992}. Whether a RL is sparse or not can be decided in time $O(D + m)$, where $m$ is the number of transitions of an automaton accepting $L$ ($m \leq dD^2$) \cite{gawrychowski_2010_sparse_efficiently}. 

Note that, for a RL $L$ accepted by a UFA of size $D$, the number of words of fixed length $N$ in $L$ can be obtained as the norm of the associated MPS, as well as through the adjacency matrix of the UFA \cite{Chomsky_1958_structure_factor} in time $O(dD^2 + \log(N) D^3)$, or with a dynamic programming procedure \cite{kannan_1995_counting_words} in time $O(d N D^2)$.

Using the canonical representation in Corollary 6, we can characterize RLS that are equivalent under the action of $U^{\otimes N}$ for some unitary $U$.

\propLU*
\begin{proof}
    First, we assume that $| L_2^N \rangle = U^{\otimes} | L_1^N \rangle$ for all $N$. As the initial step, we show that $U$ acts as a permutation when restricted to $\Sigma_\infty$.

    Given $x \in \Sigma_\infty$, assume first for simplicity that it appears in $L_2$ in one of the strings $y_{i,j}$ of Eq. \eqref{eq:form_sparse} as $y_{i,j} = x$. Then, we know that for each $x \in \Sigma_\infty$ there exists a set of words $\{s_x^N\}_N \in L_2$ of the form $s_x^N = s_1 x^{N-a} s_2$, where $a := |s_1| + |s_2|$, and $s_1$, $s_2$ remain the same for all $N$. Then, 
    {\small\begin{align} 
        1 &= \langle s_i^N | U^{\otimes N} | L_1^N \rangle \nonumber \\
        &= \sum_{\substack{w_1w_2w_3 \in L_1 \cap \Sigma^N \\ |w_2| = N-a}} \mel{s_1}{U^{\otimes |w_1|}}{w_1} \mel{xx \dots x}{U^{\otimes (N-a)}}{w_2} \nonumber \\
        &\hspace{2.6cm}  \mel{s_2}{U^{\otimes |w_3|}}{w_3}. \label{eq:aux1_slocc}
    \end{align}}
    Assume that $|\mel{x}{U}{y}| \leq \lambda < 1$ for all $y \in \Sigma_\infty$. Since $L_1$ contains $O(\text{poly}(N))$ words of length $N$, Eq. \eqref{eq:aux1_slocc} would necessarily imply that $1 = O(\text{poly}(N) \lambda^N)$, which leads to a contradiction as $N \to \infty$. 
    
    Similarly, if $|y_{i,j}| > 1$ for all strings where $x \in \Sigma_\infty$ appears, the argument above still holds if we just choose $s_x^N = s_1 (y_{i,j})^{N-a} s_2$ in $L_2$, for any $y_{i,j}$ containing $x$. 
    
    Therefore, there is a permutation $\pi \in \mathcal{S}_{|\Sigma_\infty|}$ such that $U\ket{x} = \ket{\pi(x)}$ for $x \in \Sigma_\infty$, so $U = U_\pi \oplus U_f$, where $U_f \in \mathcal{U}_{|\Sigma_f|}$. The unitary $U$ thus acts on $L_1$ as
    {\begin{align*}
        U^{\otimes N} | L_1^N \rangle &= 
        \sum_m \sum_j \hat{S}^{(m)} \ \big( U_\pi^{\otimes N} | L_{1,j}^{m} \rangle \big) \big( U_f^{\otimes m} | X^{m}_{1,j} \rangle \big)
        \\
        &= \sum_m \sum_j \hat{S}^{(m)} \ | \pi( L_{1,j}^{m} )\rangle  \big( U_f^{\otimes m} | X^{m}_{1,j} \rangle \big)
        \\
        &= \sum_m \sum_j \hat{S}^{(m)} \ | L_{2,j}^{m,N} \rangle | X^{m}_{2,j} \rangle = | L_2^N \rangle.
    \end{align*}}   
    Due to the uniqueness of the canonical representation, we have that the above is only possible if, for each $m$, there exists some reordering of the elements in the canonical representation, such that for all $j$,
    \begin{equation} \label{eq:LU_aux2}
        \begin{cases}
            | L_{2,j}^{m,N} \rangle = | \pi\big(L_{1,j}^{m,N}\big) \rangle \\
            |X_{2,j}^{m}\rangle := U_f^{\otimes m} | X_{1,j}^{m} \rangle
        \end{cases}.
    \end{equation}
    The forward direction of the claim follows since Eq. \eqref{eq:LU_aux2} holds for all $N$.

    To prove the other direction, we just need to note that, defining $U := U_\pi \oplus U_f$, we would have $|L_2^N\rangle = U^{\otimes N} \ket{L_1^N}$ for all $N$ as desired. 
\end{proof}

\section{Examples of LU and SLOCC-equivalent RLS} \label{app:G}

First, we reproduce the example of LU-equivalent RLS in the main text and provide the explicit unitary relating them for completeness. Consider 
\begin{align*} \begin{cases}
    L_1 = 11 \cup 22 \cup 31 \cup 32 , \\
    L_2 = 11 \cup 12 \cup 32 \cup 33 , \end{cases}
\end{align*}
whose minimal DFAs have bond dimensions 5 and 4, respectively, as shown below,
\begin{align*}
    L_1 : 
    \begin{tikzpicture}[scale=.1, baseline={([yshift=0ex]current bounding box.center)}, thick]
    \tikzstyle{small state} = [state, minimum size=0pt, fill=purple,node distance=1.2cm,initial distance=1.5cm,initial text=$ $]
    \tikzset{->}
        \node[small state, initial] (q1) {\small $q_1$};
        \node[small state, right of=q1] (q3) {\small $q_3$};
        \node[small state, above of=q3] (q2) {\small $q_2$};
        \node[small state, below of=q3] (q4) {\small $q_4$};
        \node[small state, right of=q3, accepting] (q5) {\small $q_5$};
        \draw
        (q1) edge[above] node{\scriptsize $1$} (q2)
        (q1) edge[above] node{\scriptsize $2$} (q3)
        (q1) edge[above] node{\scriptsize $3$} (q4)
        (q2) edge[above] node{\scriptsize $1$} (q5)
        (q3) edge[above] node{\scriptsize $2$} (q5)
        (q4) edge[bend left, below] node{\scriptsize $1$} (q5)
        (q4) edge[bend right, above] node{\scriptsize $2$} (q5);
\end{tikzpicture}\ , \ 
    L_2 :  \begin{tikzpicture}[scale=.175, baseline={([yshift=0ex]current bounding box.center)}, thick]
    \tikzstyle{small state} = [state, minimum size=0pt, fill=purple,node distance=1.4cm,initial distance=1.5cm,initial text=$ $]
    \tikzset{->}
        \node[small state, initial] (q1) {\small $q_1$};
        \node[small state, above right of=q1] (q2) {\small $q_2$};
        \node[small state, below right of=q1] (q3) {\small $q_3$};
        \node[small state, below right of=q2, accepting] (q4) {\small $q_4$};
        \draw
        (q1) edge[above] node{\scriptsize $1$} (q2)
        (q1) edge[above] node{\scriptsize $3$} (q3)
        (q2) edge[bend left, below] node{\scriptsize $2$} (q4)
        (q2) edge[bend right, above] node{\scriptsize $1$} (q4)
        (q3) edge[bend left, below] node{\scriptsize $2$} (q4)
        (q3) edge[bend right, above] node{\scriptsize $3$} (q4);
\end{tikzpicture} \ .
\end{align*}
They are related as $\ket{L_2} = U^{\otimes 2} \ket{L_1}$ by the unitary $$U = \frac{1}{\sqrt{3}}{\footnotesize \begin{pmatrix} \frac{1}{2}  (1+\sqrt{3}) & \frac{1}{2}(1-\sqrt{3}) & 1 \\ 1 & 1 & -1 \\ \frac{1}{2}(1-\sqrt{3}) & \frac{1}{2}(1+\sqrt{3}) & 1 \end{pmatrix} }.$$

Another example of LU equivalent RLS involving Bell pairs consists of
\begin{equation*} \begin{cases}
    L_3 = 11 \cup 22, \\
    L_4 = 12 \cup 21,
\end{cases} \end{equation*}
whose minimal DFAs have size 2 as shown below, and therefore the bond dimension does not change in this case, 
\begin{align*}
    L_3 : 
    \begin{tikzpicture}[scale=.175, baseline={([yshift=0ex]current bounding box.center)}, thick]
    \tikzstyle{small state} = [state, minimum size=0pt, fill=purple,node distance=1.5cm,initial distance=1.5cm,initial text=$ $]
    \tikzset{->}
    \node[small state, initial] (q1) {\small $q_1$};
    \node[small state, above right of=q1] (q2) {\small $q_2$};
    \node[small state, below right of=q1] (q3) {\small $q_3$};
    \node[small state, below right of=q2, accepting] (q4) {\small $q_4$};
    \draw
    (q1) edge[above] node{\scriptsize $1$} (q2)
    (q1) edge[above] node{\scriptsize $2$} (q3)
    (q2) edge[above] node{\scriptsize $1$} (q4)
    (q3) edge[above] node{\scriptsize $2$} (q4)
    ;
\end{tikzpicture}\ , \ 
    L_4 :  \begin{tikzpicture}[scale=.175, baseline={([yshift=0ex]current bounding box.center)}, thick]
    \tikzstyle{small state} = [state, minimum size=0pt, fill=purple,node distance=1.5cm,initial distance=1.5cm,initial text=$ $]
    \tikzset{->}
    \node[small state, initial] (q1) {\small $q_1$};
    \node[small state, above right of=q1] (q2) {\small $q_2$};
    \node[small state, below right of=q1] (q3) {\small $q_3$};
    \node[small state, below right of=q2, accepting] (q4) {\small $q_4$};
    \draw
    (q1) edge[above] node{\scriptsize $1$} (q2)
    (q1) edge[above] node{\scriptsize $2$} (q3)
    (q2) edge[above] node{\scriptsize $2$} (q4)
    (q3) edge[above] node{\scriptsize $1$} (q4)
    ;
\end{tikzpicture} \ .
\end{align*}
They are related as $\ket{L_4} = U^{\otimes 2} \ket{L_3}$ by $U = \frac{1}{\sqrt{2}}{\footnotesize \begin{pmatrix} 1 & i \\ 1 & -i \end{pmatrix} }$.

Finally, we give an example involving the GHZ on three particles, where the minimal DFA of a language that is SLOCC-equivalent to it has a different number of internal states,
\begin{align*} \begin{cases}
    L_5 = 111 \cup 222 ,\\
    L_6 = 112 \cup 121 \cup 211 \cup 221 \cup 212 \cup 122 ,
\end{cases} \end{align*}
related as $\ket{L_6} = P^{\otimes 3} \ket{L_5}$ where
$$P = - {\footnotesize \begin{pmatrix} \frac{i}{3^{1/6}} & \left(-\frac{1}{3}\right)^{1/6} \\ \left(-\frac{1}{3}\right)^{1/6} & \frac{i}{3^{1/6}} \end{pmatrix} }.$$
Their minimal DFAs have bond dimensions 6 and 7, respectively, as shown below,
\begin{align*}
    L_5 : 
    &\begin{tikzpicture}[scale=.175, baseline={([yshift=0ex]current bounding box.center)}, thick]
    \tikzstyle{small state} = [state, minimum size=0pt, fill=purple,node distance=1.5cm,initial distance=1.5cm,initial text=$ $]
    \tikzset{->}
    \node[small state, initial] (q1) {\small $q_1$};
    \node[small state, above right of=q1] (q2) {\small $q_2$};
    \node[small state, right of=q2] (q4) {\small $q_4$};
    \node[small state, below right of=q4, accepting] (q6) {\small $q_6$};
    \node[small state, below right of=q1] (q3) {\small $q_3$};
    \node[small state, right of=q3] (q5) {\small $q_5$};
    \draw
    (q1) edge[above] node{\scriptsize $1$} (q2)
    (q2) edge[above] node{\scriptsize $1$} (q4)
    (q4) edge[above] node{\scriptsize $1$} (q6)
    (q1) edge[above] node{\scriptsize $2$} (q3)
    (q3) edge[above] node{\scriptsize $2$} (q5)
    (q5) edge[above] node{\scriptsize $2$} (q6);
\end{tikzpicture}\ , \\
    L_6 :  &\begin{tikzpicture}[scale=.175, baseline={([yshift=0ex]current bounding box.center)}, thick]
    \tikzstyle{small state} = [state, minimum size=0pt, fill=purple,node distance=1.5cm,initial distance=1.5cm,initial text=$ $]
    \tikzset{->}
    \node[small state, initial] (q1) {$q_1$};
    \node[small state, above right of=q1] (q2) {\small $q_2$};
    \node[small state, below right of=q1] (q3) {\small $q_3$};
    \node[small state, below right of=q2] (q5) {\small $q_5$};
    \node[small state, right of=q2] (q4) {\small $q_4$};
    \node[small state, right of=q3] (q6) {\small $q_6$};
    \node[small state, right of=q5, accepting] (q7) {\small $q_7$};
    \draw
    (q1) edge[above] node{\scriptsize $1$} (q2)
    (q1) edge[above] node{\scriptsize $2$} (q3)
    (q2) edge[above] node{\scriptsize $1$} (q4)
    (q2) edge[above] node{\scriptsize $2$} (q5)
    (q3) edge[above] node{\scriptsize $1$} (q5)
    (q3) edge[above] node{\scriptsize $2$} (q6)
    (q4) edge[above] node{\scriptsize $2$} (q7)
    (q5) edge[above] node{\scriptsize $1,2$} (q7)
    (q6) edge[above] node{\scriptsize $1$} (q7);
\end{tikzpicture} \ . 
\end{align*}

\section{Generalization of RLS to 2D regular pictures} \label{app:H}

For completeness, we provide here more details on the generalization of RLs to two dimensions following the notation in \cite{Giammarresi_1997}, and their explicit expression in terms of PEPS. 

A \textit{picture} can be defined as a 2D rectangular array of elements of an alphabet $\Sigma$. A \textit{2D language} is then defined as a subset of $\Sigma^{**}$, where $\Sigma^{**}$ denotes the set of all pictures over $\Sigma$. 1D RL's thus correspond to pictures with a single row or column.

\begin{definitionnon}[4.2 in \cite{Giammarresi_1997}] A (non-deterministic) \textit{2D online tessellation automata} (OTA) is defined by $\mathcal{A} = \langle Q, \Sigma, \delta, q_0, F \rangle$, where $Q$ is the set of internal states, $\Sigma$ is the input alphabet, $q_0$ is the initial state, $F$ is the set of accepting states, and $\delta: Q \times Q \times \Sigma \to \mathcal{P}(Q)$ is the transition function.
\end{definitionnon}

A \textit{run} of $\mathcal{A}$ on picture $p \in \Sigma^{**}$ consists of associating a state from $Q$ to each element $(i,j)$ of $p$. The state is determined by the transition function $\delta$ which receives as inputs the states already associated to positions $(i-1,j)$ and $(i,j-1)$, and the symbol $p(i,j)$. At time $t = 0$, the internal states of the first row and the first column are set to the initial state indicated by $q_0$. 

Then, we say that an OTA $\mathcal{A}$ \textit{recognizes} or \textit{accepts} a picture $p$ of size $(l_1(p), l_2(p))$ denoting its number of rows and columns, respectively, if there exists a run of $\mathcal{A}$ on $p$ such that the state associated to position $(l_1(p), l_2(p))$ is an accepting state belonging to $F$.

In fact, we can express the above definition of OTA as a PEPS, with physical dimension equal to $|\Sigma|$, bond dimension $D = |Q|$, and the following tensors:
\begin{align*}
    \begin{tikzpicture}[scale=.2, baseline={([yshift=-0.5ex]current bounding box.center)}, thick]
        \pepstensor{0,0}{}{purple}{$x$}{$z$}{$z$}{$y$}{$a$}
    \end{tikzpicture} &
    = \begin{cases}
        1 \text{ if } z \in \delta(x,y,a), \\
        0 \text{ otherwise,}
    \end{cases} \\
    \begin{tikzpicture}[scale=.2, baseline={([yshift=-0.5ex]current bounding box.center)}, thick]
        \boundarypeps{0,0}{amaranth}
    \end{tikzpicture} &= \ket{q_0} , \
    \begin{tikzpicture}[scale=.2, baseline={([yshift=-0.5ex]current bounding box.center)}, thick]
        \boundarypeps{0,0}{yellow}
    \end{tikzpicture} = \sum_{f \in F} \ket{f} ,
\end{align*}
so that the PEPS evaluated on picture $p$ is
\begin{equation*}
    \begin{tikzpicture}[scale=.2, baseline={([yshift=-0.5ex]current bounding box.center)}, thick]
        \fullpepstensor{0,0};
    \end{tikzpicture}
    \begin{cases}
        > 0 \text{ if } p \in L, \\
        = 0 \text{ if } p \notin L.
    \end{cases}
\end{equation*}

\marta{By construction, all these states satisfy an area law, with the entanglement upper bounded by $|\partial A|\log \vert Q \vert$, $|\partial A|$ being the number of virtual bonds cut by the boundary of the considered region $A$.}

As an example, we provide the explicit representation of a 2D language over $\Sigma = \{a\}$ consisting of all pictures with an odd number of columns. An OTA that recognizes it is given by (4.3 in \cite{Giammarresi_1997}):
\begin{itemize}
    \item $Q = \{0, 1, 2\}$,
    \item $q_0 = \{0\}$,
    \item $F = \{1\}$,
    \item $\delta(0,0,a) = \delta(0,2,a) = \delta(1,0,a) = \delta(1,2,a) = 1$; $\delta(0,1,a) = \delta(2,1,a) = 2$.
\end{itemize}
The equivalent PEPS representation has $d = 1$, $D = 3$, and tensors
\begin{align*}
    \begin{tikzpicture}[scale=.2, baseline={([yshift=-0.5ex]current bounding box.center)}, thick]
        \pepstensor{0,0}{}{purple}{$0$}{$1$}{$1$}{$0$}{$a$};
    \end{tikzpicture}
    &= \begin{tikzpicture}[scale=.2, baseline={([yshift=-0.5ex]current bounding box.center)}, thick]
        \pepstensor{0,0}{}{purple}{$0$}{$1$}{$1$}{$2$}{$a$};
    \end{tikzpicture}
    = \begin{tikzpicture}[scale=.2, baseline={([yshift=-0.5ex]current bounding box.center)}, thick]
        \pepstensor{0,0}{}{purple}{$1$}{$1$}{$1$}{$0$}{$a$};
    \end{tikzpicture}
    = \begin{tikzpicture}[scale=.2, baseline={([yshift=-0.5ex]current bounding box.center)}, thick]
        \pepstensor{0,0}{}{purple}{$1$}{$1$}{$1$}{$2$}{$a$};
    \end{tikzpicture} 
    = 1 \\
    \begin{tikzpicture}[scale=.2, baseline={([yshift=-0.5ex]current bounding box.center)}, thick]
        \pepstensor{0,0}{}{purple}{$0$}{$2$}{$2$}{$1$}{$a$};
    \end{tikzpicture}
    &= \begin{tikzpicture}[scale=.2, baseline={([yshift=-0.5ex]current bounding box.center)}, thick]
        \pepstensor{0,0}{}{purple}{$2$}{$2$}{$2$}{$1$}{$a$};
    \end{tikzpicture} 
    = 1 \\
    \begin{tikzpicture}[scale=.2, baseline={([yshift=-0.5ex]current bounding box.center)}, thick]
        \boundarypeps{0,0}{amaranth};
    \end{tikzpicture} &= \ket{0} , 
    \begin{tikzpicture}[scale=.2, baseline={([yshift=-0.5ex]current bounding box.center)}, thick]
        \boundarypeps{0,0}{yellow};
    \end{tikzpicture} = \ket{1} .
\end{align*}


\bibliography{references.bib}

@book{Sakarovitch2009,
  title={Elements of automata theory},
  author={Sakarovitch, Jacques},
  year={2009},
  publisher={Cambridge University Press},
  doi={10.1017/CBO9781139195218}
}

@inbook{Kleene_1956,
title = {Representation of Events in Nerve Nets and Finite Automata},
booktitle = {Automata Studies. (AM-34), Volume 34},
author = {S. C. Kleene},
editor = {C. E. Shannon and J. McCarthy},
publisher = {Princeton University Press},
address = {Princeton},
pages = {3--42},
doi = {10.1515/9781400882618-002},
isbn = {9781400882618},
year = {1956},
lastchecked = {2024-06-17}
}

@book{sipser_introduction_1997,
	title = {Introduction to the theory of computation},
	isbn = {978-0-534-94728-6},
	doi = {10.2307/2586778},
	pagetotal = {426},
	publisher = {Boston : {PWS} Pub. Co.},
	author = {Sipser, Michael},
	year = {1997},
	keywords = {Machine theory},
}

@article{Stearns_1985-UFAs-Equiv_Contain,
author = {Stearns, R. E. and Hunt III, H. B.},
title = {On the Equivalence and Containment Problems for Unambiguous Regular Expressions, Regular Grammars and Finite Automata},
journal = {SIAM Journal on Computing},
volume = {14},
number = {3},
pages = {598-611},
year = {1985},
doi = {10.1137/0214044},
}

@article{Ravikumar_1989_succint-ambiguity,
author = {Ravikumar, Bala and Ibarra, Oscar H.},
title = {Relating the Type of Ambiguity of Finite Automata to the Succinctness of Their Representation},
journal = {SIAM Journal on Computing},
volume = {18},
number = {6},
pages = {1263-1282},
year = {1989},
doi = {10.1137/0218083},
}

@article{Leung_2005_expsep-NFA-UFA,
author = {Leung, Hing},
title = {Descriptional complexity of NFA of different ambiguity},
journal = {International Journal of Foundations of Computer Science},
volume = {16},
number = {05},
pages = {975-984},
year = {2005},
doi = {10.1142/S0129054105003418},
abstract = { In this paper, we study the tradeoffs in descriptional complexity of NFA (nondeterministic finite automata) of various amounts of ambiguity. We say that two classes of NFA are separated if one class can be exponentially more succinct in descriptional sizes than the other. New results are given for separating DFA (deterministic finite automata) from UFA (unambiguous finite automata), UFA from MDFA (DFA with multiple initial states) and UFA from FNA (finitely ambiguous NFA). We present a family of regular languages that we conjecture to be a good candidate for separating FNA from LNA (linearly ambiguous NFA). }
}

@book{farenick_algebras_book_2001,
	location = {New York},
	title = {Algebras of linear transformations},
	isbn = {978-0-387-95062-4},
	series = {Universitext},
	pagetotal = {238},
	publisher = {Springer},
	author = {Farenick, Douglas R.},
	year = {2001},
	langid = {english},
	keywords = {Associative algebras},
}

@article{Pappacena_1997_length-algebra,
title = {An Upper Bound for the Length of a Finite-Dimensional Algebra},
journal = {Journal of Algebra},
volume = {197},
number = {2},
pages = {535-545},
year = {1997},
issn = {0021-8693},
doi = {10.1006/jabr.1997.7140},
author = {Christopher J Pappacena},
abstract = {LetFbe a field, and letAbe a finite-dimensionalF-algebra. Writed=dimFA, and letebe the largest degree of the minimal polynomial for anya∈A. Define the functionf(d,e)=e2d/(e−1)+1/4+e/2−2. We prove that, ifSis any finite generating set forAas anF-algebra, the words inSof length less thanf(d,e) spanAas anF-vector space. In the special case ofn-by-nmatrices, this bound becomesf(n2,n)=n2n2/(n−1)+1/4+n/2−2∈O(n3/2). This is a substantial improvement over previous bounds, which have all beenO(n2). We also prove that, for particular setsSof matrices, the bound can be sharpened to one that is linear inn. As an application of these results, we reprove a theorem of Small, Stafford, and Warfield about semiprime affineF-algebras.}
}

@article{Jiraskova_Okhotin_2008, title={State complexity of cyclic shift}, volume={42}, DOI={10.1051/ita:2007038}, number={2}, journal={RAIRO - Theoretical Informatics and Applications}, author={Jirásková, Galina and Okhotin, Alexander}, year={2008}, pages={335–360}}

@article{Yu_1994_state_complexity_RLoperations,
title = {The state complexities of some basic operations on regular languages},
journal = {Theoretical Computer Science},
volume = {125},
number = {2},
pages = {315-328},
year = {1994},
issn = {0304-3975},
doi = {10.1016/0304-3975(92)00011-F},
author = {Sheng Yu and Qingyu Zhuang and Kai Salomaa},
abstract = {We consider the state complexities of some basic operations on regular languages. We show that the number of states that is sufficient and necessary in the worst case for a deterministic finite automaton (DFA) to accept the catenation of an m-state DFA language and an n-state DFA language is exactly m2n − 2n − 1, for m, n ⩾ 1. The result of 2n − 1 + 2n − 2 states is obtained for the star of an n-state DFA language, n1. State complexities for other basic operations and for regular languages over a one-letter alphabet are also studied.}
}

@article{Hopcroft_2001_book_automata,
  title={Introduction to automata theory, languages, and computation},
  author={Hopcroft, John E and Motwani, Rajeev and Ullman, Jeffrey D},
  journal={Acm Sigact News},
  volume={32},
  number={1},
  pages={60--65},
  year={2001},
  publisher={ACM New York, NY, USA}
}

@article{Berman_1977,
author = {Berman, L. and Hartmanis, J.},
title = {On Isomorphisms and Density of NP and Other Complete Sets},
journal = {SIAM Journal on Computing},
volume = {6},
number = {2},
pages = {305-322},
year = {1977},
doi = {10.1137/0206023},
abstract = { If all \$NP\$ complete sets are isomorphic under deterministic polynomial time mappings (p-isomorphic) then \$P \ne NP\$ and if all PTAPE complete sets are P-isomorphic then \$P \ne {\text{PTAPE}}\$. We show that all \$NP\$ complete sets known (in the literature) are indeed p-isomorphic and so are the known PTAPE complete sets. This shows that, in spite of the radically different origins and attempted simplification of these sets, all the known \$NP\$ complete sets are identical but for simple isomorphic codings computable in deterministic polynomial time.Furthermore, if all \$NP\$ complete sets are p-isomorphic then they all must have similar densities and, for example, no language over a single letter alphabet can be \$NP\$ complete, nor can any sparse language over an arbitrary alphabet be \$NP\$ complete. We show that complete sets in EXPTIME and EXPTAPE cannot be sparse and therefore they cannot be over a single letter alphabet. Similarly, we show that the hardest context-sensitive languages cannot be sparse. We also relate the existence of sparse complete sets to the existence of simple combinatorial circuits for the corresponding truncated recognition problem of these languages. }
}

@InProceedings{Szilard_1992,
author="Szilard, Andrew
and Yu, Sheng
and Zhang, Kaizhong
and Shallit, Jeffrey",
editor="Havel, Ivan M.
and Koubek, V{\'a}clav",
title="Characterizing regular languages with polynomial densities",
booktitle="Mathematical Foundations of Computer Science 1992",
year="1992",
publisher="Springer Berlin Heidelberg",
pages="494--503",
abstract="A language L is said to have a polynomial density if the function pL.(n)={\textbrokenbar}L∩∑n{\textbrokenbar} of L is bounded by a polynomial. We show that the function pR(n) of a regular language R is O(nk), for some k≥0, if and only if R can be represented as a finite union of the regular expressions of the form xy1*z1 ...yt*zt with a nonnegative integer t≤k+1, where x,y1,z1,..., yt, zt are all strings in ∑*.",
isbn="978-3-540-47291-9",
doi={10.1007/3-540-55808-X_48},
}

@InProceedings{Hartmanis_1980,
author="Hartmanis, J.
and Mahaney, S. R.",
editor="Dembi{\'{n}}ski, P.",
title="An essay about research on sparse NP complete sets",
booktitle="Mathematical Foundations of Computer Science 1980",
year="1980",
publisher="Springer Berlin Heidelberg",
pages="40--57",
abstract="The purpose of this paper is to review the origins and motivation for the conjecture that sparse NP complete sets do not exist (unless P = NP) and to describe the development of the ideas and techniques which led to the recent solution of this conjecture.",
isbn="978-3-540-38194-5",
doi={10.1007/BFb0022494},
}

@InProceedings{Hoffmann_2021,
author="Hoffmann, Stefan",
editor="Bampis, Evripidis
and Pagourtzis, Aris",
title="Computational Complexity of Synchronization Under Sparse Regular Constraints",
booktitle="Fundamentals of Computation Theory",
year="2021",
publisher="Springer International Publishing",
address="Cham",
pages="272--286",
doi={10.1007/978-3-030-86593-1_19},
abstract="The constrained synchronization problem (CSP) asks for a synchronizing word of a given input automaton contained in a regular set of constraints. It could be viewed as a special case of synchronization of a discrete event system under supervisory control. Here, we study the computational complexity of this problem for the class of sparse regular constraint languages. We give a new characterization of sparse regular sets, which equal the bounded regular sets, and derive a full classification of the computational complexity of CSP for letter-bounded regular constraint languages, which properly contain the strictly bounded regular languages. Then, we introduce strongly self-synchronizing codes and investigate CSP for bounded languages induced by these codes. With our previous result, we deduce a full classification for these languages as well. In both cases, depending on the constraint language, our problem becomes {\$}{\$}{\backslash}textsf {\{}NP{\}}{\$}{\$}NP-complete or polynomial time solvable.",
isbn="978-3-030-86593-1"
}

@article{Cirac_2021_review,
  title = {Matrix product states and projected entangled pair states: Concepts, symmetries, theorems},
  author = {Cirac, J. Ignacio and P\'erez-Garc\'{\i}a, David and Schuch, Norbert and Verstraete, Frank},
  journal = {Rev. Mod. Phys.},
  volume = {93},
  issue = {4},
  pages = {045003},
  numpages = {65},
  year = {2021},
  month = {Dec},
  publisher = {American Physical Society},
  doi = {10.1103/RevModPhys.93.045003},
}

@article{Chen_2011_symm,
  title = {Classification of gapped symmetric phases in one-dimensional spin systems},
  author = {Chen, Xie and Gu, Zheng-Cheng and Wen, Xiao-Gang},
  journal = {Phys. Rev. B},
  volume = {83},
  issue = {3},
  pages = {035107},
  numpages = {19},
  year = {2011},
  month = {Jan},
  publisher = {American Physical Society},
  doi = {10.1103/PhysRevB.83.035107},
}

@article{Schuch_2011_symm,
  title = {Classifying quantum phases using matrix product states and projected entangled pair states},
  author = {Schuch, Norbert and P\'erez-Garc\'{\i}a, David and Cirac, Ignacio},
  journal = {Phys. Rev. B},
  volume = {84},
  issue = {16},
  pages = {165139},
  numpages = {21},
  year = {2011},
  month = {Oct},
  publisher = {American Physical Society},
  doi = {10.1103/PhysRevB.84.165139},
}

@Inbook{Giammarresi_1997,
author="Giammarresi, Dora
and Restivo, Antonio",
editor="Rozenberg, Grzegorz
and Salomaa, Arto",
title="Two-Dimensional Languages",
bookTitle="Handbook of Formal Languages: Volume 3 Beyond Words",
year="1997",
publisher="Springer Berlin Heidelberg",
pages="215--267",
abstract="The aim of this chapter is to generalize concepts and techniques of formal language theory to two dimensions. Informally, a two-dimensional string is called a picture and is defined as a rectangular array of symbols taken from a finite alphabet. A two-dimensional language (or picture language) is a set of pictures.",
isbn="978-3-642-59126-6",
doi="10.1007/978-3-642-59126-6_4",
}

@article{Anselmo_2006_UOTA,
	author = {{Anselmo, Marcella} and {Giammarresi, Dora} and {Madonia, Maria} and {Restivo, Antonio}},
	title = {Unambiguous recognizable two-dimensional languages},
	DOI= "10.1051/ita:2006008",
	journal = {RAIRO-Theor. Inf. Appl.},
	year = 2006,
	volume = 40,
	number = 2,
	pages = "277-293",
	month = "",
}

@article{Inoue_1977_DOTA_NOTA,
title = {Some properties of two-dimensional on-line tessellation acceptors},
journal = {Information Sciences},
volume = {13},
number = {2},
pages = {95-121},
year = {1977},
issn = {0020-0255},
doi = {10.1016/0020-0255(77)90023-8},
author = {Katsushi Inoue and Akira Nakamura},
keywords = {Two-dimensional automata, on-line acceptors, tessellation automata, two-dimensional tapes, formal languages},
abstract = {The purpose of this paper is to propose a new type acceptor called the “two-dimensional on-line tessellation acceptor” (denoted by “2-ota”) and to examine several properties of the 2-ota. The 2-ota might be considered as a real-time mode of rectangular array bounded cellular space, introduced by Smith and Beyer. First, several fundamental properties (e.g., closure properties) of the 2-ota are examined. Then, the accepting power of the 2-ota is compared with that of the two-dimensional finite automaton, introduced by Blum and Hewitt. The main results are 1.1. The class of sets accepted by nondeterministic 2-ota's contains properly that of sets accepted by nondeterministic two-dimensional finite automata;2.2. The class of sets accepted by deterministic 2-ota's is incomparable with that of sets accepted by deterministic (or nondeterministic) two-dimensional finite automata.}
}

@article{orus_tensor_2019,
	title = {Tensor networks for complex quantum systems},
	volume = {1},
	issn = {2522-5820},
	doi = {10.1038/s42254-019-0086-7},
	abstract = {Originally developed in the context of condensed-matter physics and based on renormalization group ideas, tensor networks have been revived thanks to quantum information theory and the progress in understanding the role of entanglement in quantum many-body systems. Moreover, tensor network states have turned out to play a key role in other scientific disciplines. In this context, here I provide an overview of the basic concepts and key developments in the field. I briefly discuss the most important tensor network structures and algorithms, together with an outline of advances related to global and gauge symmetries, fermions, topological order, classification of phases, entanglement Hamiltonians, holografic duality, artificial intelligence, the 2D Hubbard model, 2D quantum antiferromagnets, conformal field theory, quantum chemistry, disordered systems and many-body localization.},
	pages = {538--550},
	number = {9},
	journal = {Nature Reviews Physics},
	shortjournal = {Nature Reviews Physics},
	author = {Orús, Román},
	year = {2019},
}

@article{Perez-Garcia2007,
	title = {Matrix product state representations},
	volume = {7},
	doi = {10.26421/QIC7.5-6-1},
	pages = {401},
	number = {5},
	journal = {Quantum Information and Computation},
	shortjournal = {{QIC}},
	author = {Perez-Garcia, D. and Verstraete, F. and Wolf, M.M. and Cirac, J.I.},
	year = {2007},
}

@article{fannes_finitely_1992,
	title = {Finitely correlated states on quantum spin chains},
	volume = {144},
	issn = {1432-0916},
	doi = {10.1007/BF02099178},
	abstract = {We study a construction that yields a class of translation invariant states on quantum spin chains, characterized by the property that the correlations across any bond can be modeled on a finite-dimensional vector space. These states can be considered as generalized valence bond states, and they are dense in the set of all translation invariant states. We develop a complete theory of the ergodic decomposition of such states, including the decomposition into periodic “Néel ordered” states. The ergodic components have exponential decay of correlations. All states considered can be obtained as “local functions” of states of a special kind, so-called “purely generated states,” which are shown to be ground states for suitably chosen finite range {VBS} interactions. We show that all these generalized {VBS} models have a spectral gap. Our theory does not require symmetry of the state with respect to a local gauge group. In particular we illustrate our results with a one-parameter family of examples which are not isotropic except for one special case. This isotropic model coincides with the one-dimensional antiferromagnet, recently studied by Affleck, Kennedy, Lieb, and Tasaki.},
	pages = {443--490},
	number = {3},
	journal = {Communications in Mathematical Physics},
	shortjournal = {Communications in Mathematical Physics},
	author = {Fannes, M. and Nachtergaele, B. and Werner, R. F.},
	year = {1992},
	keywords = {notion},
}

@article{Schollwock_2011,
title = {The density-matrix renormalization group in the age of matrix product states},
journal = {Annals of Physics},
volume = {326},
number = {1},
pages = {96-192},
year = {2011},
issn = {0003-4916},
doi = {10.1016/j.aop.2010.09.012},
author = {Ulrich Schollwöck},
abstract = {The density-matrix renormalization group method (DMRG) has established itself over the last decade as the leading method for the simulation of the statics and dynamics of one-dimensional strongly correlated quantum lattice systems. In the further development of the method, the realization that DMRG operates on a highly interesting class of quantum states, so-called matrix product states (MPS), has allowed a much deeper understanding of the inner structure of the DMRG method, its further potential and its limitations. In this paper, I want to give a detailed exposition of current DMRG thinking in the MPS language in order to make the advisable implementation of the family of DMRG algorithms in exclusively MPS terms transparent. I then move on to discuss some directions of potentially fruitful further algorithmic development: while DMRG is a very mature method by now, I still see potential for further improvements, as exemplified by a number of recently introduced algorithms.}
}

@article{Dur_2000_Wstate,
  title = {Three qubits can be entangled in two inequivalent ways},
  author = {D\"ur, W. and Vidal, G. and Cirac, J. I.},
  journal = {Phys. Rev. A},
  volume = {62},
  issue = {6},
  pages = {062314},
  numpages = {12},
  year = {2000},
  month = {Nov},
  publisher = {American Physical Society},
  doi = {10.1103/PhysRevA.62.062314},
}

@article{Dicke_1954,
  title = {Coherence in Spontaneous Radiation Processes},
  author = {Dicke, R. H.},
  journal = {Phys. Rev.},
  volume = {93},
  issue = {1},
  pages = {99--110},
  numpages = {0},
  year = {1954},
  month = {Jan},
  publisher = {American Physical Society},
  doi = {10.1103/PhysRev.93.99},
}

@article{Haegeman_2012,
  title = {Variational matrix product ansatz for dispersion relations},
  author = {Haegeman, Jutho and Pirvu, Bogdan and Weir, David J. and Cirac, J. Ignacio and Osborne, Tobias J. and Verschelde, Henri and Verstraete, Frank},
  journal = {Phys. Rev. B},
  volume = {85},
  issue = {10},
  pages = {100408(R)},
  numpages = {5},
  year = {2012},
  month = {Mar},
  publisher = {American Physical Society},
  doi = {10.1103/PhysRevB.85.100408},
}

@misc{fernau_2021_finiteautomataintersectionnonemptiness,
      title={Finite Automata Intersection Non-Emptiness: Parameterized Complexity Revisited}, 
      author={Henning Fernau and Stefan Hoffmann and Michael Wehar},
      year={2021},
      eprint={2108.05244},
      archivePrefix={arXiv},
      primaryClass={cs.FL},
      url={https://arxiv.org/abs/2108.05244}, 
}

@misc{gottesman_1997_stabilizers,
      title={Stabilizer Codes and Quantum Error Correction}, 
      author={Daniel Gottesman},
      year={1997},
      eprint={quant-ph/9705052},
      archivePrefix={arXiv},
      primaryClass={quant-ph},
      url={https://arxiv.org/abs/quant-ph/9705052}, 
}

@article{hein_2004_graphstates,
  title = {Multiparty entanglement in graph states},
  author = {Hein, M. and Eisert, J. and Briegel, H. J.},
  journal = {Phys. Rev. A},
  volume = {69},
  issue = {6},
  pages = {062311},
  numpages = {20},
  year = {2004},
  month = {Jun},
  publisher = {American Physical Society},
  doi = {10.1103/PhysRevA.69.062311},
}

@article{weedbrook_2012_gaussian_qinfo,
  title = {Gaussian quantum information},
  author = {Weedbrook, Christian and Pirandola, Stefano and Garc\'{\i}a-Patr\'on, Ra\'ul and Cerf, Nicolas J. and Ralph, Timothy C. and Shapiro, Jeffrey H. and Lloyd, Seth},
  journal = {Rev. Mod. Phys.},
  volume = {84},
  issue = {2},
  pages = {621--669},
  numpages = {0},
  year = {2012},
  month = {May},
  publisher = {American Physical Society},
  doi = {10.1103/RevModPhys.84.621},
}

@article{shor_1995_qec,
  title = {Scheme for reducing decoherence in quantum computer memory},
  author = {Shor, Peter W.},
  journal = {Phys. Rev. A},
  volume = {52},
  issue = {4},
  pages = {R2493--R2496},
  numpages = {0},
  year = {1995},
  month = {Oct},
  publisher = {American Physical Society},
  doi = {10.1103/PhysRevA.52.R2493},
}

@article{raussendorf_2001,
  title = {A One-Way Quantum Computer},
  author = {Raussendorf, Robert and Briegel, Hans J.},
  journal = {Phys. Rev. Lett.},
  volume = {86},
  issue = {22},
  pages = {5188--5191},
  numpages = {0},
  year = {2001},
  month = {May},
  publisher = {American Physical Society},
  doi = {10.1103/PhysRevLett.86.5188},
}

@article{terhal_2015,
  title = {Quantum error correction for quantum memories},
  author = {Terhal, Barbara M.},
  journal = {Rev. Mod. Phys.},
  volume = {87},
  issue = {2},
  pages = {307--346},
  numpages = {40},
  year = {2015},
  month = {Apr},
  publisher = {American Physical Society},
  doi = {10.1103/RevModPhys.87.307},
}

@article{briegel_2009_measurement-based,
	title = {Measurement-based quantum computation},
	volume = {5},
	issn = {1745-2481},
	doi = {10.1038/nphys1157},
	abstract = {Quantum computation offers a promising new kind of information processing, where the non-classical features of quantum mechanics are harnessed and exploited. A number of models of quantum computation exist. These models have been shown to be formally equivalent, but their underlying elementary concepts and the requirements for their practical realization can differ significantly. A particularly exciting paradigm is that of measurement-based quantum computation, where the processing of quantum information takes place by rounds of simple measurements on qubits prepared in a highly entangled state. We review recent developments in measurement-based quantum computation with a view to both fundamental and practical issues, in particular the power of quantum computation, the protection against noise (fault tolerance) and steps towards experimental realization. Finally, we highlight a number of connections between this field and other branches of physics and mathematics.},
	pages = {19--26},
	number = {1},
	journal = {Nature Physics},
	shortjournal = {Nature Physics},
	author = {Briegel, H. J. and Browne, D. E. and Dür, W. and Raussendorf, R. and Van den Nest, M.},
	year = {2009},
}

@book{zeng_quantum_2019,
	location = {New York, {NY}},
	title = {Quantum Information Meets Quantum Matter: From Quantum Entanglement to Topological Phases of Many-Body Systems},
	rights = {http://www.springer.com/tdm},
	isbn = {978-1-4939-9082-5 978-1-4939-9084-9},
	series = {Quantum Science and Technology},
	shorttitle = {Quantum Information Meets Quantum Matter},
	publisher = {Springer},
	author = {Zeng, Bei and Chen, Xie and Zhou, Duan-Lu and Wen, Xiao-Gang},
	urldate = {2024-07-08},
	year = {2019},
	langid = {english},
	doi = {10.1007/978-1-4939-9084-9},
}

@article{ryu_2006,
  title = {Holographic Derivation of Entanglement Entropy from the anti--de Sitter Space/Conformal Field Theory Correspondence},
  author = {Ryu, Shinsei and Takayanagi, Tadashi},
  journal = {Phys. Rev. Lett.},
  volume = {96},
  issue = {18},
  pages = {181602},
  numpages = {4},
  year = {2006},
  month = {May},
  publisher = {American Physical Society},
  doi = {10.1103/PhysRevLett.96.181602},
}

@article{evenbly_tensor_2011,
	title = {Tensor Network States and Geometry},
	volume = {145},
	abstract = {Tensor network states are used to approximate ground states of local Hamiltonians on a lattice in D spatial dimensions. Different types of tensor network states can be seen to generate different geometries. Matrix product states ({MPS}) in D=1 dimensions, as well as projected entangled pair states ({PEPS}) in D{\textgreater}1 dimensions, reproduce the D-dimensional physical geometry of the lattice model; in contrast, the multi-scale entanglement renormalization ansatz ({MERA}) generates a (D+1)-dimensional holographic geometry. Here we focus on homogeneous tensor networks, where all the tensors in the network are copies of the same tensor, and argue that certain structural properties of the resulting many-body states are preconditioned by the geometry of the tensor network and are therefore largely independent of the choice of variational parameters. Indeed, the asymptotic decay of correlations in homogeneous {MPS} and {MERA} for D=1 systems is seen to be determined by the structure of geodesics in the physical and holographic geometries, respectively; whereas the asymptotic scaling of entanglement entropy is seen to always obey a simple boundary law—that is, again in the relevant geometry. This geometrical interpretation offers a simple and unifying framework to understand the structural properties of, and helps clarify the relation between, different tensor network states. In addition, it has recently motivated the branching {MERA}, a generalization of the {MERA} capable of reproducing violations of the entropic boundary law in D{\textgreater}1 dimensions.},
	pages = {891--918},
	number = {4},
	journal = {Journal of Statistical Physics},
	shortjournal = {Journal of Statistical Physics},
	author = {Evenbly, G. and Vidal, G.},
	year = {2011},
    doi={10.1007/s10955-011-0237-4},
}

@book{Aho_Ullman_1972_book,
author = {Aho, Alfred V. and Ullman, Jeffrey D.},
title = {The theory of parsing, translation, and compiling},
year = {1972},
isbn = {0139145567},
publisher = {Prentice-Hall, Inc.},
address = {USA},
abstract = {From volume 1 Preface (See Front Matter for full Preface)This book is intended for a one or two semester course in compiling theory at the senior or graduate level. It is a theoretically oriented treatment of a practical subject. Our motivation for making it so is threefold.(1) In an area as rapidly changing as Computer Science, sound pedagogy demands that courses emphasize ideas, rather than implementation details. It is our hope that the algorithms and concepts presented in this book will survive the next generation of computers and programming languages, and that at least some of them will be applicable to fields other than compiler writing.(2) Compiler writing has progressed to the point where many portions of a compiler can be isolated and subjected to design optimization. It is important that appropriate mathematical tools be available to the person attempting this optimization.(3) Some of the most useful and most efficient compiler algorithms, e.g. LR(k) parsing, require a good deal of mathematical background for full understanding. We expect, therefore, that a good theoretical background will become essential for the compiler designer.While we have not omitted difficult theorems that are relevant to compiling, we have tried to make the book as readable as possible. Numerous examples are given, each based on a small grammar, rather than on the large grammars encountered in practice. It is hoped that these examples are sufficient to illustrate the basic ideas, even in cases where the theoretical developments are difficult to follow in isolation.From volume 2 Preface (See Front Matter for full Preface)Compiler design is one of the first major areas of systems programming for which a strong theoretical foundation is becoming available. Volume I of The Theory of Parsing, Translation, and Compiling developed the relevant parts of mathematics and language theory for this foundation and developed the principal methods of fast syntactic analysis. Volume II is a continuation of Volume I, but except for Chapters 7 and 8 it is oriented towards the nonsyntactic aspects of compiler design.The treatment of the material in Volume II is much the same as in Volume I, although proofs have become a little more sketchy. We have tried to make the discussion as readable as possible by providing numerous examples, each illustrating one or two concepts.Since the text emphasizes concepts rather than language or machine details, a programming laboratory should accompany a course based on this book, so that a student can develop some facility in applying the concepts discussed to practical problems. The programming exercises appearing at the ends of sections can be used as recommended projects in such a laboratory. Part of the laboratory course should discuss the code to be generated for such programming language constructs as recursion, parameter passing, subroutine linkages, array references, loops, and so forth.}
}

@Inbook{Greenberger_1989,
author="Greenberger, Daniel M.
and Horne, Michael A.
and Zeilinger, Anton",
editor="Kafatos, Menas",
title="Going Beyond Bell's Theorem",
bookTitle="Bell's Theorem, Quantum Theory and Conceptions of the Universe",
year="1989",
publisher="Springer Netherlands",
address="Dordrecht",
pages="69--72",
abstract="Bell's Theorem proved that one cannot in general reproduce the results of quantum theory with a classical, deterministic local model. However, Einstein originally considered the case where one could define an ``element of reality'', namely for the much simpler case where one could predict with certainty a definite outcome for an experiment For this simple case, Bell's Theorem says nothing. But by using a slightly more complicated model than Bell, one can show that even in this simple case where one can make definite predictions, one still cannot generally introduce deterministic, local models to explain the results.",
isbn="978-94-017-0849-4",
doi="10.1007/978-94-017-0849-4_10",
}

@book{friedl_2006_regex,
  title={Mastering regular expressions},
  author={Friedl, Jeffrey EF},
  year={2006},
  publisher={O'Reilly Media, Inc.},
  isbn={978-0596528126},
}

@book{aho_compilers_2014,
	location = {Harlow},
	title = {Compilers: principles, techniques, and tools},
	isbn = {978-1-292-02434-9},
	series = {Always learning},
	shorttitle = {Compilers},
	pagetotal = {942},
	publisher = {Pearson},
	author = {Aho, Alfred V. and Lam, Monica S. and Sethi, Ravi and Ullman, Jeffrey D.},
	year = {2014},
}

@incollection{Aho_1990_patterns,
title = {Chapter 5 - Algorithms for Finding Patterns in Strings},
editor = {Jan {Van Leeuwen}},
booktitle = {Algorithms and Complexity},
publisher = {Elsevier},
address = {Amsterdam},
pages = {255-300},
year = {1990},
series = {Handbook of Theoretical Computer Science},
isbn = {978-0-444-88071-0},
doi = {10.1016/B978-0-444-88071-0.50010-2},
author = {Alfred V. Aho},
abstract = {Publisher Summary
This chapter discusses the algorithms for solving string-matching problems that have proven useful for text-editing and text-processing applications. String pattern matching is an important problem that occurs in many areas of science and information processing. In computing, it occurs naturally as part of data processing, text editing, term rewriting, lexical analysis, and information retrieval. Many text editors and programming languages have facilities for matching strings. In biology, string-matching problems arise in the analysis of nucleic acids and protein sequences, and in the investigation of molecular phylogeny. String matching is also one of the central and most widely studied problems in theoretical computer science. The simplest form of the problem is to locate an occurrence of a keyword as a substring in a sequence of characters, which is called the input string. For example, the input string queueing contains the keyword ueuei as a substring. Even for this problem, several innovative, theoretically interesting algorithms have been devised that run significantly faster than the obvious brute-force method.}
}

@Inbook{Yu_1997_regular_languages,
author="Yu, Sheng",
editor="Rozenberg, Grzegorz
and Salomaa, Arto",
title="Regular Languages",
bookTitle="Handbook of Formal Languages: Volume 1 Word, Language, Grammar",
year="1997",
publisher="Springer Berlin Heidelberg",
pages="41--110",
abstract="Regular languages and finite automata are among the oldest topics in formal language theory. The formal study of regular languages and finite automata can be traced back to the early forties, when finite state machines were used to model neuron nets by McCulloch and Pitts [83]. Since then, regular languages have been extensively studied. Results of early investigations are, for example, Kleene's theorem establishing the equivalence of regular expressions and finite automata [69], the introduction of automata with output by Mealy [86] and Moore [88], the introduction of nondeterministic finite automata by Rabin and Scott [99], and the characterization of regular languages by congruences of finite index by Myhill [90] and Nerode [91].",
isbn="978-3-642-59136-5",
doi="10.1007/978-3-642-59136-5_2",
}

@article{Dalessandro_2006_sparse_contextfree_languages,
title = {On the structure of the counting function of sparse context-free languages},
journal = {Theoretical Computer Science},
volume = {356},
number = {1},
pages = {104-117},
year = {2006},
issn = {0304-3975},
doi = {10.1016/j.tcs.2006.01.037},
author = {Flavio D’Alessandro and Benedetto Intrigila and Stefano Varricchio},
keywords = {Sparse language, Contex-free language, Counting function},
abstract = {We give an exact description of the counting function of a sparse context-free language. Let L be a sparse context-free language and let fL be its counting function. Then there exist polynomials p0,p1,…,pk-1, with rational coefficients, and an integer constant k0, such that for any n⩾k0 one has fL(n)=pj(n) where j is such that j≡nmodk. As a consequence one can easily show the decidability of some questions concerning sparse context-free languages. Finally, we show that for any sparse context-free language L there exists a regular language L′ such that for any n⩾0 one has fL(n)=fL′(n) and, therefore, fL is rational.}
}

@misc{brzozowski_2017_complexityRLs,
      title={Towards a Theory of Complexity of Regular Languages}, 
      author={Janusz A. Brzozowski},
      year={2017},
      eprint={1702.05024},
      archivePrefix={arXiv},
      primaryClass={cs.FL},
      url={https://arxiv.org/abs/1702.05024}, 
}

@ARTICLE{Michalek2018,
author={Michałek, Mateusz and Shitov, Yaroslav},
  journal={IEEE Transactions on Information Theory}, 
  title={Quantum Version of Wielandt’s Inequality Revisited}, 
  year={2019},
  volume={65},
  number={8},
  pages={5239-5242},
  keywords={Stationary state;Linear matrix inequalities;Two dimensional displays;Symmetric matrices;Matrix decomposition;Mathematics;algebra;linear algebra;matrices},
  doi={10.1109/TIT.2019.2897772}}

@misc{zhang_2020_stabilitytensornetworkscanonical,
      title={On Stability of Tensor Networks and Canonical Forms}, 
      author={Yifan Zhang and Edgar Solomonik},
      year={2020},
      eprint={2001.01191},
      archivePrefix={arXiv},
      primaryClass={math.NA},
      url={https://arxiv.org/abs/2001.01191}, 
}

@InProceedings{Kiefer_2011,
author="Kiefer, Stefan
and Murawski, Andrzej S.
and Ouaknine, Jo{\"e}l
and Wachter, Bj{\"o}rn
and Worrell, James",
editor="Gopalakrishnan, Ganesh
and Qadeer, Shaz",
title="Language Equivalence for Probabilistic Automata",
booktitle="Computer Aided Verification",
year="2011",
publisher="Springer Berlin Heidelberg",
pages="526--540",
abstract="In this paper, we propose a new randomised algorithm for deciding language equivalence for probabilistic automata. This algorithm is based on polynomial identity testing and thus returns an answer with an error probability that can be made arbitrarily small. We implemented our algorithm, as well as deterministic algorithms of Tzeng and Doyen et al., optimised for running time whilst adequately handling issues of numerical stability. We conducted extensive benchmarking experiments, including the verification of randomised anonymity protocols, the outcome of which establishes that the randomised algorithm significantly outperforms the deterministic ones in a majority of our test cases. Finally, we also provide fine-grained analytical bounds on the complexity of these algorithms, accounting for the differences in performance.",
isbn="978-3-642-22110-1",
doi={10.1007/978-3-642-22110-1_42}
}

@article{crosswhite_2008,
  title = {Finite automata for caching in matrix product algorithms},
  author = {Crosswhite, Gregory M. and Bacon, Dave},
  journal = {Phys. Rev. A},
  volume = {78},
  issue = {1},
  pages = {012356},
  numpages = {13},
  year = {2008},
  month = {Jul},
  publisher = {American Physical Society},
  doi = {10.1103/PhysRevA.78.012356},
}

@incollection{Hopcroft_1971,
title = {An n log n ALGORITHM FOR MINIMIZING STATES IN A FINITE AUTOMATON},
editor = {Zvi Kohavi and Azaria Paz},
booktitle = {Theory of Machines and Computations},
publisher = {Academic Press},
pages = {189-196},
year = {1971},
isbn = {978-0-12-417750-5},
doi = {10.1016/B978-0-12-417750-5.50022-1},
author = {John Hopcroft},
abstract = {Publisher Summary
Most basic texts on finite automata give algorithms for minimizing the number of states in a finite automaton. However, a worst case analysis of these algorithms indicates that they are n2 processes, where n is the number of states. For finite automata with large numbers of states, these algorithms are grossly inefficient. This chapter describes an algorithm for minimizing the states, in which the asymptotic running time in a worst-case analysis grows as n log n. The constant of proportionality depends linearly on the number of input symbols. The same algorithm can be used to determine if two finite automata are equivalent. Extensive use of list processing is employed to reduce the computation time. As the time needed to partition on a block is proportional to the transitions into the block, the total number of steps in the algorithm is bounded by n log n.}
}

@techreport{hopcroft_1971_linear_equiv,
    title = {A Linear Algorithm for Testing Equivalence of Finite Automata},
    url = {https://hdl.handle.net/1813/5958},
    author = {Hopcroft, John E. and Karp, R. M.},
    year = {1971},
    institution = {University of California},
    PAGES =  {3},
}

@article{Kraus_2010_LU,
  title = {Local Unitary Equivalence of Multipartite Pure States},
  author = {Kraus, B.},
  journal = {Phys. Rev. Lett.},
  volume = {104},
  issue = {2},
  pages = {020504},
  numpages = {4},
  year = {2010},
  month = {Jan},
  publisher = {American Physical Society},
  doi = {10.1103/PhysRevLett.104.020504},
}

@article{Sauerwein_2019_MPSslocc,
  title = {Matrix Product States: Entanglement, Symmetries, and State Transformations},
  author = {Sauerwein, David and Molnar, Andras and Cirac, J. Ignacio and Kraus, Barbara},
  journal = {Phys. Rev. Lett.},
  volume = {123},
  issue = {17},
  pages = {170504},
  numpages = {6},
  year = {2019},
  month = {Oct},
  publisher = {American Physical Society},
  doi = {10.1103/PhysRevLett.123.170504},
}

@ARTICLE{rabin_1959_powerset,
  author={Rabin, M. O. and Scott, D.},
  journal={IBM Journal of Research and Development}, 
  title={Finite Automata and Their Decision Problems}, 
  year={1959},
  volume={3},
  number={2},
  pages={114-125},
  keywords={},
  doi={10.1147/rd.32.0114}}

@article{Schutzenberger_1961,
title = {On the definition of a family of automata},
journal = {Information and Control},
volume = {4},
number = {2},
pages = {245-270},
year = {1961},
issn = {0019-9958},
doi = {10.1016/S0019-9958(61)80020-X},
author = {M.P. Schützenberger}
}

@InProceedings{Cortes_2006_equiv_cubic,
author="Cortes, Corinna
and Mohri, Mehryar
and Rastogi, Ashish",
editor="Ibarra, Oscar H.
and Yen, Hsu-Chun",
title="On the Computation of Some Standard Distances Between Probabilistic Automata",
booktitle="Implementation and Application of Automata",
year="2006",
publisher="Springer Berlin Heidelberg",
pages="137--149",
abstract="The problem of the computation of a distance between two probabilistic automata arises in a variety of statistical learning problems. This paper presents an exhaustive analysis of the problem of computing the Lpdistance between two automata. We give efficient exact and approximate algorithms for computing these distances for p even and prove the problem to be NP-hard for all odd values of p, thereby completing previously known hardness results. We also give an efficient algorithm for computing the Hellinger distance between unambiguous probabilistic automata. Our results include a general algorithm for the computation of the norm of an unambiguous probabilistic automaton based on a monoid morphism and efficient algorithms for the specific case of the computation of the Lpnorm. Finally, we also describe an efficient algorithm for testing the equivalence of two arbitrary probabilistic automata A1 and A2 based on Sch{\"u}tzenberger's standardization with a running time complexity of O(|$\Sigma$| (|A1| + |A2|)3), a significant improvement over the previously best algorithm reported for this problem.",
isbn="978-3-540-37214-1",
url={https://link.springer.com/chapter/10.1007/11812128_14}
}

@article{li_2022_WFAMPS,
	title = {Connecting weighted automata, tensor networks and recurrent neural networks through spectral learning},
	volume = {113},
	issn = {1573-0565},
	doi = {10.1007/s10994-022-06164-1},
	pages = {2619--2653},
	number = {5},
	journal = {Machine Learning},
	shortjournal = {Machine Learning},
	author = {Li, Tianyu and Precup, Doina and Rabusseau, Guillaume},
	year = {2024},
}

@inbook{Glasser_2019,
author = {Glasser, Ivan and Sweke, Ryan and Pancotti, Nicola and Eisert, Jens and Cirac, J. Ignacio},
title = {Expressive power of tensor-network factorizations for probabilistic modeling},
year = {2019},
publisher = {Curran Associates Inc.},
address = {Red Hook, NY, USA},
doi = {10.5555/3454287.3454421},
abstract = {Tensor-network techniques have recently proven useful in machine learning, both as a tool for the formulation of new learning algorithms and for enhancing the mathematical understanding of existing methods. Inspired by these developments, and the natural correspondence between tensor networks and probabilistic graphical models, we provide a rigorous analysis of the expressive power of various tensor-network factorizations of discrete multivariate probability distributions. These factorizations include non-negative tensor-trains/MPS, which are in correspondence with hidden Markov models, and Born machines, which are naturally related to the probabilistic interpretation of quantum circuits. When used to model probability distributions, they exhibit tractable likelihoods and admit efficient learning algorithms. Interestingly, we prove that there exist probability distributions for which there are unbounded separations between the resource requirements of some of these tensor-network factorizations. Of particular interest, using complex instead of real tensors can lead to an arbitrarily large reduction in the number of parameters of the network. Additionally, we introduce locally purified states (LPS), a new factorization inspired by techniques for the simulation of quantum systems, with provably better expressive power than all other representations considered. The ramifications of this result are explored through numerical experiments.},
booktitle = {Proceedings of the 33rd International Conference on Neural Information Processing Systems},
articleno = {134},
numpages = {13}
}

@article{gawrychowski_2010_sparse_efficiently,
author = {Gawrychowski, Pawe\L{} and Krieger, Dalia and Rampersad, Narad and Shallit, Jeffrey},
title = {Finding the growth rate of a regular or context-free language in polynomial time},
journal = {International Journal of Foundations of Computer Science},
volume = {21},
number = {04},
pages = {597-618},
year = {2010},
doi = {10.1142/S0129054110007441},
}

@article{Chomsky_1958_structure_factor,
title = {Finite state languages},
journal = {Information and Control},
volume = {1},
number = {2},
pages = {91-112},
year = {1958},
doi = {10.1016/S0019-9958(58)90082-2},
author = {Noam Chomsky and George A. Miller},
abstract = {A finite state language is a finite or infinite set of strings (sentences) of symbols (words) generated by a finite set of rules (the grammar), where each rule specifies the state of the system in which it can be applied, the symbol which is generated, and the state of the system after the rule is applied. A number of equivalent descriptions of finite state languages are explored. A simple structural characterization theorem for finite state languages is established, based on the cyclical structure of the grammar. It is shown that the complement of any finite state language formed on a given vocabulary of symbols is also a finite state language, and that the union of any two finite state languages formed on a given vocabulary is a finite state language; i.e., the set of all finite state languages that can be formed on a given vocabulary is a Boolean algebra. Procedures for calculating the number of grammatical strings of any given length are also described.}
}

@inproceedings{kannan_1995_counting_words,
author = {Kannan, Sampath and Sweedyk, Z. and Mahaney, Steve},
title = {Counting and random generation of strings in regular languages},
year = {1995},
isbn = {0898713498},
publisher = {Society for Industrial and Applied Mathematics},
address = {USA},
booktitle = {Proceedings of the Sixth Annual ACM-SIAM Symposium on Discrete Algorithms},
pages = {551–557},
numpages = {7},
location = {San Francisco, California, USA},
series = {SODA '95},
doi = {10.5555/313651.313803}
}

@misc{gopalakrishnan_2023_pushdownautomatasequentialgenerators,
doi = {10.1088/1751-8121/adab9f},
year = {2025},
month = {jan},
publisher = {IOP Publishing},
volume = {58},
number = {5},
pages = {055301},
author = {Gopalakrishnan, Sarang},
title = {Push-down automata as sequential generators of highly entangled states},
journal = {Journal of Physics A: Mathematical and Theoretical},
abstract = {We exploit the duality between quantum channels and sequentially generated states to construct families of highly entangled states that undergo phase transitions. These highly entangled states can be sequentially generated by collecting the emitted radiation from an open quantum system. In the dual perspective, the open system is regarded as a quantum-state-generating machine. A nontrivial class of such machines are quantum push-down automata, which can be used to create highly entangled states including the Motzkin state parametrically faster than adiabatic evolution. We construct generalizations of the Motzkin state that can be efficiently sequentially generated.} 
}

@article{li_2018_slocc2,
	title = {Stochastic local operations and classical communication ({SLOCC}) and local unitary operations ({LU}) classifications of n qubits via ranks and singular values of the spin-flipping matrices},
	volume = {17},
	issn = {1573-1332},
	doi = {10.1007/s11128-018-1900-3},
	abstract = {We construct \$\${\textbackslash}ell \$\$-spin-flipping matrices from the coefficient matrices of pure states of n qubits and show that the \$\${\textbackslash}ell \$\$-spin-flipping matrices are congruent and unitary congruent whenever two pure states of n qubits are {SLOCC} and {LU} equivalent, respectively. The congruence implies the invariance of ranks of the \$\${\textbackslash}ell \$\$-spin-flipping matrices under {SLOCC} and then permits a reduction of {SLOCC} classification of n qubits to calculation of ranks of the \$\${\textbackslash}ell \$\$-spin-flipping matrices. The unitary congruence implies the invariance of singular values of the \$\${\textbackslash}ell \$\$-spin-flipping matrices under {LU} and then permits a reduction of {LU} classification of n qubits to calculation of singular values of the \$\${\textbackslash}ell \$\$-spin-flipping matrices. Furthermore, we show that the invariance of singular values of the \$\${\textbackslash}ell \$\$-spin-flipping matrices \$\${\textbackslash}Omega \_\{1\}{\textasciicircum}\{(n)\}\$\$implies the invariance of the concurrence for even n qubits and the invariance of the n-tangle for odd n qubits. Thus, the concurrence and the n-tangle can be used for {LU} classification and computing the concurrence and the n-tangle only performs additions and multiplications of coefficients of states.},
	pages = {132},
	number = {6},
	journal = {Quantum Information Processing},
	author = {Li, Dafa},
	date = {2018-04-21},
}

@article{orus_2017,
	title = {Language Design as Information Renormalization},
	volume = {3},
	issn = {2661-8907},
	doi = {10.1007/s42979-021-01002-y},
	abstract = {Here we consider some well-known facts in syntax from a physics perspective, allowing us to establish analogies between both fields with many consequences. Mainly, we observe that the operation {MERGE}, put forward by Chomsky (in: Evolution and Revolution in Linguistic Theory, Essays in honor of Carlos Otero., eds. Hector Campos and Paula Kempchinsky, 1995), can be interpreted as a physical information coarse-graining. Thus, {MERGE} in linguistics entails information renormalization in physics, according to different time scales. We make this point mathematically formal in terms of statistical language models. In this setting, {MERGE} amounts to a probability tensor implementing a coarse-graining, akin to a probabilistic context-free grammar. The probability vectors of meaningful sentences are given by stochastic tensor networks ({TN}) built from diagonal tensors and which are mostly loop-free, such as Tree Tensor Networks and Matrix Product States, thus being computationally very efficient to manipulate. We show that this implies the polynomially-decaying (long-range) correlations experimentally observed in language, and also provides arguments in favour of certain types of neural networks for language processing. Moreover, we show how to obtain such language models from quantum states that can be efficiently prepared on a quantum computer, and use this to find bounds on the perplexity of the probability distribution of words in a sentence. Implications of our results are discussed across several ambits.},
	pages = {140},
	number = {2},
	journal = {{SN} Computer Science},
	author = {Gallego, {\'A}ngel J. and Or{\'u}s, Rom{\'a}n},
	date = {2022-01-21},
}

@misc{riechers_2025_identifiabilityminimalityboundsquantum,
      title={Identifiability and minimality bounds of quantum and post-quantum models of classical stochastic processes}, 
      author={Paul M. Riechers and Thomas J. Elliott},
      year={2025},
      eprint={2509.03004},
      archivePrefix={arXiv},
      primaryClass={quant-ph},
      url={https://arxiv.org/abs/2509.03004}, 
}

@book{Sachdev_2011_book-QPT, 
place={Cambridge}, edition={2}, title={Quantum Phase Transitions}, publisher={Cambridge University Press}, author={Sachdev, Subir}, year={2011},
doi={10.1017/CBO9780511973765}}

@InProceedings{Palioudakis_2012_tree-width,
author="Palioudakis, Alexandros
and Salomaa, Kai
and Akl, Selim G.",
editor="Kutrib, Martin
and Moreira, Nelma
and Reis, Rog{\'e}rio",
title="State Complexity and Limited Nondeterminism",
booktitle="Descriptional Complexity of Formal Systems",
year="2012",
publisher="Springer Berlin Heidelberg",
address="Berlin, Heidelberg",
pages="252--265",
abstract="We consider nondeterministic finite automata having finite tree width (ftw-NFA) where the computation on any input string has a constant number of branches. We give effective characterizations of ftw-NFAs and a tight worst-case state size bound for determinizing an ftw-NFA A as a function of the tree width and the number of states of A. We introduce a lower bound technique for ftw-NFAs and consider the operational state complexity of this model.",
isbn="978-3-642-31623-4",
doi={10.1007/978-3-642-31623-4_20},
}

@misc{baburin_2023_on-the-fly-determinization-NFA,
      title={An Analysis of On-the-fly Determinization of Finite-state Automata}, 
      author={Ivan Baburin and Ryan Cotterell},
      year={2023},
      eprint={2308.14077},
      archivePrefix={arXiv},
      primaryClass={cs.FL},
      url={https://arxiv.org/abs/2308.14077}, 
}

@article{okudono_weighted_2020,
	title = {Weighted Automata Extraction from Recurrent Neural Networks via Regression on State Spaces},
	volume = {34},
	rights = {https://www.aaai.org},
	issn = {2374-3468, 2159-5399},
	doi = {10.1609/aaai.v34i04.5977},
	abstract = {We present a method to extract a weighted finite automaton ({WFA}) from a recurrent neural network ({RNN}). Our method is based on the {WFA} learning algorithm by Balle and Mohri, which is in turn an extension of Angluin's classic L* algorithm. Our technical novelty is in the use of regression methods for the so-called equivalence queries, thus exploiting the internal state space of an {RNN} to prioritize counterexample candidates. This way we achieve a quantitative/weighted extension of the recent work by Weiss, Goldberg and Yahav that extracts {DFAs}. We experimentally evaluate the accuracy, expressivity and efficiency of the extracted {WFAs}.},
	pages = {5306--5314},
	number = {4},
	journaltitle = {Proceedings of the {AAAI} Conference on Artificial Intelligence},
	shortjournal = {{AAAI}},
	author = {Okudono, Takamasa and Waga, Masaki and Sekiyama, Taro and Hasuo, Ichiro},
	urldate = {2026-01-04},
	date = {2020-04-03},
	file = {Full Text:/Users/martafloridollinas/Zotero/storage/MND4VI5U/Okudono et al. - 2020 - Weighted Automata Extraction from Recurrent Neural Networks via Regression on State Spaces.pdf:application/pdf},
}

@inproceedings{weiss_learning_2019,
	title = {Learning Deterministic Weighted Automata with Queries and Counterexamples},
	volume = {32},
	doi={10.5555/3454287.3455055},
	booktitle = {Advances in Neural Information Processing Systems},
	publisher = {Curran Associates, Inc.},
	author = {Weiss, Gail and Goldberg, Yoav and Yahav, Eran},
	editor = {Wallach, H. and Larochelle, H. and Beygelzimer, A. and Alché-Buc, F. d' and Fox, E. and Garnett, R.},
	date = {2019},
}

@InProceedings{ayache_2019_wfa-explain-blackboxes,
  title = 	 {Explaining Black Boxes on Sequential Data using Weighted Automata},
  author =       {Ayache, St\'ephane and Eyraud, R\'emi and Goudian, No\'e},
  booktitle = 	 {Proceedings of The 14th International Conference on Grammatical Inference 2018},
  pages = 	 {81--103},
  year = 	 {2019},
  editor = 	 {Unold, Olgierd and Dyrka, Witold and Wieczorek, Wojciech},
  volume = 	 {93},
  series = 	 {Proceedings of Machine Learning Research},
  month = 	 {feb},
  publisher =    {PMLR},
  pdf = 	 {http://proceedings.mlr.press/v93/ayache19a/ayache19a.pdf},
  url = 	 {https://proceedings.mlr.press/v93/ayache19a.html},
  abstract = 	 {Understanding how a learned black box works is of crucial interest for the future of Machine Learning. In this paper, we pioneer the question of the global interpretability of learned black box models that assign numerical values to symbolic sequential data. To tackle that task, we propose a spectral algorithm for the extraction of weighted automata (WA) from such black boxes. This algorithm does not require the access to a dataset or to the inner representation of the black box: the inferred model can be obtained solely by querying the black box, feeding it with inputs and analyzing its outputs. Experiments using Recurrent Neural Networks (RNN) trained on a wide collection of 48 synthetic datasets and 2 real datasets show that the obtained approximation is of great quality.}
}

@article{Zhang_2025_2D-super-area-law-states,
  title = {Sequential generation of two-dimensional super-area-law states with local parent Hamiltonian},
  author = {Zhang, Wucheng},
  journal = {PRX Quantum},
  year = {2025},
  month = {Oct},
  publisher = {American Physical Society},
  doi = {10.1103/t7py-glgv},
}

@article{krohn_rhodes_1965,
	title = {Algebraic theory of machines. I. Prime decomposition theorem for finite semigroups and machines},
	volume = {116},
	issn = {0002-9947, 1088-6850},
	doi = {10.1090/S0002-9947-1965-0188316-1},
	pages = {450--464},
	number = {0},
	journaltitle = {Transactions of the American Mathematical Society},
	shortjournal = {Trans. Amer. Math. Soc.},
	author = {Krohn, Kenneth and Rhodes, John},
	urldate = {2026-01-04},
	date = {1965},
	langid = {english},
	file = {Full Text:/Users/martafloridollinas/Zotero/storage/NMEUYD75/Krohn and Rhodes - 1965 - Algebraic theory of machines. I. Prime decomposition theorem for finite semigroups and machines.pdf:application/pdf},
}

@inproceedings{
liu_2023_transformers-krohn-rhodes,
title={Transformers Learn Shortcuts to Automata},
author={Bingbin Liu and Jordan T. Ash and Surbhi Goel and Akshay Krishnamurthy and Cyril Zhang},
booktitle={The Eleventh International Conference on Learning Representations },
year={2023},
url={https://openreview.net/forum?id=De4FYqjFueZ}
}

@misc{florido-llinas_2025_MPS-X,
      title={Uniform matrix product states with a boundary}, 
      author={Marta Florido-Llinàs and Álvaro M. Alhambra and David Pérez-García and J. Ignacio Cirac},
      year={2025},
      eprint={2512.11968},
      archivePrefix={arXiv},
      primaryClass={quant-ph},
      url={https://arxiv.org/abs/2512.11968}, 
}

@misc{garre-rubio_2025_mpsstabilityintersectionproperty,
      title={MPS Stability and the Intersection Property}, 
      author={José Garre-Rubio and Alex Turzillo and András Molnár},
      year={2025},
      eprint={2501.17109},
      archivePrefix={arXiv},
      primaryClass={quant-ph},
      url={https://arxiv.org/abs/2501.17109}, 
}

@misc{gioia_2025_distincttypesparenthamiltonians,
      title={Distinct Types of Parent Hamiltonians for Quantum States: Insights from the $W$ State as a Quantum Many-Body Scar}, 
      author={Lei Gioia and Sanjay Moudgalya and Olexei I. Motrunich},
      year={2025},
      eprint={2510.24713},
      archivePrefix={arXiv},
      primaryClass={quant-ph},
      url={https://arxiv.org/abs/2510.24713}, 
}

@article{Pan_2012_GHZ,
  title = {Multiphoton entanglement and interferometry},
  author = {Pan, Jian-Wei and Chen, Zeng-Bing and Lu, Chao-Yang and Weinfurter, Harald and Zeilinger, Anton and \ifmmode \dot{Z}\else \.{Z}\fi{}ukowski, Marek},
  journal = {Rev. Mod. Phys.},
  volume = {84},
  issue = {2},
  pages = {777--838},
  numpages = {0},
  year = {2012},
  month = {May},
  publisher = {American Physical Society},
  doi = {10.1103/RevModPhys.84.777},  
}

@article{erhard_2020_advances-ghz,
	title = {Advances in high-dimensional quantum entanglement},
	volume = {2},
	issn = {2522-5820},
	doi = {10.1038/s42254-020-0193-5},
	abstract = {Since its discovery, quantum entanglement has challenged some of the best established views of the world: locality and reality. Quantum technologies promise to revolutionize computation, communication, metrology and imaging. Here we review conceptual and experimental advances in complex entangled systems involving many multilevel quantum particles. We provide an overview of the latest technological developments in the generation and manipulation of high-dimensionally entangled photonic systems encoded in various discrete degrees of freedom such as path, transverse spatial modes or time–frequency bins. This overview should help to transfer various physical principles for the generation and manipulation from one degree of freedom to another and thus inspire new technical developments. We also show how purely academic questions and curiosity led to new technological applications. Fundamental research provides the necessary knowledge for upcoming technologies, such as a prospective quantum internet or the quantum teleportation of all information stored in a quantum system. Finally, we discuss some important problems in the area of high-dimensional entanglement and give a brief outlook on possible future developments.},
	pages = {365--381},
	number = {7},
	journaltitle = {Nature Reviews Physics},
	shortjournal = {Nature Reviews Physics},
	author = {Erhard, Manuel and Krenn, Mario and Zeilinger, Anton},
	date = {2020-07-01},
	keywords = {notion},
}

@article{pezze_2018_qmetrology-ghz,
  title = {Quantum metrology with nonclassical states of atomic ensembles},
  author = {Pezz\`e, Luca and Smerzi, Augusto and Oberthaler, Markus K. and Schmied, Roman and Treutlein, Philipp},
  journal = {Rev. Mod. Phys.},
  volume = {90},
  issue = {3},
  pages = {035005},
  numpages = {70},
  year = {2018},
  month = {Sep},
  publisher = {American Physical Society},
  doi = {10.1103/RevModPhys.90.035005},
}

@misc{bellante_2025_compiling-RLS,
      title={Compiling Quantum Regular Language States}, 
      author={Armando Bellante and Reinis Irmejs and Marta Florido-Llinàs and María Cea Fernández and Marianna Crupi and Matthew Kiser and J. Ignacio Cirac},
      year={2026},
      eprint={2602.02698},
      archivePrefix={arXiv},
      primaryClass={quant-ph},
      url={https://arxiv.org/abs/2602.02698}, 
}

@book{Berstel_2010_noncommutative-rational-series, 
    place={Cambridge}, 
    series={Encyclopedia of Mathematics and its Applications}, 
    title={Noncommutative Rational Series with Applications}, 
    publisher={Cambridge University Press}, 
    author={Berstel, Jean and Reutenauer, Christophe}, 
    year={2010}, 
    collection={Encyclopedia of Mathematics and its Applications}
}

@misc{bhatia_2019_quantumfiniteautomatasurvey,
      title={Quantum finite automata: survey, status and research directions}, 
      author={Amandeep Singh Bhatia and Ajay Kumar},
      year={2019},
      eprint={1901.07992},
      archivePrefix={arXiv},
      primaryClass={cs.FL},
      url={https://arxiv.org/abs/1901.07992}, 
}

@misc{nayak_2022_augmentedindexquantumstreaming,
      title={Augmented Index and Quantum Streaming Algorithms for DYCK(2)}, 
      author={Ashwin Nayak and Dave Touchette},
      year={2022},
      eprint={1610.04937},
      archivePrefix={arXiv},
      primaryClass={quant-ph},
      url={https://arxiv.org/abs/1610.04937}, 
}

@misc{ambainis_2018_automataquantumcomputing,
      title={Automata and Quantum Computing}, 
      author={Andris Ambainis and Abuzer Yakaryılmaz},
      year={2018},
      eprint={1507.01988},
      archivePrefix={arXiv},
      primaryClass={cs.FL},
      url={https://arxiv.org/abs/1507.01988},
}

\end{document}